\providecommand{\algorithmname}{Algorithm}
\newtheorem{theorem}{Theorem}
\newtheorem{lemma}[theorem]{Lemma}
\newtheorem{definition}[theorem]{Definition}
\newtheorem{notation}[theorem]{Notation}
\theoremstyle{definition}
\colorlet{shadecolor}{gray!20}   
\colorlet{shadedcolor}{gray!50}
\colorlet{shadycolor}{gray!5}
\colorlet{shadestcolor}{gray!75}
\definecolor{webgreen}{rgb}{0, 0.5, 0} 
\definecolor{webblue}{rgb}{0, 0, 0.5}
\definecolor{webred}{rgb}{0.5, 0, 0}
\definecolor{Periwinkle}{RGB}{120,80,255}
\definecolor{LightViolet}{RGB}{190,180,250}
\definecolor{LighterOrange}{rgb}{0.5,0.25,0}
\colorlet{headbg}{CornflowerBlue!50}
\colorlet{headtitle}{SlateBlue!75!black}
\definecolor{wwqqzz}{rgb}{0.4,0.0,0.6} 
\definecolor{yqqqyq}{rgb}{0.5019607843137255,0.0,0.5019607843137255}
\definecolor{hexcolor0xffb6c1}{rgb}{1.000,0.714,0.757}
\definecolor{hexcolor0x000000}{rgb}{0.000,0.000,0.000}
\definecolor{hexcolor0x48d1cc}{rgb}{0.282,0.820,0.800}
\definecolor{hexcolor0x000000}{rgb}{0.000,0.000,0.000}
\definecolor{hexcolor0xdda0dd}{rgb}{0.867,0.627,0.867}
\definecolor{hexcolor0x000000}{rgb}{0.000,0.000,0.000}
\definecolor{hexcolor0xa588e1}{rgb}{0.647,0.533,0.882}
\definecolor{hexcolor0x000000}{rgb}{0.000,0.000,0.000}
\definecolor{hexcolor0x87abec}{rgb}{0.529,0.671,0.925}
\definecolor{hexcolor0x000000}{rgb}{0.000,0.000,0.000}
\definecolor{hexcolor0xdc7fe5}{rgb}{0.863,0.498,0.898}
\definecolor{hexcolor0x000000}{rgb}{0.000,0.000,0.000}
\definecolor{hexcolor0x52e594}{rgb}{0.322,0.898,0.580}
\definecolor{hexcolor0x000000}{rgb}{0.000,0.000,0.000}
\definecolor{hexcolor0xff00ff}{rgb}{1.000,0.000,1.000}
\definecolor{hexcolor0x000000}{rgb}{0.000,0.000,0.000}
\definecolor{hexcolor0xff5bad}{rgb}{1.000,0.357,0.678}
\definecolor{hexcolor0x000000}{rgb}{0.000,0.000,0.000}
\definecolor{hexcolor0xc9a0dc}{rgb}{0.788,0.627,0.863}
\definecolor{hexcolor0x000000}{rgb}{0.000,0.000,0.000}
\definecolor{hexcolor0xe396e0}{rgb}{0.890,0.588,0.878}
\definecolor{hexcolor0x000000}{rgb}{0.000,0.000,0.000}
\definecolor{hexcolor0xe693ff}{rgb}{0.902,0.576,1.000}
\definecolor{hexcolor0x000000}{rgb}{0.000,0.000,0.000}
\definecolor{hexcolor0xe093c7}{rgb}{0.878,0.576,0.780}
\definecolor{hexcolor0x000000}{rgb}{0.000,0.000,0.000}
\definecolor{hexcolor0xb59bb1}{rgb}{0.710,0.608,0.694}
\definecolor{hexcolor0x000000}{rgb}{0.000,0.000,0.000}
\definecolor{hexcolor0xffffff}{rgb}{1.000,1.000,1.000}
\definecolor{hexcolor0x000000}{rgb}{0.000,0.000,0.000}
\definecolor{hexcolor0x466de1}{rgb}{0.275,0.427,0.882}
\definecolor{hexcolor0x000000}{rgb}{0.000,0.000,0.000}
\definecolor{hexcolor0x3d94ff}{rgb}{0.239,0.580,1.000}
\definecolor{hexcolor0x000000}{rgb}{0.000,0.000,0.000}
\definecolor{hexcolor0x4f91d3}{rgb}{0.310,0.569,0.827}
\definecolor{mylapuslazuli}{RGB}{59,99,139}
\definecolor{hexcolor0x000000}{rgb}{0.000,0.000,0.000}
\definecolor{hexcolor0x1e96ff}{rgb}{0.118,0.588,1.000}
\definecolor{hexcolor0x000000}{rgb}{0.000,0.000,0.000}
\definecolor{hexcolor0x93ccea}{rgb}{0.576,0.800,0.918}
\definecolor{hexcolor0x000000}{rgb}{0.000,0.000,0.000}
\definecolor{hexcolor0x00bfff}{rgb}{0.000,0.749,1.000}
\definecolor{hexcolor0x000000}{rgb}{0.000,0.000,0.000}
\definecolor{hexcolor0xb2ffff}{rgb}{0.698,1.000,1.000}
\definecolor{hexcolor0x000000}{rgb}{0.000,0.000,0.000}
\definecolor{hexcolor0x51f7ff}{rgb}{0.318,0.969,1.000}
\definecolor{hexcolor0x000000}{rgb}{0.000,0.000,0.000}
\definecolor{hexcolor0x004bb0}{rgb}{0.000,0.294,0.690}
\definecolor{hexcolor0x000000}{rgb}{0.000,0.000,0.000}
\definecolor{hexcolor0x1a33ac}{rgb}{0.102,0.200,0.675}
\definecolor{hexcolor0x000000}{rgb}{0.000,0.000,0.000}
\definecolor{hexcolor0x6f00ff}{rgb}{0.435,0.000,1.000}
\definecolor{hexcolor0x000000}{rgb}{0.000,0.000,0.000}
\definecolor{hexcolor0x00cece}{rgb}{0.000,0.808,0.808}
\definecolor{hexcolor0x000000}{rgb}{0.000,0.000,0.000}
\definecolor{hexcolor0x7fffd4}{rgb}{0.498,1.000,0.831}
\definecolor{hexcolor0x000000}{rgb}{0.000,0.000,0.000}
\definecolor{hexcolor0x00fa9a}{rgb}{0.000,0.980,0.604}
\definecolor{hexcolor0x000000}{rgb}{0.000,0.000,0.000}
\definecolor{hexcolor0x00ffff}{rgb}{0.000,1.000,1.000}
\definecolor{hexcolor0x000000}{rgb}{0.000,0.000,0.000}
\definecolor{hexcolor0x00b7eb}{rgb}{0.000,0.718,0.922}
\definecolor{hexcolor0x000000}{rgb}{0.000,0.000,0.000}
\definecolor{hexcolor0x00a693}{rgb}{0.000,0.651,0.576}
\definecolor{hexcolor0x000000}{rgb}{0.000,0.000,0.000}
\definecolor{hexcolor0xaaf0d1}{rgb}{0.667,0.941,0.820}
\definecolor{hexcolor0x000000}{rgb}{0.000,0.000,0.000}
\definecolor{hexcolor0x2dbb94}{rgb}{0.176,0.733,0.580}
\definecolor{hexcolor0x000000}{rgb}{0.000,0.000,0.000}
\definecolor{hexcolor0xc6ff1a}{rgb}{0.776,1.000,0.102}
\definecolor{hexcolor0x79f200}{rgb}{0.475,0.949,0.000}
\definecolor{hexcolor0x000000}{rgb}{0.000,0.000,0.000}
\definecolor{hexcolor0xdfff00}{rgb}{0.875,1.000,0.000}
\definecolor{hexcolor0x000000}{rgb}{0.000,0.000,0.000}
\definecolor{hexcolor0x50c878}{rgb}{0.314,0.784,0.471}
\definecolor{hexcolor0x000000}{rgb}{0.000,0.000,0.000}
\definecolor{hexcolor0x008080}{RGB}{0,100,100} 
\definecolor{hexcolor0x000000}{rgb}{0.000,0.000,0.000}
\definecolor{hexcolor0x0f527e}{rgb}{0.059,0.322,0.494}
\definecolor{hexcolor0x000000}{rgb}{0.000,0.000,0.000}
\definecolor{hexcolor0x1034a6}{rgb}{0.063,0.204,0.651}
\definecolor{hexcolor0x000000}{rgb}{0.000,0.000,0.000}
\definecolor{hexcolor0x003153}{rgb}{0.000,0.192,0.325}
\definecolor{hexcolor0x000000}{rgb}{0.000,0.000,0.000}
\definecolor{hexcolor0x0014a8}{rgb}{0.000,0.078,0.659}
\definecolor{hexcolor0x000000}{rgb}{0.000,0.000,0.000}
\definecolor{hexcolor0x3f00ff}{rgb}{0.247,0.000,1.000}
\definecolor{hexcolor0x1560bd}{rgb}{0.082,0.376,0.741}
\definecolor{hexcolor0x000000}{rgb}{0.000,0.000,0.000}
\definecolor{hexcolor0xce240a}{rgb}{0.808,0.141,0.039}
\definecolor{hexcolor0x000000}{rgb}{0.000,0.000,0.000}
\definecolor{hexcolor0xfa8072}{rgb}{0.980,0.502,0.447}
\definecolor{hexcolor0x000000}{rgb}{0.000,0.000,0.000}
\definecolor{hexcolor0xf19cbb}{rgb}{0.945,0.612,0.733}
\definecolor{hexcolor0x000000}{rgb}{0.000,0.000,0.000}
\definecolor{hexcolor0xff355e}{rgb}{1.000,0.208,0.369}
\definecolor{hexcolor0xab274f}{rgb}{0.671,0.153,0.310}
\definecolor{hexcolor0x000000}{rgb}{0.000,0.000,0.000}
\definecolor{hexcolor0xca1f7b}{rgb}{0.792,0.122,0.482}
\definecolor{hexcolor0x000000}{rgb}{0.000,0.000,0.000}
\definecolor{hexcolor0xffa6c9}{rgb}{1.000,0.651,0.788}
\definecolor{hexcolor0x000000}{rgb}{0.000,0.000,0.000}
\definecolor{hexcolor0xfb607f}{rgb}{0.984,0.376,0.498}
\definecolor{hexcolor0x000000}{rgb}{0.000,0.000,0.000}
\definecolor{hexcolor0xfe28a2}{rgb}{0.996,0.157,0.635}
\definecolor{hexcolor0x000000}{rgb}{0.000,0.000,0.000}
\definecolor{hexcolor0xaa98a9}{rgb}{0.667,0.596,0.663}
\definecolor{hexcolor0x000000}{rgb}{0.000,0.000,0.000}
\definecolor{hexcolor0xbc8f8f}{rgb}{0.737,0.561,0.561}
\definecolor{hexcolor0x000000}{rgb}{0.000,0.000,0.000}
\definecolor{hexcolor0xb57281}{rgb}{0.710,0.447,0.506}
\definecolor{hexcolor0x000000}{rgb}{0.000,0.000,0.000}
\definecolor{hexcolor0x2a52be}{rgb}{0.165,0.322,0.745}
\definecolor{hexcolor0xffcc33}{rgb}{1.000,0.800,0.200}
\definecolor{hexcolor0xf8de7e}{rgb}{0.973,0.871,0.494}
\definecolor{hexcolor0x000000}{rgb}{0.000,0.000,0.000}
\definecolor{hexcolor0xffdead}{rgb}{1.000,0.871,0.678}
\definecolor{hexcolor0x000000}{rgb}{0.000,0.000,0.000}
\definecolor{hexcolor0xffe5b4}{rgb}{1.000,0.898,0.706}
\definecolor{hexcolor0x000000}{rgb}{0.000,0.000,0.000}
\definecolor{hexcolor0xf7e7ce}{rgb}{0.969,0.906,0.808}
\definecolor{hexcolor0x000000}{rgb}{0.000,0.000,0.000}
\definecolor{hexcolor0xff7f00}{rgb}{1.000,0.498,0.000}
\definecolor{hexcolor0x000000}{rgb}{0.000,0.000,0.000}
\definecolor{hexcolor0xffefd5}{rgb}{1.000,0.937,0.835}
\definecolor{hexcolor0xfdbcb4}{rgb}{0.992,0.737,0.706}
\definecolor{hexcolor0xff9966}{rgb}{1.000,0.600,0.400}
\definecolor{hexcolor0xf88379}{rgb}{0.973,0.514,0.475}
\definecolor{hexcolor0x000000}{rgb}{0.000,0.000,0.000}
\definecolor{hexcolor0xff6347}{rgb}{1.000,0.388,0.278}
\definecolor{hexcolor0x000000}{rgb}{0.000,0.000,0.000}
\definecolor{hexcolor0xfbceb1}{rgb}{0.984,0.808,0.694}
\definecolor{hexcolor0x000000}{rgb}{0.000,0.000,0.000}
\definecolor{hexcolor0xd1bea8}{rgb}{0.820,0.745,0.659}
\definecolor{hexcolor0x000000}{rgb}{0.000,0.000,0.000}
\definecolor{hexcolor0xd99058}{rgb}{0.851,0.565,0.345}
\definecolor{hexcolor0x000000}{rgb}{0.000,0.000,0.000}
\definecolor{hexcolor0xe9967a}{rgb}{0.914,0.588,0.478}
\definecolor{hexcolor0x000000}{rgb}{0.000,0.000,0.000}
\definecolor{hexcolor0x9f8b70}{rgb}{0.624,0.545,0.439}
\definecolor{hexcolor0x000000}{rgb}{0.000,0.000,0.000}
\definecolor{hexcolor0xfdee00}{rgb}{0.992,0.933,0.000}
\definecolor{hexcolor0x000000}{rgb}{0.000,0.000,0.000}
\definecolor{hexcolor0xf4a460}{rgb}{0.957,0.643,0.376}
\definecolor{hexcolor0x000000}{rgb}{0.000,0.000,0.000}
\definecolor{hexcolor0xa67b5b}{rgb}{0.651,0.482,0.357}
\definecolor{hexcolor0x000000}{rgb}{0.000,0.000,0.000}
\definecolor{hexcolor0x7b3f00}{rgb}{0.482,0.247,0.000}
\definecolor{hexcolor0xc19a6b}{rgb}{0.757,0.604,0.420}
\definecolor{hexcolor0xe1a95f}{rgb}{0.882,0.663,0.373}
\definecolor{hexcolor0x000000}{rgb}{0.000,0.000,0.000}
\definecolor{hexcolor0xedc9af}{rgb}{0.929,0.788,0.686}
\definecolor{hexcolor0x000000}{rgb}{0.000,0.000,0.000}
\definecolor{hexcolor0xe97451}{rgb}{0.914,0.455,0.318}
\definecolor{hexcolor0x000000}{rgb}{0.000,0.000,0.000}
\definecolor{hexcolor0xa45a52}{rgb}{0.643,0.353,0.322}
\definecolor{hexcolor0x000000}{rgb}{0.000,0.000,0.000}
\definecolor{hexcolor0x990000}{rgb}{0.600,0.000,0.000}
\definecolor{hexcolor0x000000}{rgb}{0.000,0.000,0.000}
\definecolor{hexcolor0xd2b48c}{rgb}{0.824,0.706,0.549}
\definecolor{hexcolor0x000000}{rgb}{0.000,0.000,0.000}
\definecolor{hexcolor0xc19a6b}{rgb}{0.757,0.604,0.420}
\definecolor{hexcolor0xedc9af}{rgb}{0.929,0.788,0.686}
\definecolor{hexcolor0x000000}{rgb}{0.000,0.000,0.000}
\definecolor{hexcolor0xf5deb3}{rgb}{0.961,0.871,0.702}
\definecolor{hexcolor0x000000}{rgb}{0.000,0.000,0.000}
\definecolor{hexcolor0xf5f5dc}{rgb}{0.961,0.961,0.863}
\definecolor{hexcolor0x000000}{rgb}{0.000,0.000,0.000}
\definecolor{hexcolor0xffffcc}{rgb}{1.000,1.000,0.800}
\definecolor{hexcolor0x000000}{rgb}{0.000,0.000,0.000}
\definecolor{hexcolor0xff7f50}{rgb}{1.000,0.498,0.314}
\definecolor{hexcolor0x000000}{rgb}{0.000,0.000,0.000}
\definecolor{hexcolor0xfbec5d}{rgb}{0.984,0.925,0.365}
\definecolor{hexcolor0xfff8e7}{rgb}{1.000,0.973,0.906}
\definecolor{hexcolor0x71a6d2}{rgb}{0.443,0.651,0.824}
\definecolor{hexcolor0xc0c0c0}{rgb}{0.753,0.753,0.753}
\definecolor{hexcolor0x000000}{rgb}{0.000,0.000,0.000}
\definecolor{hexcolor0xc4aead}{rgb}{0.769,0.682,0.678}
\definecolor{hexcolor0xe5e4e2}{rgb}{0.898,0.894,0.886}
\definecolor{hexcolor0xb2beb5}{rgb}{0.698,0.745,0.710}
\definecolor{hexcolor0x8c92ac}{rgb}{0.549,0.573,0.675}
\definecolor{hexcolor0x000000}{rgb}{0.000,0.000,0.000}
\definecolor{hexcolor0x4682b4}{rgb}{0.275,0.510,0.706}
\definecolor{hexcolor0x000000}{rgb}{0.000,0.000,0.000}
\definecolor{hexcolor0x848282}{rgb}{0.518,0.510,0.510}
\definecolor{hexcolor0x000000}{rgb}{0.000,0.000,0.000}
\definecolor{hexcolor0x4B0082}{RGB}{75, 0, 130} 
\definecolor{hexcolor0x66023C}{RGB}{102, 2, 60} 
\definecolor{hexcolor0xE0B0FF}{RGB}{224, 176, 255}
\definecolor{hexcolor0x5A004A}{RGB}{90,0,74}
\definecolor{hexcolor0x9389b3}{RGB}{147, 137, 179}
\definecolor{hexcolor0x70639b}{RGB}{112, 99, 155}
\definecolor{hexcolor0xbeb9d1}{RGB}{190, 185, 209}
\definecolor{hexcolor0x36454F}{RGB}{54, 69, 79}
\definecolor{hexcolor0x343434}{RGB}{52, 52, 52}
\definecolor{myrobinegg}{RGB}{42,160,160}
\definecolor{lightmayanblue}{RGB}{54,205,255}
\definecolor{myviolet}{RGB}{129,88,211}
\definecolor{mypurple}{RGB}{102,2,60}
\DeclarePairedDelimiter\abs{\lvert}{\rvert}%
\DeclarePairedDelimiter\norm{\lVert}{\rVert}%
\let\oldabs\abs
\def\abs{\@ifstar{\oldabs}{\oldabs*}}
\let\oldnorm\norm
\def\norm{\@ifstar{\oldnorm}{\oldnorm*}}
\setlist{nolistsep}
\algrenewcommand{\algorithmiccomment}[1]{$\varcopyright\triangleright$ #1}
\algnewcommand{\StructComment}[1]{$\triangleright$ \{#1\}}
\algnewcommand{\LineComment}[1]{\State \(\varcopyright\triangleright\) #1}
\renewcommand{\ALG@beginalgorithmic}{\sffamily}
\def\therule{\makebox[\algorithmicindent][l]{\hspace*{.5em}\vrule height .75\baselineskip depth .25\baselineskip}}%
\newtoks\therules
\def\appendto#1#2{\expandafter#1\expandafter{\the#1#2}}
\def\gobblefirst#1{
  #1\expandafter\expandafter\expandafter{\expandafter\@gobble\the#1}}%
\def\LState{\State\unskip\the\therules}
\def\pushindent{\appendto\therules\therule}%
\def\popindent{\gobblefirst\therules}%
\def\printindent{\unskip\the\therules}%
\def\printandpush{\printindent\pushindent}%
\def\popandprint{\popindent\printindent}%
\algrenewcommand\algorithmicindent{1.0em}
\algnewcommand\algorithmicprint{\textbf{print }}
\algnewcommand{\Print}[1]{\algorithmicprint #1}
\algnewcommand\algorithmicand{\textbf{and}}
\algnewcommand{\AND}{\algorithmicand}
\algnewcommand\algorithmicswitch{\textbf{switch}}
\algnewcommand\algorithmiccase{\textbf{case}}
\algnewcommand\algorithmicassert{\texttt{assert}}
\algnewcommand\Assert[1]{\State \algorithmicassert(#1)}%
\algrenewcommand\Return{\State \algorithmicreturn{} }%
\newcommand*{\tsf}[1]{
  \textsf{#1}
}
\newcommand*{\tbf}[1]{
  \textbf{#1}
}
\newcommand{\LAND}{\textbf{ \& }}
\newcommand*{\LargerCdot}{\raisebox{-0.25ex}{\scalebox{1.2}{$\cdot$}}}
\newcommand*{\defeq}{\stackrel{\text{def}}{=}}
\newcommand\xxrsquigarrow[1]{%
\mathrel{%
\begin{tikzpicture}[baseline= {( $ (current bounding box.south) + (0,-0.5ex) $ )},decoration={zigzag,amplitude=0.7pt,segment length=1.2mm,post=lineto,
    post   length=4pt}]
  \node[inner sep=.5ex] (a) {$\scriptstyle #1$};
  \path[draw,->,decorate] 
    (a.south west) -- (a.south east);
\end{tikzpicture}}%
}
\begin{document}
\title{Optimal Morse functions and $H(\mathcal{M}^2,\mathbb{A})$ in $\tilde{O}(N)$ time}
\author[1]{Abhishek Rathore}   
\affil[1]{Visualization \& Graphics Lab., CSA Dept., Indian Institute of Science,  Bangalore, India.} 
\date{\vspace*{-2em}} 
\maketitle

\begin{abstract}
In this work, we design a nearly linear time discrete Morse theory based algorithm for computing homology groups of 2-manifolds, thereby establishing the fact that computing homology groups of 2-manifolds is remarkably easy. Unlike previous algorithms of similar flavor, our method works with coefficients from arbitrary abelian groups. Another advantage of our method lies in the fact that our algorithm actually elucidates the topological reason that makes computation on 2-manifolds easy. This is made possible owing to a new simple homotopy based construct that is referred to as \emph{expansion frames}. To being with we obtain an optimal discrete gradient vector field using expansion frames. This is followed by a pseudo-linear time dynamic programming based computation of discrete Morse boundary operator. The efficient design of optimal gradient vector field followed by fast computation of boundary operator affords us near linearity in computation of homology groups. \\
Moreover, we define a new criterion for nearly optimal Morse functions called pseudo-optimality. A Morse function is pseudo-optimal if we can obtain an optimal Morse function from it, simply by means of critical cell cancellations. Using expansion frames, we establish the surprising fact that an arbitrary discrete Morse function on 2-manifolds is pseudo-optimal. 
\end{abstract} 

Classical Morse Theory \cite{Mat02,Mi62} analyzes the topology of
the Riemannian manifolds by studying critical points of smooth functions
defined on it. In the 90's Robin Forman formulated a completely combinatorial
analogue of Morse theory, now known as discrete Morse theory. The
fact that Forman's theory can be formulated in language of graph theory
makes it possible to use powerful machinery from modern algorithmics
to provide efficient algorithms with rigorous guarantees. It is worth
noting that the reader can understand this work without any prior
knowledge of Morse theory as long as he understands the equivalent
graph theory problem. Knowledge of discrete Morse theory is however
useful for the more inclined reader who wishes to understand the context
and wider range of applicability of this work. In \autoref{sub:Graph-Theoretic-Reformulation},
we provide a quick overview of the graph theory setting of discrete
Morse theory in order to enable the reader to make a quick foray into
the core computer science problem at hand.

\section{Background and Preliminaries}

\subsection{Discrete Morse theory\label{sub:Discrete-Morse-Theory}}

Forman provides an extremely readable introduction to discrete Morse
theory in \cite{Fo02a}.

\begin{notation} The relation \textbf{'$\prec$'} is used to denote the following: $\tau \prec \sigma  \rightarrow  \tau \subset \sigma \,\, \& \, \dim \, \tau =  \dim\sigma - 1  $. 
\end{notation}

\begin{notation}[The d-(d-1) level of Hasse graph] By the term, d-(d-1) level of Hasse graph $\mathcal{H}$ we mean the subset of edges of the Hasse graph that join d-dimensional cofaces  to (d-1)-dimensional faces  of Hasse graph. 
\end{notation}

\begin{definition} \textbf{Boundary \& Couboundary of a simplex} $\sigma$:  We define the boundary and respectively coboundary of a simplex as $bd \: \sigma \,= \: \{ \tau \, | \, \tau \prec \sigma \}$ $cbd \: \sigma = \{ \rho \, | \, \sigma \prec \rho \}$ \end{definition}   

\begin{definition}\textbf{ Discrete Morse Function:} Let $\mathcal{K}$ denote a finite regular cell complex and let $\mathcal{L}$ denote the set of cells of $\mathcal{K}$. A function  $\mathcal{F}:\mathcal{L} \rightarrow\mathbb{R}$ is called a \textbf{discrete Morse function} (DMF) if it usually assigns higher values to higher dimensional cells, with at most one exception locally at each cell. Equivalently, a function $\mathcal{F}:\mathcal{L} \rightarrow\mathbb{R}$ is a discrete Morse function if for every $\sigma^{m} \in \mathcal{L}  $ we have:\\ 
(\textsf{A}.)\; $\mathcal{N}_1(\sigma)= \# \{\rho \in cbd \: \sigma | \mathcal{F}(\rho) \leq \mathcal{F}(\sigma) \} \leq 1 $\\ 
(\textsf{B}.)\; $\mathcal{N}_2(\sigma)= \# \{\tau \in bd \: \sigma \:\: | \mathcal{F}(\tau) \geq \mathcal{F}(\sigma) \} \leq 1 $\\
A cell $\sigma$ is \textbf{critical} if $\mathcal{N}_1(\sigma) = \mathcal{N}_2(\sigma)= 0$; A non-critical cell is a \textbf{regular cell}.  
\end{definition} 

\begin{definition} [Combinatorial Vector Field]   A \textbf{combinatorial vector field} (DVF) $\mathcal{V}$ on $\mathcal{L}$ is a collection of pairs of cells $\{\langle\alpha,\beta\rangle\}$ such that $\{\alpha^{m}\prec\beta^{(m+1)}\}$ and each cell occurs in at most one such pair of $\mathcal{V}$. 
\end{definition}  

 \begin{definition}[Discrete Gradient Vector Field]A pair of cells $\{\alpha^{m}\prec\beta^{(m+1)}\}$ s.t. $\mathcal{F}(\alpha) \geq \mathcal{F}(\beta)$ determines a \textbf{gradient pair}. A \textbf{discrete gradient vector field} (DGVF) $\mathcal{V}$ corresponding to a \textsf{DMF} $\mathcal{F}$ is a collection of cell pairs ${\alpha^{(p)}\prec\beta^{(p+1)}}$ such that ${\alpha^{(p)}\prec\beta^{(p+1)}}\in \mathcal{V}$ iff $\mathcal{F}(\beta)\leq \mathcal{F}(\alpha)$.  
\end{definition}

\begin{definition}
We define $\mathcal{V}$-path to be a cell sequence $\sigma_{0}^{(m)}$, $\tau_{0}^{(m+1)}$, $\sigma_{1}^{(m)}$, $\tau_{1}^{(m+1)}$, $\ldots\sigma_{q}^{(m)}$, $\tau_{q}^{(m+1)}$, $\sigma_{q+1}^{(m)}$ s.t. for $i=0,\ldots q,\{\sigma_{i}\prec\tau_{i}\}\in \mathcal{V}$, $\sigma_{i}\prec\tau_{i}\succ\sigma_{i+1}$ and $\sigma_{i}\neq\sigma_{i+1}$. The $\mathcal{V}$-path corresponding to a DMF $\mathcal{F}$ is a \textbf{gradient path} of $\mathcal{F}$. 
\end{definition}

\begin{theorem}[Forman~\cite{Fo98a}] \label{thm:weakmorse} Let $\mathcal{K}$ be a CW Complex with a \textsf{DMF} $\mathcal{F}$ defined on it. Then $\mathcal{K}$ is homotopy equivalent to a CW complex $\Omega$, such that $\Omega$ has precisely one m-dimensional cell for every m-dimensional critical cell in $\mathcal{K}$ and no other cells besides these. Moreover, let $c_{m}$ be the number of m-dimensional critical cells, $\beta_{m}$ the $m^{th}$ Betti Number w.r.t. some vector field $\mathcal{V}$ and $n$ the maximum dimension of $\mathcal{K}$. Then we have:\\
\textbf{The Weak Morse Inequalities:}
 \begin{align} 
(\textsf{A}.) & \;\textrm{For every }m \in  \{0\ldots n \}\textrm{: we have }c_{m}\geq \beta_{m}.\\
(\textsf{B}.) & \;c_{0}- c_{1} \ldots + (-1)^{n}c_{n} = \beta_{0} - \beta_{1} \ldots + (-1)^{n}\beta_{n} = \chi(\mathcal{K})   \label{eq:weakmorse}
\end{align}
\textbf{The Strong Morse Inequalities:}  
\begin{equation}
\textrm{For every }m \in  [0,n]: c_{m}- c_{m-1}\ldots + (-1)^{m}c_{0} \geq \beta_{m} - \beta_{m-1}\ldots + (-1)^{m}\beta_{0}  
\end{equation}
\end{theorem}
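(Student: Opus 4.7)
The plan is to follow Forman's original sweep argument. Define the sublevel complexes $\mathcal{K}_t = \{\sigma \in \mathcal{L} : \mathcal{F}(\sigma) \leq t\}$ and track how the homotopy type evolves as $t$ increases. Two level-passing lemmas are needed. First, if $(a,b]$ contains no critical value, then $\mathcal{K}_b$ deformation retracts onto $\mathcal{K}_a$: every non-critical cell entering in this interval belongs to a unique gradient pair $\{\sigma \prec \tau\}$, and processing these pairs in an order compatible with $\mathcal{F}$ realizes a sequence of elementary free collapses. Second, if $(a,b]$ contains exactly one critical value corresponding to a critical $m$-cell $\sigma$, then $\mathcal{K}_b$ is homotopy equivalent to $\mathcal{K}_a$ with an $m$-cell attached along its boundary; this reduces to the first case after excising $\sigma$ from $\mathcal{K}_b$. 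Iterating these two operations over critical values in increasing order produces a CW complex $\Omega$ with exactly $c_m$ cells of dimension $m$ and establishes $\mathcal{K} \simeq \Omega$.

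For the weak Morse inequalities, part (A) is immediate: $\beta_m = \operatorname{rank} H_m(\mathcal{K}) = \operatorname{rank} H_m(\Omega)$ is bounded by the rank $c_m$ of the free cellular chain group $C_m(\Omega)$, of which $H_m(\Omega)$ is a subquotient. Part (B) is the Euler--Poincar\'e identity applied to $\Omega$: the alternating sum of cell counts of any finite CW complex equals its Euler characteristic, and in turn equals the alternating sum of Betti numbers.

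For the strong Morse inequalities, I would invoke the purely algebraic fact that for any finite chain complex of finitely generated free abelian groups with $\operatorname{rank} C_m = c_m$ and $\operatorname{rank} H_m = \beta_m$, one has $\sum_{i=0}^m (-1)^{m-i} c_i \geq \sum_{i=0}^m (-1)^{m-i} \beta_i$. Decomposing $c_m = \operatorname{rank}(\ker \partial_m) + \operatorname{rank}(\operatorname{im}\partial_m)$ and using $\beta_m = \operatorname{rank}(\ker \partial_m) - \operatorname{rank}(\operatorname{im}\partial_{m+1})$, the telescoping alternating sum collapses to $\operatorname{rank}(\operatorname{im}\partial_{m+1}) \geq 0$, which is the claim. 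Applying this lemma to the cellular chain complex of $\Omega$ completes the proof.

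The main obstacle is the first level-passing lemma: one must schedule the elementary free collapses so that at each stage $\sigma$ is genuinely a free face of $\tau$ in the current complex. This rests on the acyclicity of the gradient vector field associated to a DMF, namely that no nontrivial closed $\mathcal{V}$-path exists, so the gradient pairs carry a partial order in which any linear extension is a valid collapse schedule. Establishing acyclicity as a preliminary lemma (using the fact that $\mathcal{F}$ strictly decreases along the up-step of a gradient pair and strictly increases along the down-step, ruling out closed paths) is the small technical lynchpin on which the entire argument rests.
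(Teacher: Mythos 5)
This theorem is imported verbatim from Forman \cite{Fo98a}: the paper states it without proof, so there is no internal argument to compare against. Your proposal reconstructs Forman's original proof, and its architecture is correct: the two level-passing lemmas (collapse across intervals with no critical value, attachment of an $m$-cell at a critical value), the subquotient bound $\beta_m \le \operatorname{rank} C_m(\Omega) = c_m$ for the weak inequalities, Euler--Poincar\'e for the alternating-sum identity, and the telescoping rank computation for the strong inequalities are exactly the standard route. The algebra in the last step checks out: writing $c_i = \operatorname{rank}(\ker\partial_i) + \operatorname{rank}(\operatorname{im}\partial_i)$ and $\beta_i = \operatorname{rank}(\ker\partial_i) - \operatorname{rank}(\operatorname{im}\partial_{i+1})$, the alternating sum of $c_i - \beta_i$ collapses to $\operatorname{rank}(\operatorname{im}\partial_{m+1}) \ge 0$, using $\operatorname{im}\partial_0 = 0$.

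Two local repairs are needed. First, your acyclicity lemma is misstated: along a $\mathcal{V}$-path $\sigma_i \prec \tau_i \succ \sigma_{i+1}$ one has $\mathcal{F}(\tau_i) \le \mathcal{F}(\sigma_i)$ (weak decrease on the up-step, since $\langle\sigma_i,\tau_i\rangle$ is a gradient pair) and $\mathcal{F}(\sigma_{i+1}) < \mathcal{F}(\tau_i)$ (strict decrease on the down-step, because $\sigma_i$ already exhausts the single exception permitted by condition (\textsf{B}) of the DMF definition at $\tau_i$). You wrote that $\mathcal{F}$ strictly increases along the down-step; as literally stated this would not rule out closed paths at all --- it is the strict decrease across each full up-down step that forces $\mathcal{F}(\sigma_0) > \mathcal{F}(\sigma_{q+1})$ and hence precludes nontrivial closed $\mathcal{V}$-paths, which is what your collapse-scheduling argument rests on. Second, the set $\{\sigma : \mathcal{F}(\sigma) \le t\}$ need not be a subcomplex, precisely because of the allowed local exceptions; Forman defines the sublevel complex as the union of all cells of value at most $t$ together with all of their faces, and both level-passing lemmas should be proved for that closure. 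Neither issue damages the structure of the argument, but both must be fixed for the first lemma to go through as claimed.
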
 

\begin{notation} We shall denote the sum of Betti numbers by the symbol $\varLambda$ and sum of number of critical cells by symbol $\varUpsilon$. In other words, \[\varLambda\defeq\sum\limits_{i=0}^{n}\beta_i \quad \quad \varUpsilon\defeq\sum\limits_{i=0}^{n} c_i\]. 
\end{notation}

\begin{notation}The symbol $\tilde{O}(n)$ is used to indicate \emph{nearly linear}. It is given by $\tilde{O}(n)=n\left(\log n\right)^{O(1)}$.
\end{notation}

\begin{shnote} Given a DGVF, we can use topological sort to obtain a total order on the cells and then assign (arbitrary) ascending function values to the sorted list of cells. This will give us a Morse function that agrees with the partial order imposed by the gradient vector field. Any such Morse function will have the same critical cells as the gradient vector field. Hence, we shall use the terms \emph{optimal Morse function} and \emph{optimal gradient vector field} interchangeably.  
\end{shnote}

\begin{definition}[WMOC]\label{def:WMOC} Let $\varUpsilon(\mathcal{M})$ denote the sum of Morse numbers across all dimensions for the \textbf{optimal} DGVF on $\mathcal{M}$. We say that a family of simplicial complexes $\varOmega$ satisfies the weak Morse optimality condition (\textbf{WMOC}) when $\forall\mathcal{M}\in\varOmega$, $\varUpsilon(\mathcal{M}) = \tilde{O}(1)$. In other words, $\varUpsilon(\mathcal{M})\ll |\mathcal{M}|$ uniformly $\forall \mathcal{M} \in \varOmega$. 
\end{definition}

\subsection{Graph Theoretic Reformulation\label{sub:Graph-Theoretic-Reformulation}}

Given a simplicial complex $\mathcal{K}$, we construct its Hasse Graph representation $\mathcal{H_{\mathcal{K}}}$ (an undirected, multipartite graph) as follows: To every simplex $\sigma_{\mathcal{K}}^{d}\in\mathcal{K}$ associate a vertex $\sigma_{\mathcal{H}}^{d}\in\mathcal{H_{\mathcal{K}}}$. The dimension d of the simplex $\sigma_{\mathcal{K}}^{d}$ determines the \emph{vertex level} of the vertex $\sigma_{\mathcal{H}}^{d}$ in $\mathcal{H_{\mathcal{K}}}$. Every face incidence $(\tau_{\mathcal{K}}^{d-1},\sigma_{\mathcal{K}}^{d})$ determines an undirected edge $\langle\tau_{\mathcal{H}}^{d-1},\sigma_{\mathcal{H}}^{d}\rangle$ in $\mathcal{H_{\mathcal{K}}}$. Now orient the graph $\mathcal{H_{\mathcal{K}}}$ to a form a new directed graph $\mathcal{\overline{H_{\mathcal{K}}}}$. Initally all edges of $\mathcal{\overline{H}_{\mathcal{K}}}$ have default orientation. The default orientation is a directed edge $\sigma_{\mathcal{H}}^{d} \rightarrow\tau_{\mathcal{H}}^{d-1}\in\mathcal{\overline{H_{\mathcal{K}}}}$ that connects a k-dim node $\sigma_{\mathcal{H}}^{d}$ to a (k-1)-dim node $\tau_{\mathcal{H}}^{d-1}$. Finally, associate a matching $\mathcal{M}$ to graph $\mathcal{H_{\mathcal{K}}}$. If an edge $\langle\tau_{\mathcal{H}}^{d-1},\sigma_{\mathcal{H}}^{d}\rangle\in\mathcal{M}$ then, \emph{reverse} the orientation of that edge to $\tau_{\mathcal{H}}^{d-1} \rightarrow\sigma_{\mathcal{H}}^{d}\in\mathcal{\overline{H}_{\mathcal{K}}}$\textbf{.} The matching induced reorientation needs to be such that the graph $\overline{\mathcal{H}_{\mathcal{K}}}$ is a Directed Acyclic Graph. A graph matching on $\mathcal{H_{\mathcal{K}}}$ that leaves the graph $\overline{\mathcal{H_{\mathcal{K}}}}$ acyclic in the manner prescribed above is known as \emph{Morse Matching}.  Table~\ref{tab:dictionary} provides a translating dictionary from simplicial complexes to their Hasse graphs. See~\autoref{fig:HasseGraph}.    

\begin{figure}
\begin{minipage}[t]{1\columnwidth}%
\includegraphics[scale=0.05]{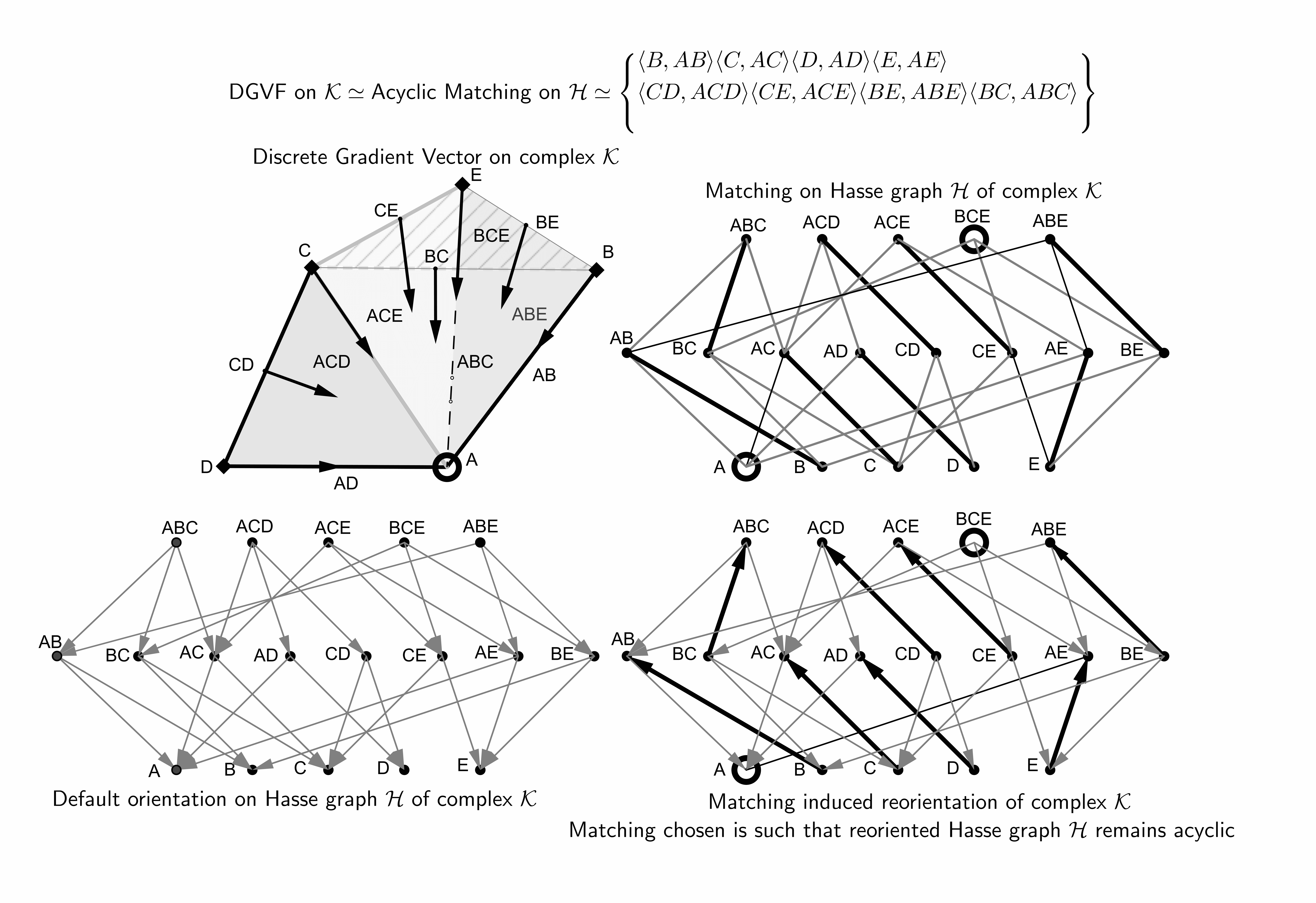}

\captionof{figure} {Matching induced orientation of Hasse Graph}

\label{fig:HasseGraph}%
\end{minipage}

\captionof{table}{Graph Theoretic dictionary for Morse Matching}

\centering{}\begin{center}
 \begin{center}  \resizebox{.75\textwidth}{!}{%
\begin{tabular}{|c||c|c|} \hline   & \textsf{Morse theory on Cell Complex $\mathcal{K}$} & \textsf{Graph theory on Hasse Graph $\overline{\mathcal{H}_{\mathcal{K}}}$}\tabularnewline \hline  \hline  
\textsf{1.} & \textsf{gradient Pair $\langle\alpha^{d-1},\beta^{d}\rangle\in\mathbb{V}$ } & \textsf{Matched pair of vertices $ (\alpha,\beta )\in\mathcal{H_{\mathcal{K}}}$ }\tabularnewline \hline  
\textsf{2.} & \textsf{Dimension d } & \textsf{Multipartite Graph Level d }\tabularnewline \hline  
\textsf{3.} & \textsf{$\sigma^{d-1}\prec\tau^{d}$ s.t. $\langle\sigma^{d-1},\tau^{d}\rangle\mathcal{\notin\mathbb{V}}$ } & \textsf{Default down-edge $\ensuremath{\tau} \rightarrow\ensuremath{\sigma}$}\tabularnewline \hline  
\textsf{4.} & \textsf{$\sigma^{d-1}\prec\tau^{d}$ s.t. $\langle\sigma^{d-1},\tau^{d}\rangle\mathcal{\in\mathbb{V}}$ } & \textsf{Matching up-edge $\sigma \rightarrow\tau$ }\tabularnewline \hline  
\textsf{5.} & $\mathbb{V}$\textsf{-Path} & \textsf{Directed Path}\tabularnewline \hline  
\textsf{6.} & \textsf{Non-trivial Closed }$\mathbb{V}$\textsf{-Path} & \textsf{Directed Cycle}\tabularnewline \hline  
\textsf{7.} & \textsf{CVF} & \textsf{Matching on the Hasse Graph}\tabularnewline \hline  
\textsf{8.} & \textsf{DGVF} & \textsf{Morse Matching (i.e. Acyclic Matching)}\tabularnewline \hline  
\textsf{9.} & \textsf{Critical Cell $\zeta^{d}$ } & \textsf{Unmatched Vertex $\zeta$}\tabularnewline \hline   
\textsf{10.} & \textsf{Regular Cell $\xi^{d}$ } & \textsf{Matched Vertex $\xi$}\tabularnewline \hline  
\end{tabular}%
}
\par\end{center}%
\par\end{center}\label{tab:dictionary}
\end{figure}

\subsection{Prior Work}

Joswig et al. \cite{JP06} proved the NP-completess of the decision
problem and posed the approximability of optimality of Morse gradient
vector fields (for general dimensional complexes) as an open problem,
by pointing out an error in Lewiner's claim about inapproximability
in \cite{Le02}. Recently \cite{Ra15} provided an $O(\log^{2}n)$
factor $\tilde{O}(n)$ time approximation algorithm for the optimal
discrete gradient vector field (that minimizes the number of critical
cells). Recently, Burton et al. \cite{BLPS13} developed an FPT algorithm
for optimizing Morse functions. Some of the notable works that seek
optimality of Morse matchings by applying heuristics in general are
\cite{AFV11,AFV12,HMMNWJD10,He05,JP06,Le03a,LB14}. The works that
constitute more relevant prior work for us are those that achieve
optimality by restricting the problem to 2-manifolds in nearly linear
time \cite{JM09,Le03b} and quadratic time \cite{BLW11} respectively.
Ours is however the first algorithm to compute homology groups of
2-manifolds \emph{with arbitrary coefficients} in nearly linear time.

\section{Boundary Operator Computation}

The analytic formula for boundary operator is given in Forman\cite{Fo98a}.
The obvious interpretation of the formula gives an exponential time
algorithm. We give an efficient $O(\kappa n)$ time algorithm where
$\kappa$ is the total number of critical cells which is nearly-linear
if the number of topologically interesting features are small relative
to the number of simplices. Hence, we have a \textit{pseudolinear
time complexity} algorithm for boundary operator Computation. We note
of the following Theorem from Forman\cite{Fo98a}: 

\begin{theorem} [Boundary Operator Computation. Forman~\cite{Fo98a}] \label{thm:bdryOp}  Consider an oriented simplicial complex. Then for any critical (p+1)-simplex $\beta$ set: \\
\[\triangle \beta = \sum\limits_{critical \: \alpha(p)}\: P_{\alpha\beta}\:\alpha \]
\[ P_{\alpha \beta} = \sum\limits_{\gamma \in \Gamma(\beta,\alpha)} \varTheta(\gamma) \] 
where $\Gamma(\beta,\alpha)$ is the set of discrete gradient paths which go from a face in the boundary of $\beta$ to $\alpha$  The multiplicity $\varTheta(\gamma)$ of any gradient path $\gamma$ is equal to $\pm1$ depending on whether given $\gamma$ the orientation on $\beta$ induces the chosen orientation on $\alpha$ or the opposite orientation. The formula for the boundary operator above computes the homology of complex K. 
\end{theorem}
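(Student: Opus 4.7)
The plan is to derive the boundary formula by constructing an algebraic counterpart of the sequence of elementary collapses guaranteed by \autoref{thm:weakmorse} and identifying the image of the simplicial boundary under the resulting projection with the signed sum of gradient paths.

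First I would use the fact that the DGVF $\mathcal{V}$ encodes a sequence of elementary collapses, one per matched pair $\{\alpha^{(p)}\prec\beta^{(p+1)}\}\in\mathcal{V}$. At the level of oriented chains these collapses assemble into a chain-homotopy equivalence between $(C_\bullet(\mathcal{K}),\partial)$ and the Morse chain complex $(C_\bullet^{M},\partial^{M})$ freely generated over the coefficient group by critical cells; this immediately yields the final homology claim of the theorem. To identify $\partial^{M}$ explicitly, I would introduce a projection $\pi:C_\bullet(\mathcal{K})\to C_\bullet^{M}$ by setting $\pi(\alpha)=\alpha$ on critical cells $\alpha$ and, for every regular $p$-cell $\sigma$ matched with its unique coface $\sigma^{\star}$ in $\mathcal{V}$,
\[
\pi(\sigma)=-[\sigma^{\star}:\sigma]^{-1}\,\pi\bigl(\partial\sigma^{\star}-[\sigma^{\star}:\sigma]\,\sigma\bigr),
\]
where $[\sigma^{\star}:\sigma]$ is the incidence number. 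The acyclicity of $\mathcal{V}$ makes this recursion well defined, and a direct check on generators gives $\pi\circ\partial=\partial^{M}\circ\pi$, so $\partial^{M}\beta=\pi(\partial\beta)$ for every critical $\beta$.

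Second, I would unwind $\pi(\partial\beta)$ combinatorially. Each invocation of the recursion on a regular face $\sigma_{i}$ of the chain currently being processed realizes exactly one step of a $\mathcal{V}$-path: one jumps from $\sigma_{i}$ to its matched coface $\tau_{i}=\sigma_{i}^{\star}$ and then distributes over $\partial\tau_{i}$, producing new $p$-faces $\sigma_{i+1}\prec\tau_{i}$ with $\sigma_{i+1}\ne\sigma_{i}$. The recursion halts precisely when $\sigma_{i+1}$ is critical, at which point it contributes to the coefficient of some critical $\alpha$. Summing contributions over all branching choices indexes the surviving terms exactly by $\Gamma(\beta,\alpha)$, the set of gradient paths from a face of $\beta$ to $\alpha$.

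Third, I would verify the sign interpretation of $\Theta(\gamma)$. Along a given path $\gamma$ the accumulated coefficient is multiplied at each step by $-[\tau_{i}:\sigma_{i+1}]/[\tau_{i}:\sigma_{i}]$, starting from the initial incidence $[\beta:\sigma_{0}]$; by induction on the length of $\gamma$ this product equals $+1$ precisely when the orientation of $\alpha$ obtained by transporting the orientation of $\beta$ along $\gamma$ matches the chosen orientation, and $-1$ otherwise, which is exactly Forman's description. The main obstacle is the sign bookkeeping together with the verification that $\pi$ is chain-commuting; both reduce to the local incidence identity behind $\partial^{2}=0$ in the ambient simplicial complex, so the remaining work is a careful induction on the length of $\mathcal{V}$-paths using standard facts about incidence numbers on oriented simplices.
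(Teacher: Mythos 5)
You should first be aware that the paper contains no proof of this statement to compare against: it is imported verbatim from Forman~\cite{Fo98a} (``We note of the following Theorem from Forman''), so your attempt stands or falls on its own. Your route --- eliminating matched pairs algebraically to get a chain-homotopy equivalence onto the complex spanned by critical cells, defining the projection $\pi$ by a recursion along the matching, and unwinding that recursion into $\mathcal{V}$-paths with multiplicative sign bookkeeping --- is the standard algebraic-Morse-theory proof, and differs from Forman's original argument, which instead stabilizes the discrete gradient flow $\Phi=1+\partial V+V\partial$ and identifies the Morse complex with the $\Phi$-invariant chains. In outline your approach is sound, and your per-step multiplier $-[\tau_i:\sigma_{i+1}]/[\tau_i:\sigma_i]$ with initial factor $[\beta:\sigma_0]$ does reproduce Forman's $\varTheta(\gamma)$, since codimension-one incidence numbers in a simplicial complex are $\pm1$.

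There is, however, one genuine gap. Your recursion defines $\pi$ only on critical cells and on regular $p$-cells matched with a \emph{coface}. But a regular $p$-cell may instead be matched with one of its own $(p-1)$-faces, and such cells do occur among the faces of $\beta$ and among the faces of the intermediate cofaces $\tau_i$; the recursion as written therefore invokes $\pi$ at arguments where it is undefined, and is not well-founded as stated. You need the additional clause $\pi(\sigma)=0$ whenever $\sigma$ is matched with one of its faces. Correspondingly, your assertion that ``the recursion halts precisely when $\sigma_{i+1}$ is critical'' is false: it also halts, with zero contribution, when $\sigma_{i+1}$ is matched downward --- which is exactly what Forman's $\Gamma(\beta,\alpha)$ requires, since a $\mathcal{V}$-path in dimensions $(p,p+1)$ can only be continued through cells paired with a coface. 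With that clause added, acyclicity of $\mathcal{V}$ makes the recursion terminate, the generator check $\pi\circ\partial=\partial^{M}\circ\pi$ becomes verifiable, and the surviving terms are indexed exactly by $\Gamma(\beta,\alpha)$ as you claim. One smaller point: ``the collapses assemble into a chain-homotopy equivalence'' deserves a sentence of justification --- each single matched-pair elimination is an algebraic Gaussian elimination (valid because the relevant incidence is $\pm1$), and acyclicity guarantees the eliminations can be iterated to completion; this, rather than topological collapsibility of intermediate stages, is what makes the assembly legitimate and also yields the homology claim over arbitrary coefficients.
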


We observe that we need 'formal sums' of critical cells at each critical
cell. However, there is an advantage in calculating these formal sums
for intermediate regular cells as well since this can potentially
speedup calculations at critical cells. Since topological sort also
does ordering for us, we can start at the lowest valued critical cell.
We proceed to the next higher valued cell and observe that we have
two cases. 

Also, we assume that our complex has a pre-assigned orientation. The
angular brackets $<,>$ in the formulae above denote the pre-assigned
orientation. Once the boundary operator is ready we use Smith normal
form algorithm over a collapsed complex that is provably significantly
smaller than the original complex, in a mathematically precise sense.

Let us denote by $\triangle\sigma$ the boundary operator computation
for cell $\sigma^{m}.$ We now make an inductive hypothesis that the
computation of the $\triangle$ operator has been done for all the
maximal faces / single coface (since they are all lower valued Morse
cells). Then the value of the $\triangle$ operator for the new cell
is calculated as follows:

\textbf{Case 1:} All flow emanating from a cell goes out through
its boundary faces. No lower-valued co-faces. 

\fbox{\begin{minipage}[t]{1\columnwidth}%
\begin{equation}
\triangle\sigma=\sum_{\substack{\tau_i\prec\sigma\\ \exists\xi\textsf{ s.t. } \langle \tau_i,\xi \rangle \in\mathscr{V}_{m}\\  \langle \tau_i,\sigma \rangle \notin\mathscr{V}_{m} } }\triangle\tau_i\times<\partial\sigma,\tau_i>+\sum_{\substack{\bm{\alpha_j}\prec\sigma\\ \cancel{\exists}\xi\textsf{ s.t. } \langle \bm{\alpha_j},\xi \rangle \in\mathscr{V}_{m}\\ \cancel{\exists}\zeta\textsf{ s.t. } \langle \zeta,\bm{\alpha_j} \rangle \in\mathscr{V}_{m-1} } }\bm{\alpha_j}\times \langle\partial\sigma,\bm{\alpha_j} \rangle 
\end{equation}%
\end{minipage}}

The first formula takes care of Case 1 where flow goes out through
the faces of the boundary. Note that in the formula above, $\tau^{m-1}$
is a placeholder for non-critical faces (if any) of $\sigma^{m}$,
i.e. $\{\tau^{m-1}\prec\sigma^{m}\}$, which are not a part of the
Discrete Gradient vector field which is equivalent to saying $f(\tau)<f(\sigma)$.
Similarly, $\alpha^{m-1}$ is a representative for the critical faces
(if any) of$\sigma^{m}.$ This formula holds irrespective of whether
$\sigma^{m}$ itself is critical or non-critical. In case of computation
of boundary of a critical $\sigma^{m}$ such that $m=0$, i.e. when
$\sigma^{m}$ is a critical point, the boundary is null.

\textbf{Case 2:} The cell has $1$ lower valued coface. 

\fbox{\begin{minipage}[t]{1\columnwidth}%
\begin{equation}\triangle \sigma = <\partial \beta, \sigma> \times \triangle \beta  \end{equation}     %
\end{minipage}}

The second \textcolor{black}{formula} takes care of Case 2 when $\sigma^{m}$
has a lower valued co-face $\beta^{m+1}$. 

\textbf{Case 3:} The 0-dimensional cell $\sigma$ is the unique minima. 

\fbox{\begin{minipage}[t]{1\columnwidth}%
\begin{equation}\triangle \sigma = \emptyset  \end{equation}    %
\end{minipage}}

\begin{theorem}[Boundary Operator Computation: Correctness Proof] The Algorithm correctly computes boundary operator $\triangle$. \label{thm:bdryOpCorr}
\end{theorem}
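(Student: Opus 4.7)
The plan is to proceed by strong induction on the topological order of cells induced by the DGVF (equivalently, by the Morse function values), so that whenever the algorithm computes $\triangle\sigma$, every cell on which the recurrence depends has already been assigned its formal sum. The invariant we maintain for a cell $\sigma^{m}$ is that $\triangle\sigma$ equals the signed accumulated contribution of all V-paths that begin at $\sigma$ and terminate at critical cells of the appropriate dimension; when $\sigma$ is itself critical this specializes to Forman's formula $\triangle\sigma = \sum_{\alpha}P_{\alpha\sigma}\,\alpha$ from \autoref{thm:bdryOp}, so proving the invariant at every cell proves the theorem.

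The base case is the unique $0$-dimensional minimum (Case~3), where $\triangle\sigma = \emptyset$ by fiat and matches Forman's formula vacuously. For Case~2, where $\sigma$ has a unique lower-valued coface $\beta$ so that $\langle\sigma,\beta\rangle\in\mathscr{V}$, the processing order has already treated $\beta$, and the formula $\triangle\sigma = \langle\partial\beta,\sigma\rangle\cdot\triangle\beta$ is the correct propagation through the matched pair: any V-path leaving $\sigma$ must first traverse $\langle\sigma,\beta\rangle$ upward and then re-enter the complex through some other face of $\beta$, so the sum of continuations is exactly $\triangle\beta$ rescaled by the incidence sign between $\beta$ and $\sigma$.

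For Case~1, where $\sigma$ has no lower-valued coface and is therefore either critical or sits at the top of a pair $\langle\alpha,\sigma\rangle$, I would partition the $(m-1)$-faces of $\sigma$ into three classes. The matched face $\alpha$ paired with $\sigma$ is excluded because it is consumed by the pair and does not spawn a V-path continuation; regular faces $\tau_{i}$ matched into some coface $\xi\neq\sigma$ continue the V-path via the term $\langle\partial\sigma,\tau_{i}\rangle\cdot\triangle\tau_{i}$, which is available by the inductive hypothesis; and critical faces $\alpha_{j}$ terminate a V-path and contribute $\langle\partial\sigma,\alpha_{j}\rangle\cdot\alpha_{j}$ directly. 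Summing these three contributions implements $P_{\alpha\beta}=\sum_{\gamma\in\Gamma(\beta,\alpha)}\varTheta(\gamma)$ by inductively unfolding exactly one step of each V-path: the critical-face terms account for paths that stop here, and the regular-face terms account for paths that continue further.

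The main obstacle is verifying that the telescoped product of local incidence signs produced by successive applications of the Case~1 and Case~2 formulas reproduces Forman's multiplicity $\varTheta(\gamma)=\pm1$. Concretely, I would prove a sign lemma asserting that along any V-path $\sigma_{0},\tau_{0},\sigma_{1},\tau_{1},\ldots,\sigma_{q+1}$, the product of incidence coefficients generated by alternating Case~2 (going up a pair) and Case~1 (descending to the next face) equals the combinatorial datum that Forman uses to define $\varTheta$; this amounts to showing that the orientation transported from $\sigma_{0}$ through each $\tau_{i}$ onto $\sigma_{i+1}$ agrees with the one selected by the ambient orientation convention. Once this orientation bookkeeping is established, the structural induction reduces to the routine check outlined above, and the algorithm's output $\triangle\sigma$ for every critical $\sigma$ coincides with the Morse boundary operator.
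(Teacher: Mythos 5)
Your proposal follows essentially the same route as the paper's own proof: strong induction along the topological order compatible with $\mathscr{V}$, the same base case at the unique minimum, the same propagation formula $\triangle\sigma = \langle\partial\beta,\sigma\rangle\cdot\triangle\beta$ through a matched coface, and the same three-way partition of faces in the unmatched case (exclude the face paired with $\sigma$, recurse on regular faces matched upward into some $\xi\neq\sigma$, and include critical faces directly). The only difference is one of emphasis: you explicitly isolate the orientation bookkeeping as a ``sign lemma'' to be verified, whereas the paper simply asserts that multiplying by the co-orientation at each step yields Forman's multiplicity $\varTheta(\gamma)=\pm 1$ --- so your version is, if anything, slightly more candid about what remains to be checked, at the same overall level of rigor.
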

\begin{proof} Note that, to begin with we start with a list of cells in an ascending total order. Let us call this list $\mathcal{L}$. This total order is one of the total orders that is compatible with the partial order prescribed by the gradient vector field $\mathscr{V}$. If we assign the function value 'i' i.e. the index of some cell $\mathcal{L}[i]$ to each cell in $\mathcal{L}$, we essentially obtain a Morse function compatible with the gradient vector field. The first cell we process is one with the lowest function value (i.e. the unique minima). This cell is then followed by cells with increasingly higher Morse function values. To prove that the formulaic computation of the $\triangle$ operator as expressed in subroutine \textbf{calcBdryOp}() is, in fact, the same as expressed in Theorem \ref{thm:bdryOp} we proceed by induction. Let $\sigma_1$ denote the unique minima. The base case of induction for $\triangle \sigma_1$ is trivial. Now suppose that for all cells in the set $ \{\sigma_1, \sigma_2, \ldots\sigma_I \}$, we have correctly computed the boundary operator as prescribed in Theorem \ref{thm:bdryOp}. Now suppose we encounter cell $\sigma_{I+1}$. Suppose that $\sigma_{I+1}$ has a lower valued coface $\beta$ i.e. ($\sigma_{I+1}\prec\beta$ \LAND $ \langle\sigma_{I+1}, \beta \rangle\in\mathscr{V}$). Since $\beta$ has lower function value as compared to $\sigma_{I+1}$ (by hypothesis), we conclude that $\beta=\sigma_{J+1}$ for some $J<I$.
All paths emanating from $\sigma_{I+1}$ must go through $\beta$. The orientation induced by some path $\gamma_i$ $\beta\overset{\gamma_{i}}{ \rightsquigarrow}\rho$  from $\beta$ to some critical cell say $\rho$ is $\iota$ where $\iota=\pm 1$, then the orientation of path  $\sigma_{I+1}\xxrsquigarrow{\beta\circ\gamma_{i}}\rho$ will be $  \langle\partial \beta, \sigma_{I+1}   \rangle \times \iota$. Therefore, the total count of paths (with induced orientation accounted for) will be $  \langle \partial \beta, \sigma   \rangle \times \triangle \beta$. Hence, the boundary operator computation done in $\textbf{calcBdryOp}()$ is valid for the case when $\sigma_{I+1}$ has a lower valued coface. \\
Finally, assume that $\sigma_{I+1}$ does not have any lower valued coface. Therefore, the flow leaving from $\sigma_{I+1}$ will be through each of its faces (except possibly one higher valued face). If it indeed has a (matched) higher valued face then flow will be \emph{entering it} through that face and hence the face in question isn't relevant in calculating the weighted sum of gradient paths that \emph{leave} $\sigma_{I+1}$. When consider lower valued faces of $\sigma_{I+1}$, we make a distinction between faces that are non-critical and those those that are critical. If a face say $\bm{\alpha_j}$ is critical, then clearly we are justified in directly including the entry $\bm{\alpha_j}\times \langle\partial\sigma,\bm{\alpha_j} \rangle$ as part of our «formal sum» that makes up the cell boundary. As for the non-critical entries of the formula, namely $[\triangle\tau_i\times<\partial\sigma,\tau_i>]$, we impose an additional constraint  $ \langle \tau_i,\xi \rangle \in\mathscr{V}_{m}$ (as opposed to $ \langle \xi,\tau_i \rangle \in\mathscr{V}_{m-1}$) in the summation. In doing so, we are ruling out all entries that would valid directed paths going out of $\sigma_{I+1}$ but those that won't add up to make gradient paths as prescribed by Theorem \ref{thm:bdryOp}. Now since $\tau_i$ is lower valued its boundary $\triangle\tau_i$ has already been calculated correctly by Induction Hypothesis. But clearly every gradient path emerging from $\sigma_{I+1}$ must first pass through one of these $\tau_i$'s.  Also, for each of these gradient paths, the orientations will change precisely by the multiple of $ \langle \tau_i,\xi \rangle$. Therefore the weighted sum of (non-trivial) gradient paths from $\sigma_{I+1}$ will be the sum of all the contributions by boundaries of each of the non-critical faces $\tau_i$. To complete the argument for the induction step, we note that these sums along with contributions from the critical faces of $\sigma_{I+1}$ takes into account each gradient path precisely once. Also, it is easy to see that multiplication by co-orientation at each step provides the weights to ensure that the final entry will decide the induced orientation. Hence proved.
\end{proof}

\begin{theorem}[Complexity of Computing Boundary Operator]
The complexity of computing the boundary operator is $O(\varUpsilon \times \mathcal{N})$.
\end{theorem}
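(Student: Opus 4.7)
The plan is to amortise the algorithm's cost over the $\mathcal{N}$ cells of the complex and to show that each cell contributes $O(\varUpsilon)$ work. The crucial observation, which follows directly from \autoref{thm:bdryOp}, is that every formal sum $\triangle\sigma$ maintained by the algorithm is supported only on critical cells; hence $|\operatorname{supp}(\triangle\sigma)|\leq\varUpsilon$ for every $\sigma$. If we store each such sum as a dense array of length $\varUpsilon$ indexed by a fixed enumeration of the critical cells (built once up-front), then scalar multiplication by $\pm 1$ and addition of two formal sums both cost $O(\varUpsilon)$ time.

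Next I would analyse the per-cell cost case by case, following the three update rules. Case 3 (the unique $0$-dimensional minimum) incurs $O(1)$ work. Case 2 executes a single signed copy of $\triangle\beta$ into $\triangle\sigma$, costing $O(\varUpsilon)$. Case 1 iterates over the (locally bounded number of) faces of $\sigma$: for each non-critical face $\tau_i$ it adds $\triangle\tau_i\cdot\langle\partial\sigma,\tau_i\rangle$ in $O(\varUpsilon)$ time, and for each critical face $\alpha_j$ it appends a single term in $O(1)$ time. Since a $d$-cell has at most $d+1$ codimension-one faces in a simplicial complex, the aggregate cost for Case 1 is still $O(\varUpsilon)$ per cell.

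The correctness argument in \autoref{thm:bdryOpCorr} already guarantees that whenever we consult $\triangle\tau_i$ or $\triangle\beta$, the value has been finalised by the topological ordering of the processing loop, so no recomputation is ever triggered and each formal sum is written exactly once. Summing the per-cell cost across the $\mathcal{N}$ cells then yields the stated $O(\varUpsilon\times\mathcal{N})$ bound.

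The step I expect to be the main obstacle is the data-structure choice that makes the $O(\varUpsilon)$ per-operation cost genuine rather than $O(\varUpsilon\log\varUpsilon)$. The dense-array representation indexed by a pre-computed enumeration of the critical cells works cleanly because additions and scalings become truly linear in the support size; using a balanced BST or hash map would only weaken the bound to $\tilde{O}(\varUpsilon\cdot\mathcal{N})$, which is still consistent with the pseudolinear regime claimed in the surrounding section, but I would prefer the sharper bound stated in the theorem.
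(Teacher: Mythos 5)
Your proposal is correct and follows the same overall decomposition as the paper: $O(\mathcal{N})$ iterations of the main loop of \textbf{calcBdryOp}(), each costing time proportional to the size of the formal sums being manipulated. The one genuine difference is how that per-iteration cost is justified. The paper asserts a per-iteration cost of $O(\varLambda)$ and then invokes \autoref{thm:equality} ($\varLambda=\varUpsilon$) to convert this into $O(\varUpsilon)$ --- an argument that implicitly relies on the vector field being the optimal one produced by the frame-flow construction. You instead prove the support bound directly: by induction along the topological order, every intermediate $\triangle\sigma$ is a combination of critical cells only, so its support has size at most $\varUpsilon$ for an \emph{arbitrary} DGVF, with no appeal to optimality or to Betti numbers. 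This is both more self-contained and strictly more general, and your explicit dense-array representation (fixed enumeration of critical cells, $O(\varUpsilon)$ addition and signed copy) supplies a detail the paper glosses over, making the bound genuinely $O(\varUpsilon\times\mathcal{N})$ rather than carrying hidden logarithmic factors. Two minor points: you omit the cost of the topological sort itself, which the paper accounts for as $O(|\mathcal{V}|+|\mathcal{E}|)=O(\mathcal{N})$ using the bounded-dimension bound on Hasse-graph edges --- harmless, since it is dominated by $O(\varUpsilon\times\mathcal{N})$ when $\varUpsilon\geq 1$; and your bound of $d+1$ faces per $d$-simplex plays the same role as the paper's observation that $|\mathcal{E}|\leq|\mathcal{V}|\times\mathcal{D}$, so nothing is lost there.
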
 
\begin{proof}
For the Hasse graph $\mathcal{H}(\mathcal{V},\mathcal{E})$ of  a simplicial complex, $\mathcal{E}\leq\mathcal{V}\times \mathcal{D}$ where $\mathcal{D}$ is the maximum dimension of cells in the complex (which in our case is 2). Therefore, $|\mathcal{E}|=O(|\mathcal{V}|)$. (It is easy to show that for a cubical complexes as well, number of edges is $O(|\mathcal{V}|)$).The complexity of computing topological sort of the oriented Hasse graph  is $O(|\mathcal{V}|+|\mathcal{E}|)$ which is same as $O(|\mathcal{V}|)$, assuming that our input manifold is either simplicial or cubical. \\
 The \textbf{for} loop in Lines \ref{lst:line:loopBS}-\ref{lst:line:loopBE} of procedure \textbf{calcBdryOp}() costs at the most $O(\varLambda)$ per iteration while the total number of iterations is $O(|\mathcal{V}|)$. But since in our case, using \autoref{thm:equality}, $\varLambda=\varUpsilon$, the total cost of the for loop is $O(|\mathcal{V}|\times\varUpsilon)$. Therefore, complexity of computing boundary operator is $O(|\mathcal{V}|\times\varUpsilon)=O(\mathcal{N}\times\varUpsilon)$, since the number of vertices in the Hasse graph is same as number of cells in the complex (i.e. the size of the complex namely $\mathcal{N}$).
\end{proof}

It is worth noting that in vast majority of the practical scenarios
$\mathcal{N}\ggg\varUpsilon$, enough for us to assume that compared
to the size of the complex, the 'topological complexity', $\varUpsilon$
is nearly a constant. We therefore use the notation $\tilde{O}(\LargerCdot)$
(where $\ensuremath{O(\varUpsilon\times\mathcal{N})}=\tilde{O}(\mathcal{N})$)
to indicate the \emph{nearly linear time} complexity of boundary operator
computation.

\section{Frames of Expansion \label{sec:Frames-of-Expansion}}

\subsection{Basic Formulation}

\begin{notation}
Let the set $\mathcal{B}(\alpha)$ denote the 0-dim cells (vertices) and 1-dim. cells (edges) incident on $\alpha$ if $\alpha$ is 2-dimensional and let $\mathcal{B}(\alpha)$ denote the 0-dim cells incident on $\alpha$ if $\alpha$ is 1-dimensional.
\end{notation}

\begin{definition}[Semigraph] A semigraph is a set of vertices and edges s.t. every edge may have either one or two vertices incident on it.
\end{definition}

Semigraphs generalize graphs in the sense that, in a graph, every
edge is incident on precisely two vertices. 

\begin{definition}[Frame of expansion of a critical cell] Given a critical cell $\alpha^{n}$ where ($n\geq 2$), consider the set of all cells that can be reached from $\alpha$, by following one of the gradient paths within the gradient vector field. We call this set the \emph{expansion set} of critical cell $\alpha$ and denote it by $\widehat{\alpha}$. The \emph{frame of expansion} of $\alpha$ is the $n-1$-dim. boundary of $\widehat{\alpha}$ along with the $n-2$ dim. cells incident on these boundary cells. We denote the frame of expansion of $\alpha$ by $\underline{\widehat{\alpha}}$. 
\end{definition}

\begin{definition}[Frame of expansion of a boundary cell] Given a regular boundary cell $\xi^{n-1}$ where ($n\geq 2$), suppose that $\langle\xi,\chi\rangle$ forms a gradient pair. Now consider the set of all cells that can be reached from $\xi$, by following one of the gradient paths within the gradient vector field. We call this set the \emph{expansion set} of boundary cell $\xi$ and denote it as $\widehat{\xi}$. The \emph{frame of expansion} of $\xi$ is the $n-1$-dimensional boundary of $\widehat{\xi}$. We denote it as $\underline{\widehat{\xi}}$. 
\end{definition}

\begin{shnote}[Method of addition of cells upon expansion] It must be noted that if there is an expansion along $\tau^{1}$ into cell $\varpi^{2}$, then we delete $\tau^{1}$ from the frame and the set $\mathcal{B}(\varpi\setminus\tau)$ is added into the frame.
\end{shnote}

\begin{shnote}\label{note:2paths} Suppose we are given a regular 2-cell $\varpi^{2}$, s.t. the 1-cell $\tau^{1}\in\varpi^{2}$. The boundary of $\tau$ namely $\mathcal{B}(\tau)$ consists of two vertices say $\lambda^{0}$ and $\rho^{0}$. Note that within the set $\mathcal{B}(\varpi)$, there exist two non-intersecting paths that connect $\lambda$ and $\rho$. One path involves the singular edge $\tau$, the other path consists of edges belonging to the set $\mathcal{B}(\varpi\setminus\tau)$    
\end{shnote}

\begin{definition}[connectedness, connecting path] Consider two cells $\sigma^{(m-1)},\tau^{(m-1)}$ in a complex $\mathcal{K}^{n}$. We say that $\sigma$ and $\tau$ are said to be \textbf{Type 1 connected} in complex $\mathcal{K}$ if there exists a cell sequence $\phi_0^{(m-1)}$, $\gamma_{0}^{(m)}$, $\phi_{1}^{(m-1)}$, $\gamma_{1}^{(m)}$, $\ldots\phi_{q}^{(m-1)}$, $\gamma_{q}^{(m)}$, $\phi_{q+1}^{(m-1)}$ s.t. for $i=0,\ldots q$, $\phi_{i}\prec\gamma_{i}\succ\phi_{i+1}$, $\phi_0=\sigma$, $\phi_{q+1}=\tau$ and $\phi_{i}\neq\phi_{i+1}$. This sequence of cells, $\phi_0^{(m-1)}\ldots\phi_{q+1}^{(m-1)}$ is known as a \textbf{connecting path}. Analogously, we say that $\sigma^{m}$ and $\tau^{m}$ are \textbf{Type 2 connected} in complex $\mathcal{K}^{n}$ if there exists a cell sequence $\gamma_0^{(m)}$, $\phi_{0}^{(m-1)}$, $\gamma_{1}^{(m)}$, $\phi_{1}^{(m-1)}$, $\ldots\gamma_{q}^{(m)}$, $\phi_{q}^{(m-1)}$, $\gamma_{q+1}^{(m)}$ s.t. for $i=0,\ldots q$, $\gamma_{i}\succ\phi_{i}\prec\gamma_{i+1}$, $\gamma_0=\sigma$, $\gamma_{q+1}=\tau$ and $\gamma_{i}\neq\gamma_{i+1}$. The sequence of cells, $\gamma_0^{(m)}\ldots\gamma_{q+1}^{(m)}$ is known as a \textbf{connecting path}. Finally, we say that, $\sigma^{m}$ and $\tau^{(m-1)}$ are \textbf{Type 3 connected} if there exists a cell $\eta^{(m-1)}\prec\sigma^m$ with Type 2 connectedness between $\eta$ and $\tau$.     

Finally, we say that a set of $m-1$ and $m$ dim. cells are said to form a \textbf{connected set} if for any pair of  $m-1$ dim. cells (alternatively, for any pair of $m$ dim. cells) we can find sequence of connecting cells as prescribed above.
\end{definition}

\subsection{Pseudocode for $\tilde{O}(n)$-Time Algorithm for Computing Homology
of 2-manifolds}

\begin{notation} Given manifold $\mathcal{M}$, we use the notation $\mathcal{M}^d$ to denote the d-dimensional cells of manifold $\mathcal{M}$. 
\end{notation}

\begin{definition}[Boundary faces, Coboundary faces] Given a complex $\mathcal{K}$, if there exists a cell $\vartheta^{d}$ of dimension $d$ s.t. there exists a unique $(d+1)$-dimensional cell $\varpi^{(d+1)}$ satisfying $\vartheta\prec\varpi$, then we call $\vartheta$ a d-dimensional boundary face of complex $\mathcal{K}$. Also, in this case, $\varpi$ is known as a $(d+1)$-dimensional coboundary face of complex $\mathcal{K}$.
\end{definition}

\begin{definition}[Boundary and Coboundary] Given a complex $\mathcal{K}$, the list of all d-dimensional boundary faces of $\mathcal{K}$ is known as the d-dimensional boundary of $\mathcal{K}$. Also, list of all d-dimensional coboundary faces of $\mathcal{K}$ is known as the d-dimensional coboundary of $\mathcal{K}$.
\end{definition}

\begin{definition}[$n$-flow] The set of gradient paths in vector field $\mathscr{V}$ on manifold $\mathfrak{M}$ that involve alternating $n$-dim. and $(n-1)$-dim. cells is known as the n-flow of $\mathscr{V}$
\end{definition}

\begin{algorithm}
\begin{algorithmic}[1]   

\Procedure{calcHomology}{$\mathcal{M},\mathcal{A}$;}
\LState{We use the \textbf{mainFrame()} subroutine to design a vector field $\mathscr{V}$ on $\mathcal{M}$.}
\LState{We then use subroutine \textbf{calcBdryOp()} to calculate the boundary operator $\triangle_c$ for DGVF $\mathscr{V}$.}
\LState{Finally, using chain complex implied by boundary operator $\triangle_c$, we calculate homology of $\mathcal{M}$ (with coefficients coming from arbitrary abelian group $\mathbb{A}$) using Smith Normal Form.}
\EndProcedure 

\Statex

\Procedure{calcBdryOp}{$\mathcal{M},\mathcal{H},\mathscr{V}$}
\LState{$\tbf{topologicalSort}(\mathcal{H},\mathscr{V},\mathcal{L},\textsf{'ASCENDING')}$;}
\LState{$\sigma_1 =\mathcal{L}[1]$; $\triangle \sigma_1 = \emptyset $;}
\ForAll{$2 \leq  i \leq \abs{\mathcal{L}}$; $\sigma\colonequals\mathcal{L}[i]$}  	\label{lst:line:loopBS} 	 
	\If{$\langle\sigma, \beta \rangle$ is a gradient pair}
		\LState{$\triangle \sigma = <\partial \beta, \sigma> \times \triangle \beta $;}
	\Else
		\LState{Let $\tau_i \prec \sigma$ be the set of regular cells incident on $\sigma$ s.t. $ \langle\tau_i,\sigma \rangle\notin \mathscr{V}$;}
		\LState{Let $\alpha_i \prec \sigma$ be the set of critical cells incident on $\sigma$;}
		\LState{$\triangle \sigma = \sum \triangle \tau_i \times <\partial \sigma, \tau_i>  +  \sum \bm{\alpha_i} \times <\partial \sigma, \alpha_i> $;}
	\EndIf
	\If{$\sigma$ is a critical cell} 
		\LState{$\triangle_c \sigma\colonequals \triangle\sigma$;}
	\EndIf
\EndFor			 		\label{lst:line:loopBE} 
\LState{$\triangle_c$ is the Morse boundary operator corresponding to vector field $\mathscr{V}$.}	
\EndProcedure 
\end{algorithmic}

\protect\caption{$\textbf{Homology}()$ }

\label{alg:hom}
\end{algorithm}

\begin{algorithm}
\begin{algorithmic}[1]   

\Procedure{findCoBdry}{$\mathcal{M}^{d}$}
	\LState{Let $\mathcal{M}^{d}$ be the list of d-dimensional cells of manifold $\mathcal{M}$. Scan through the list $\mathcal{M}^{d}$. If a cell $\mathcal{M}^{d}[i]$ has a face $\vartheta$ such that $\vartheta$ is the sole coface of $\mathcal{M}^{d}[i]$ then add $\mathcal{M}^{d}[i]$ to $\mathcal{B}^d$. $\mathcal{B}^d =  \{ \varpi\in\mathcal{M}^{d}\,|\,\varpi \tsf{ has at least one boundary face. } \}$}
\LState{\tbf{return }$\mathcal{B}^d$;}
\EndProcedure 

\Statex

\Procedure{addPairToVectorField}{$\tau, \vartheta,\mathscr{V}, \mathcal{M},\mathcal{B}^d,d$}
\LState{If $\tau\neq$ NIL and if $\tau$ isn't already matched then do the following:}
\LState{\quad(a.) Match $\tau$ to $\vartheta$.\quad(b.) Delete $\vartheta$ and $\tau$ from $\mathcal{M}^{d-1}$ and $\mathcal{M}^{d}$ respectively.}
\LState{\quad(c.) If $\tau\in\mathcal{B}^{d}$ then delete $\tau$ from list $\mathcal{B}^{d}$.\quad(d.) Enqueue $\tau$ in $\mathcal{Q}$.}
\LState{\quad(e.) Add $ \langle\vartheta,\tau \rangle$ to vector field $\mathscr{V}$.}
\EndProcedure 

\Statex

\Procedure{frameFlow}{$\mathcal{M}^{d},d,\mathcal{B}^d, \mathscr{V}$}
	\LState{Dequeue a cell $\varpi$ from $\mathcal{B}^{d}$. If the dequeue operation with list $\mathcal{B}^{d}$ returns NIL then dequeue a cell from list $\mathcal{M}^{d}$.}
	\Repeat \label{lst:line:loopdisjoint1}
		\Repeat
			\If{$\varpi$ is a cell that has been dequeued from list $\mathcal{B}^{d}$} 
				\If{$\upsilon$ is a boundary face of $\varpi$} \label{lst:line:bdryadd1}
					\LState{Invoke \tbf{addPairToVectorField()} in order to add $\langle\upsilon,\varpi\rangle$ to $\mathscr{V}$.} \label{lst:line:bdryadd2}
				\Else	
					\LState{ Delete $\varpi$ from $\mathcal{M}^d $.}
				\EndIf 
			\EndIf
			\ForAll{each face $\vartheta_{i}$ of $\varpi$}  		 			 		
				\LState{If there exists $\mu_i\succ\vartheta_{i}$ s.t. $\mu_i$ isn't part of any gradient pair of $\mathscr{V}$} 
				\LState{Then invoke \tbf{addPairToVectorField()} to add $\langle\vartheta_i,\mu_i\rangle$ to $\mathscr{V}$.}
			\EndFor  
			\LState{Dequeue a cell from queue $\mathcal{Q}$. Call it $\varpi$.}
		\Until {$\varpi \neq \tsf{NIL}$} 
		\LState{Dequeue a cell from queue $\mathcal{B}^d$. Call it $\varpi$.} \label{lst:line:bdrydisjoint}
	\Until {$\varpi\neq\tsf{NIL}$}  \label{lst:line:loopdisjoint2}
\EndProcedure

\Statex

\Procedure{mainFrame}{$\mathcal{M}$}
	\LState{Invoke \tbf{findCoBdry()} to find coboundary $\mathcal{B}^2$ of $\mathcal{M}^2$.}
	\LState{Use \tbf{frameFlow()} to design vector field $\mathscr{V}$ on cells of $\mathcal{M}^{2}$.}
	\LState{$\mathcal{E}[1:\tsf{numEars}]\colonequals\tbf{earDecompose}(\mathscr{M}$)}
	\ForAll{$1 \leq  i \leq \tsf{numEars}$}
		\LState{ Invoke \tbf{findCoBdry($\mathcal{E}_i$)} to find $\mathcal{B}^1[i]$}
		\LState{Use \tbf{frameFlow()} to design vector field $\mathscr{V}$ on cells of ear $\mathcal{E}_i$.}
	\EndFor
	\LState{$\tbf{return }\mathscr{V};$}
\EndProcedure
\end{algorithmic}

\protect\caption{Frame Flow}

\label{alg:frameflow}
\end{algorithm}

\subsection{Frame Expansions: Correctness \& Complexity Proof}

\begin{lemma} \label{lem:staycon} Suppose there exist two vertices $\alpha^{0}$ and $\gamma^{0}$ that are connected through edges that belong to some frame after a certain number of elementary expansions. Then the two vertices will remain connected through edges belonging to that frame upon further expansions  
\end{lemma}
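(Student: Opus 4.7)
The plan is to proceed by induction on the number of elementary expansions carried out after the stage at which $\alpha$ and $\gamma$ are certified to be connected. The base case, with zero further expansions, is immediate from the hypothesis.

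For the inductive step, I would suppose that after $k$ further expansions the vertices $\alpha$ and $\gamma$ are still connected through some path $P : v_0, e_1, v_1, e_2, \ldots, e_n, v_n$ (with $v_0 = \alpha$ and $v_n = \gamma$) whose edges all lie in the current frame, and then consider the $(k+1)$-st elementary expansion, which acts along some 1-cell $\tau$ into a 2-cell $\varpi$. By the rule recorded in the note on addition of cells upon expansion, this step deletes $\tau$ from the frame and inserts every vertex and edge of $\mathcal{B}(\varpi \setminus \tau)$. If $\tau$ does not occur as any of the $e_i$, then $P$ itself survives verbatim in the new frame and the conclusion holds at once. The only interesting subcase is $\tau = e_i$ for some $i$, where $P$ must be patched across the deleted edge.

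To perform the patch I would invoke Note~\ref{note:2paths}: inside $\mathcal{B}(\varpi)$ there exist two non-intersecting paths between the two endpoints of $\tau$, one of which is the singleton $\tau$ itself while the other consists entirely of edges belonging to $\mathcal{B}(\varpi \setminus \tau)$. Precisely these latter edges are what the expansion inserts into the frame. Splicing this alternative path into $P$ in place of the offending edge $e_i$ yields a walk from $\alpha$ to $\gamma$ all of whose edges now lie in the post-expansion frame; since every walk contains a path between its endpoints, the required connectivity is preserved.

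The main obstacle I expect is organisational rather than conceptual: one must be careful to verify that the alternative path supplied by Note~\ref{note:2paths} genuinely resides in the frame immediately after the expansion (and not merely in the ambient complex), and this is precisely what the addition-of-cells rule guarantees by its insertion of the entire set $\mathcal{B}(\varpi \setminus \tau)$. Iterating the inductive step across any finite sequence of subsequent expansions then completes the argument.
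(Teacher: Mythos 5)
Your proposal is correct and takes essentially the same approach as the paper: the same two-case split on whether the expansion edge $\tau$ lies on the connecting path, with the same appeal to Note~\ref{note:2paths} to splice in the alternative path through $\mathcal{B}(\varpi\setminus\tau)$ when it does. Your explicit induction on the number of subsequent expansions (and the walk-to-path remark) merely formalizes what the paper's single-expansion argument leaves implicit.
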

\begin{proof} By hypothesis, we assume that two vertices, say $\alpha^{0}$ and $\gamma^{0}$ are connected through edges belonging to the frame after a certain number of expansions. Therefore there exists a connecting path $\mathcal{P}$ connecting the two vertices. Suppose w.l.o.g., we expand along some edge $\tau^{1}$ into cell $\varpi^{2}$. We have two cases.\\
\textbf{Case 1:}$\tau^{1}\notin\mathcal{P}$. In this case, all edges in path $\mathcal{P}$ continue to belong to the frame after the expansion corresponding to gradient pair $\langle\tau,\varpi\rangle$. Therefore, even after this expansion, $\alpha$ and $\gamma$ remain connected. \\
\textbf{Case 2:}$\tau^{1}\in\mathcal{P}$. Suppose $\lambda^{0}$ and $\rho^{0}$ are the vertices of $\tau$. Then there exists a path $\mathcal{P}_1$ s.t. $\mathcal{P}_1\subset\mathcal{P}$ connecting $\alpha$ and $\lambda$. Also there exists another path $\mathcal{P}_2$ s.t. $\mathcal{P}_2\subset\mathcal{P}$ connecting $\rho$ and $\gamma$.  However, from \ref{note:2paths} we know that, $\lambda^{0}$ and $\rho^{0}$  are connected through edges that belong to set $\mathcal{B}(\varpi\setminus\tau)$. The original path $\mathcal{P}$ consists of edges $\mathcal{P}_1\cup\tau\cup\mathcal{P}_2$. Upon expansion, we have a new path namely $\mathcal{P}_1\cup\{\varpi\setminus\tau\}\cup\mathcal{P}_2$. Therefore, frame expansions maintain connectivity.         
\end{proof}

\begin{shnote}[2-Manifolds and Semi-graphs] \label{note:semigraph} A 2-manifold without boundary has the structure of a simple graph (unrelated to Hasse graphs) in the following sense: Let every 2-cell denote a vertex and let every 1-cell denote an edge connecting 2-cells. The manifold structure allows at most two incident 2-cells for every 1-cell, whereas not having a boundary implies that the incidence number is exactly two for every 1-cell. Now if we have a 2-manifold with boundary, then the boundary 1-cells will have only one incident 2-cell whereas all other 1-cells will have two incident 2-cells. Therefore a 2-manifold with boundary has the structure of a semi-graph. For a given 2-manifold $\mathcal{M}$, let us denote the semigraph structure by $\mathcal{G}_s(\mathcal{M})$.
\end{shnote} 

\begin{lemma} \label{lem:allcon} Every vertex belonging to manifold $\mathcal{M}$ is included in the frame, when all expansions are processed.
\end{lemma}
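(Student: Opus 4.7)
The plan is to combine the semigraph structure of $\mathcal{M}$ recorded in \autoref{note:semigraph} with the BFS-style propagation implemented by \tbf{frameFlow()}, and then invoke the method-of-addition rule to conclude vertex inclusion. First I would verify that each connected component of the semigraph $\mathcal{G}_s(\mathcal{M})$ is seeded by the outer loop (Lines \ref{lst:line:loopdisjoint1}--\ref{lst:line:loopdisjoint2}): the loop drains $\mathcal{B}^2$ first, picking up every component that contains a boundary 1-cell, and on exhaustion of $\mathcal{B}^2$ it falls back on dequeuing a fresh cell from $\mathcal{M}^2$, so closed components without boundary are also seeded. Whenever a 2-cell $\varpi$ is dequeued, the inner \tbf{for-all} over its faces pairs each free face $\vartheta_i$ with an available coface $\mu_i$ and enqueues $\mu_i$ into $\mathcal{Q}$. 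Since two 2-cells of $\mathcal{G}_s(\mathcal{M})$ are adjacent iff they share a 1-cell, a standard BFS argument shows that every 2-cell of each component is eventually processed.

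Next I would apply the method-of-addition rule stated earlier: each expansion along $\tau^1$ into $\varpi^2$ adds the set $\mathcal{B}(\varpi\setminus\tau)$ to the frame, which contains both endpoints of every edge of $\partial\varpi$ other than $\tau$. Because a 2-cell has at least three boundary edges in the simplicial case (four in the cubical case), every vertex of $\varpi$ lies on some boundary edge distinct from $\tau$, hence is added. For a seed 2-cell the full set $\mathcal{B}(\varpi)$ enters the frame when its first free face is matched in Lines \ref{lst:line:bdryadd1}--\ref{lst:line:bdryadd2}. Combining these observations with the fact that every vertex of a pure 2-manifold lies on at least one 2-cell yields the lemma: each vertex sits on some $\varpi$ that is processed during the BFS, and at the moment of processing the vertex is placed in the frame.

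The main obstacle will be ruling out the possibility that a vertex added to the frame is later silently dropped. The method-of-addition rule only deletes the traversed 1-cell $\tau^1$ and never a 0-cell; furthermore \autoref{lem:staycon} guarantees that connectivity of the frame's 1-skeleton is preserved under every expansion, so an added vertex continues to be witnessed by some incident 1-cell of the frame throughout the algorithm. A secondary subtlety is disentangling the two flavours of initial expansion---via a free boundary face when $\varpi$ comes from $\mathcal{B}^2$, versus an interior-seed expansion when $\varpi$ comes from $\mathcal{M}^2$---but both trigger the same method-of-addition rule on the subsequent face loop, so the vertex-inclusion argument applies uniformly.
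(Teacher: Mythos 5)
Your proposal is correct and follows essentially the same route as the paper's own proof: interpreting \textbf{frameFlow()} as a breadth-first traversal of the semigraph $\mathcal{G}_s(\mathcal{M})$ of \autoref{note:semigraph}, seeding every connected component (boundary components via $\mathcal{B}^2$, closed ones via the fallback dequeue from $\mathcal{M}^2$), noting that each expansion adds $\mathcal{B}(\varpi\setminus\tau)$ while only 1-cells are ever deleted from the frame, and concluding via the fact that every vertex lies on some traversed 2-cell. Your explicit check that every vertex of $\varpi$ lies on a boundary edge distinct from $\tau$, and your remark that 0-cells are never removed, are exactly the (lightly stated) steps the paper's Case 1 and Case 2 rely on, so there is no substantive difference.
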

\begin{proof} From \autoref{note:semigraph}, we know that the 2-cells and 1-cells of a given 2-manifold $\mathcal{M}$ forms a semi-graph structure which we denote by $\mathcal{G}_s(\mathcal{M})$. We use the following convention: If a 2-cell, say $\tau$ is included in some gradient pair belonging to vector field $\mathscr{V}$ or if $\tau$ is the start cell of procedure \textbf{frameFlow()} described in \autoref{alg:frameflow}, the we say that vertex $\tau$ is traversed.\\

\textbf{Case 1:} Suppose $\mathcal{M}$ has no 1-dim. boundary faces. Then $\mathcal{G}_s(\mathcal{M})$ assumes the structure of a connected  simple graph. In this case, the procedure \textbf{frameFlow()} described in \autoref{alg:frameflow} we begin with some starting 2-cell $\varpi$. While scanning through all the faces of $\varpi$, if we find a face $\vartheta$ s.t. $\mu\neq\varpi$ and $\mu\succ\vartheta$ and $\mu$ isn't part of any gradient pair, then we \emph{traverse} $\mu$ by adding gradient pair $\vartheta,\mu$ to vector field $\mathscr{V}$ and add $\mu$ to a queue. Having processed all faces of $\varpi$, we dequeue a cell, say $\varpi_{\textsf{new}}$ from the the queue. We process $\varpi_{\textsf{new}}$ in exactly the same way as we process $\varpi$. And we keep doing this till the queue is empty. Clearly, this is equivalent to a breadth first traversal on graph $\mathcal{G}_s(\mathcal{M})$. Given the fact that all vertices of a graph are traversed in a breadth first traversal, we conclude that except for the start cell, all other 2-cells are part of some gradient pair. When the start cell $\varpi$ is added, the expansion frame consists of $\mathcal{B}(\varpi)$. Every time we add a gradient pair $\langle\vartheta,\mu\rangle$ to $\mathscr{V}$, we delete $\vartheta$ from the frame and add $\mathcal{B}(\mu\setminus\vartheta)$ to the frame. Since every vertex $v_i$ belonging to $\mathcal{M}$ is part of $\mathcal{B}(\mu\setminus\vartheta)$ for some 2-cell $\mu$, we see that each vertex $v_i$ becomes part of the expansion frame at some stage of the construction of the frame. When new gradient pairs are processed, we may delete 1-cells from out frame, but 0-cells are never deleted. So, all vertices of $\mathcal{M}$ eventually become part of the expansion frame. See \autoref{fig:hollowcube1} and \autoref{fig:hollowcube2} for an example.\\

\textbf{Case 2:} Suppose $\mathcal{M}$ has some 1-dim. boundary faces. In this case, $\mathcal{G}_s(\mathcal{M})$ has the structure of a possibly disconnected semigraph. \footnote{Example: For a connected manifold $\mathcal{K}$, this may happen, for instance when, say one the 2-cells $A$ is connected to other 2-cells only by the medium of 0-cells while the 1-cells of $A$ are not shared with other cells. See \autoref{fig:disjoint1}, \autoref{fig:disjoint2} and \autoref{fig:disjoint3} for another such example.} If the manifold has a coboundary face, say $B_1$ then for the first connected component of $\mathcal{G}_s(\mathcal{M})$ in lines \ref{lst:line:bdryadd1}-\ref{lst:line:bdryadd2} of Procedure \emph{frameflow()} in \autoref{alg:frameflow}, we add the boundary-coboundary pair to vector field $\mathscr{V}$. Following that, the cells that are in the same connected component are added to the vector field in a manner similar to \emph{Case 1}. If there exists another connected component, then surely such a connected  component must have at least one coboundary face. In lines \ref{lst:line:loopdisjoint2} and \ref{lst:line:bdrydisjoint} of Procedure \emph{frameflow()} in \autoref{alg:frameflow}, we check if such a coboundary face exists. If it does exist then in the loop \ref{lst:line:loopdisjoint1}-\ref{lst:line:loopdisjoint2}, we process the every connected component in the same way as we process the very first one. Given the fact that all connected components of semigraph $\mathcal{G}_s(\mathcal{M})$ are processed, every 2-cell $\varpi_{\textsf{new}}$ in each of these components is part of vector field $\mathscr{V}$. Suppose $\varpi_{\textsf{new}}$ is paired with some 1-cells $\vartheta_{\textsf{new}}$ each time then, each time we delete $\vartheta_{\textsf{new}}$ from the the frame and add $\mathcal{B}(\mu\setminus\vartheta_{\textsf{new}})$ to the frame. When new gradient pairs are processed, we may delete 1-cells from out frame, but 0-cells are never deleted. Since every vertex is incident on at least one of the 2-cells in one of the connected components, we establish the fact that all vertices eventually become part of the expansion frame. See \autoref{fig:disjoint1}, \autoref{fig:disjoint2} and \autoref{fig:disjoint3} for such an example.          
\end{proof}

\begin{shnote} Every connected component of $\mathcal{G}_s(\mathcal{M})$ has a 2-cell which shares a 0-cell with a 2-cell from another connected component. If we imagine every connected component of $\mathcal{G}_s(\mathcal{M})$ as a vertex and every shared 0-cell as a hyperedge then we get a connected hypergraph that we write as $\mathcal{H}_c(\mathcal{M})$. We know that $\mathcal{H}_c(\mathcal{M})$ is connected because if this were not the case then clearly $\mathcal{M}$ itself will have more than one connected components. We call $\mathcal{H}_c(\mathcal{M})$ the component hypergraph of $\mathcal{M}$. 
\end{shnote}

 Consider a gradient vector field $\mathcal{V}$ assigned to a manifold$\mathcal{M}$. First consider the case when $\mathcal{M}$ is a manifold without boundary. Consider a critical cell $\alpha^{2}$. Note that before we do any expansions, $\mathcal{B}(\alpha)$ is our original $\underline{\widehat{\alpha}}$. Let $\sigma\succ\tau\prec\alpha$. If $\langle \tau,\sigma \rangle\in \mathcal{V}$ then we can consider it as an expansion along $\tau$ to the cell $\sigma$. Now, as per the definition of frame expansion, we add the set $ \{\mathcal{B}(\sigma)\setminus\{\tau\} \}$ to $\underline{\widehat{\alpha}}$ and we delete $\{\tau\}$ from $\underline{\widehat{\alpha}}$. Therefore, $$\underline{\widehat{\alpha}}=\underline{\widehat{\alpha}}- \{\tau \}+ \{\mathcal{B}(\sigma)\setminus\{\tau\}\}$$ But, this is same as saying, $\underline{\widehat{\alpha}}= \{\mathcal{B}(\alpha)\setminus\{\tau\} \}+ \{\mathcal{B}(\sigma)\setminus\{\tau\} \}$. Clearly, the sets $ \{\mathcal{B}(\alpha)\setminus\{\tau\} \}$ and $ \{\mathcal{B}(\sigma)\setminus\{\tau\} \}$ are themselves connected and both these sets have a common boundary namely $\mathcal{B}(\tau)$ (the boundary of $\tau$). Therefore, expansion along $\tau$ preserves the connectivity of $\underline{\widehat{\alpha}}$. Now, given our intermediate stage $\underline{\widehat{\alpha}}$, if any of the 1 dim. cells say $\vartheta_{i}^{1}\in\underline{\widehat{\alpha}}$ forms a gradient pair with a $2$ dim. cell $\varpi_{i}^{2}$ then by expansion we have, $$\underline{\widehat{\alpha}}=\underline{\widehat{\alpha}}-\{\vartheta_{i}\}+\{\mathcal{B}(\varpi_{i})\setminus\{\vartheta_{i}\}\}$$        Each time we observe that the boundary of $\underline{\widehat{\alpha}}$ and the boundary of $ \{\mathcal{B}(\varpi_{i})\setminus\{\vartheta_{i}\}\}$ is, in fact, the same as the boundary of $\vartheta_{i}$ namely $\mathcal{B}(\vartheta_{i})$. Therefore, upon expanding the frame $\underline{\widehat{\alpha}}$ along $\vartheta_{i}$, connectivity of $\underline{\widehat{\alpha}}$ is preserved and $\underline{\widehat{\alpha}}$ continues to be a 1-manifold without boundary. Note that owing to the manifold nature of $\mathcal{M}$, $\{\mathcal{B}(\varpi_{i})\setminus\{\vartheta_{i}\}\}$ never contains a $(1)$-dimensional face, say $\vartheta_{j}$ (where $j<i$), along which $\underline{\widehat{\alpha}}$ was previously expanded. Therefore, because $\mathcal{M}$ is a manifold, the two encounters of $\vartheta_{j}^{1}$ can happen in two different ways, namely: \\
\tbf{Case 1:} While constructing $\underline{\widehat{\alpha}}$ through expansions, any face $\vartheta_{j}^{1}$ can be encountered at most twice - once when it is included in $\underline{\widehat{\alpha}}$ as part of some $\{\mathcal{B}(\varpi_{k})\}$ ($k<j$) and a second time if and when we expand along $\vartheta_{j}^{1}$. Even as we expand along $\vartheta_{j}^{1}$, the two vertices of $\vartheta_{j}^{1}$ stay connected. \\
\tbf{Case 2:} The other possibility of two encounters for the face $\vartheta_{j}^{1}$ is when it is included in $\underline{\widehat{\alpha}}$ as part of some $\{\mathcal{B}(\varpi_{k})\}$ ($k<j$)  and some $\{\mathcal{B}(\varpi_{h})\}$ ($h<j$). In this case, we never expand along $\vartheta_{j}^{1}$. \\
If $\mathcal{M}$ has boundary then our start cell is a coboundary face and upon first expansion, the frame is a manifold with a boundary. Applying the reasoning above, for a given connected component of $\mathcal{G}_s(\mathcal{M})$, the frame of expansion restricted to a single connected component of $\mathcal{G}_s(\mathcal{M})$ is a connected 1-manifold without boundary. To arrive at the more general conclusion that the frames of expansion of all connected components of $\mathcal{G}_s(\mathcal{M})$, pieced together form a single connected 1-complex connecting all 0-cells of manifold $\mathcal{M}$, we have the lemma below:

\begin{lemma} \label{lem:singleframe} Given any sequence of elementary expansions, the frame of a critical cell $\alpha^{2}$ of a manifold $\mathcal{M}$ is always a connected set. Following the final expansion, the frame consists of a set of edges that connects all vertices of the complex.
\end{lemma}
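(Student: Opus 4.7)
The plan is to combine Lemma \ref{lem:staycon} with Lemma \ref{lem:allcon} through induction on the number of elementary expansions performed by \textbf{frameFlow()}. First I would establish the base case: before any expansion, the frame of the critical cell $\alpha^{2}$ equals $\mathcal{B}(\alpha)$, which is a connected 1-cycle since it is the topological boundary of a single 2-cell. For the inductive step, I would assume the frame is connected after $k$ expansions and consider the $(k+1)$-th expansion, performed along some edge $\tau^{1}\in\underline{\widehat{\alpha}}$ paired with a 2-cell $\varpi^{2}$, which replaces $\tau$ in the frame with $\mathcal{B}(\varpi\setminus\tau)$. By Note \ref{note:2paths}, the two endpoints of $\tau$ are still linked by a path in $\mathcal{B}(\varpi\setminus\tau)$, so any old connecting path that used $\tau$ can be rerouted through this alternative path; Lemma \ref{lem:staycon} packages exactly this observation, giving that the frame remains connected after the expansion.

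The main obstacle is handling 2-manifolds whose semigraph $\mathcal{G}_s(\mathcal{M})$, as described in Note \ref{note:semigraph}, is disconnected. In that regime, \textbf{frameFlow()} starts afresh from a coboundary 2-cell in each connected component, so the above induction only proves that the frame restricted to a single semigraph-component is connected. To glue these local frames into a single connected set, I would invoke the component hypergraph $\mathcal{H}_c(\mathcal{M})$ introduced in the note immediately following Lemma \ref{lem:allcon}. Since $\mathcal{M}$ itself is connected, $\mathcal{H}_c(\mathcal{M})$ is connected, meaning any two semigraph-components are linked by a chain of shared 0-cells. Each such shared 0-cell appears in the local frame of both incident components, because the expansion rule $\underline{\widehat{\alpha}} \leftarrow \underline{\widehat{\alpha}} - \{\tau\} + \mathcal{B}(\varpi\setminus\tau)$ only ever deletes 1-cells and never 0-cells. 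Concatenating local connecting paths across shared vertices then yields a connecting path between any two vertices of the global frame.

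Finally, to obtain the spanning property I would invoke Lemma \ref{lem:allcon}, which guarantees that once all expansions are processed, every 0-cell of $\mathcal{M}$ has been added to $\underline{\widehat{\alpha}}$. Combined with the global connectivity established in the previous paragraph, this shows that the terminal frame forms a connected 1-complex whose edges cover every vertex of $\mathcal{M}$, which is precisely the two assertions of the lemma. The expected sticking point is the gluing argument across disconnected semigraph components; the within-component induction is essentially a repackaging of Lemma \ref{lem:staycon} and Note \ref{note:2paths}, but the cross-component step genuinely relies on the non-trivial observation that the component hypergraph $\mathcal{H}_c(\mathcal{M})$ inherits connectedness from $\mathcal{M}$ and that shared 0-cells are never deleted during expansion.
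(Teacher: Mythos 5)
Your proposal is correct and follows essentially the same route as the paper's own proof: within-component connectivity preserved under expansions via Lemma \ref{lem:staycon} (itself resting on Note \ref{note:2paths}), coverage of all 0-cells via Lemma \ref{lem:allcon}, and gluing across components of $\mathcal{G}_s(\mathcal{M})$ by lifting a path in the component hypergraph $\mathcal{H}_c(\mathcal{M})$ to concatenated frame-internal paths through shared 0-cells. Your explicit base case $\mathcal{B}(\alpha)$ and the remark that expansions delete only 1-cells, never 0-cells, simply make precise steps the paper leaves implicit.
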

\begin{proof}
Consider without loss of generality, that $\mathcal{M}$ is a manifold without boundary. Then $\mathcal{G}_s(\mathcal{M})$ has a single connected component. Every vertex within the frame that was previously connected, stays connected  by \autoref{lem:staycon}. Since, for a manifold without boundary, the frame always has a single connnected component at every stage of expansion, and since by \autoref{lem:allcon}, all vertices become part of the frame, we arrive at the conclusion that all vertices of the frame form a single connected component at the conclusion of all expansions.\\
The other case, when $\mathcal{G}_s(\mathcal{M})$ has several connected components, we first observe that the frames of each of the connected component stays connected by the same logic as in case of expansions of manifolds without boundary. Also, we observe that in such cases, every connected component will have a 2-cell which is connected to another connected component via a common 0-cell. In fact, if there exist vertices $v_a$ and $v_b$ in two different connected components $C_a$ and $C_b$. $C_a$ and $C_b$ may be interpreted as vertices in the hypergraph then we can first determine a path between $C_a$ and $C_b$ within the component hypergraph $\mathcal{H}_c(\mathcal{M})$. Now, every vertex $C_i$ in the path is a connected component and every hyperedge is a shared 0-cell $v_i$. If the path is written as $C_0,v_1,C_1,\dots,v_i,C_i,\dots v_n,C_n$ where $C_0=C_a$ and $C_n=C_b$. Then for every $C_i$ $1<i<n$, we can determine an internal path (part of the expansion frame) in graph $\mathcal{G}_s(\mathcal{M})$ between $v_{i}$ and $v_{i+1}$. Finally, in graph $\mathcal{G}_s(\mathcal{M})$, we can find a path between $v_a$ and $v_1$ within component $C_0$ and a path between $v_n$ and $v_b$ within component $C_n$ as parts of expansion frames within those components. If we piece together each of the paths from expansion frames of various components of $\mathcal{G}_s(\mathcal{M})$ along the path in the hypergraph $\mathcal{H}_c(\mathcal{M})$, we get a path connecting any two vertices $v_a$ and $v_b$ such that every edge in the path is part of the expansion frame. From this we conclude that all vertices in the complex are connected to each other through edges that lie entirely in the expansion frame. In other words, the expansion frame is a single connected component that connects all vertices of the complex. 
\end{proof}

\begin{lemma} \label{lem:c2b2} Applying the frame based algorithm on a 2-manifold gives us: \[c_2 = \beta_2\] 
\end{lemma}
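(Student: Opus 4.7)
The plan is to combine the weak Morse inequality $c_2 \geq \beta_2$ from \autoref{thm:weakmorse} with a matching upper bound obtained by carefully tracking which 2-cells remain critical (i.e., unmatched) after \textbf{frameFlow()} executes. The reverse inequality $c_2 \leq \beta_2$ then forces equality.

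First I would argue that \textbf{frameFlow()} amounts to a breadth-first traversal of the semigraph $\mathcal{G}_s(\mathcal{M})$ (see \autoref{note:semigraph}), processing one connected component at a time. A 2-cell enqueued into $\mathcal{Q}$ via \textbf{addPairToVectorField()} is by construction already paired with one of its 1-cell faces, and a 2-cell $\varpi$ dequeued from $\mathcal{B}^2$ as the start of a component is immediately paired with one of its boundary 1-cell faces in lines \ref{lst:line:bdryadd1}--\ref{lst:line:bdryadd2}. Consequently, a 2-cell can remain unmatched only when \textbf{frameFlow()} is forced to initiate a component by dequeuing from $\mathcal{M}^2$ directly, which happens exactly when the component contains no coboundary 2-cell.

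Next I would show that if $\mathcal{M}$ has nonempty boundary then every connected component of $\mathcal{G}_s(\mathcal{M})$ must contain at least one coboundary 2-cell. A hypothetical component $C$ without a coboundary cell would have every 1-cell of $C$ shared by exactly two 2-cells of $C$, making $C$ a closed sub-surface whose only contact with the rest of $\mathcal{M}$ is through 0-cells; but a small neighborhood of any such shared 0-cell cannot be homeomorphic to a Euclidean disk, contradicting the manifold hypothesis. Hence in the bounded case every 2-cell gets matched, giving $c_2 = 0$, and since $\beta_2 = 0$ for a 2-manifold with boundary the lemma holds. For a connected closed 2-manifold, $\mathcal{G}_s(\mathcal{M})$ is itself a single component with no coboundary cells, so exactly one starting 2-cell is pulled from $\mathcal{M}^2$ and left unmatched, yielding $c_2 = 1$, which matches $\beta_2$ on the orientable side.

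The hardest step I anticipate will be the bookkeeping underlying the BFS argument: verifying that for every 1-cell face $\vartheta_i$ of each dequeued 2-cell the scan in \textbf{frameFlow()} in fact locates an unmatched coface $\mu_i$ whenever such a coface exists and has not yet been visited, so that no non-start 2-cell inside a component slips through unmatched. A secondary subtlety is to confirm that the outer loop cleanly enumerates all coboundary-bearing components before any residual closed sub-surface components are processed, so that the critical-cell count is precisely one per closed sub-surface component and zero per bounded component, which in turn matches $\beta_2$.
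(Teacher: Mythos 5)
Your proposal is correct and takes essentially the same route as the paper: both reduce \textbf{frameFlow()} to a breadth-first traversal of the semigraph $\mathcal{G}_s(\mathcal{M})$, observe that exactly the start 2-cell of a closed manifold stays unmatched (giving $c_2=1$) while in the bounded case every component is seeded from a coboundary 2-cell paired with a boundary face (giving $c_2=0$), and conclude $c_2=\beta_2$; your extra argument that a bounded manifold cannot have a component of $\mathcal{G}_s(\mathcal{M})$ without coboundary cells (via the non-disk neighborhood of a shared 0-cell) fills in a step the paper merely asserts. Your hedge ``on the orientable side'' is apt but note it flags an imprecision in the paper's own proof rather than a defect of your approach: for a closed non-orientable surface the traversal still yields $c_2=1$, which equals $\beta_2$ only when Betti numbers are taken with $\mathbb{Z}_2$ (or other suitable field) coefficients, a qualification the paper's ``clearly $\beta_2=1$'' silently omits.
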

\begin{proof}\textbf{Case1:} $\beta_2=1$ Suppose the 2-manifold does not have a boundary. Then clearly $\beta_2=1$. Now we will prove that in this case, $c_2$ also equals $1$. Recall that $\mathcal{G}_s(\mathcal{M})$ takes the structure of a simple connected graph and the procedure frameflow is equivalent to a breadth first traversal that begins with a start cell $\varpi$, where $\varpi$ is not included in any of the gradient pairs. However, subsequently every neighboring 2-cell is paired with a 1-cell and added to a queue. The neighbors of the dequeued cell are then scanned and if unpaired, they are paired with the connecting 1-cell as before. This process is continued till all 2-cells are exhausted (which happens at the conclusion of the breadth first traversal). Hence all 2-cells except the start 2-cell $varpi$ form a gradient pair with some 1-cell, giving us $c_2=1$. \\
\textbf{Case2:} $\beta_2=0$ Now, consider the case when the 2-manifold has a boundary. So, we have $\beta_2=0$ and we will prove that $c_2$ also equals $0$. Note that, in this case, $\mathcal{G}_s(\mathcal{M})$ has one or more connected components s.t. each of the connected components has at least one coboundary face. For every component a coboundary face is selected as a start cell and paired with a boundary face to give a gradient pair. Subsequently, as before neighboring 2-cells are paired with connecting 1-cells if they haven't been paired before. Newly paired 2-cells are queued and this process continues till all 2-cells of the connected component are exhausted. In other words, every 2-cell of every connected component is part of a gradient pair giving us $c_2=0$. Hence proved.                 
\end{proof}

\begin{shnote}
Let $\mathcal{B}^1$ be the coboundary of residual complex $\mathcal{M}$. $\mathcal{B}^1[i]$ is part of the coboundary $\mathcal{B}^1$ that intersects with ear $\mathcal{E}_i$. i.e. $\mathcal{B}^1[i]=\mathcal{B}^1\cap\mathcal{E}_i$.
\end{shnote}

\begin{lemma} \label{lem:c0b0} If the complex is made up of a single connected component, then the frame based algorithm gives us $c_0=1$\begin{footnote} {The case when the complex is made up of several connected components can easily be dealt with by applying the algorithm independently to each of the components. In that case $c_0=\mathfrak{C}$ where $\mathfrak{C}$ is the number of connected components}\end{footnote}.
\end{lemma}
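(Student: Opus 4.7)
The plan is to use the connectedness hypothesis together with \autoref{lem:singleframe} to show that the residual 1-complex handed to \textbf{earDecompose} inside \textbf{mainFrame()} is itself connected, and then to analyze ear by ear exactly how \textbf{frameFlow()} matches its 0-cells against its 1-cells, leaving precisely one 0-cell unmatched. First I would invoke \autoref{lem:singleframe}: after the 2-cell loop of \textbf{mainFrame()}, the still-unmatched 1-cells together with every 0-cell of $\mathcal{M}$ form a connected 1-subcomplex $R\subseteq\mathcal{M}$ containing all vertices of $\mathcal{M}$. Connectedness of the ambient $\mathcal{M}$ is exactly the hypothesis of the present lemma and is what keeps $R$ a single connected component. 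Consequently, the ear decomposition $R=\mathcal{E}_{1}\cup\mathcal{E}_{2}\cup\cdots\cup\mathcal{E}_{m}$ may be taken so that $\mathcal{E}_{1}$ is either a lone vertex (when $R$ is a tree) or a cycle (otherwise), and every subsequent $\mathcal{E}_{i}$ is an open path whose two endpoints already lie in $\mathcal{E}_{1}\cup\cdots\cup\mathcal{E}_{i-1}$ while its interior 0- and 1-cells are fresh.

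Next I would induct on $i$. For the base $i=1$: if $\mathcal{E}_{1}$ is a single vertex, \textbf{frameFlow()} sees no 1-cell to process and that vertex survives unmatched. If $\mathcal{E}_{1}$ is a cycle, then $\mathcal{B}^{1}[1]=\emptyset$ and \textbf{frameFlow()} dequeues a seed 1-cell $\varpi$ directly from $\mathcal{M}^{1}$; reusing the start-cell argument from the proof of \autoref{lem:allcon} (in which the start cell itself never enters a gradient pair), the seed $\varpi$ stays unmatched while the BFS pairs every other vertex of $\mathcal{E}_{1}$ with an incident 1-cell, leaving exactly one vertex of $\mathcal{E}_{1}$ as the unique critical 0-cell contributed by $\mathcal{E}_{1}$. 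For the inductive step $i>1$: since both endpoints of the open path $\mathcal{E}_{i}$ were already matched while processing previous ears, the set $\mathcal{B}^{1}[i]$ contains the 1-cells of $\mathcal{E}_{i}$ incident on those endpoints, so \textbf{frameFlow()} launches its BFS from such a boundary edge. The BFS then sweeps along $\mathcal{E}_{i}$, pairing each newly visited interior 0-cell with a fresh incident 1-cell; because an open path with $\ell$ interior vertices carries $\ell+1$ new 1-cells, this matches every interior 0-cell but leaves one 1-cell of $\mathcal{E}_{i}$ unmatched and produces no new unmatched 0-cell. Summing over all ears, exactly one 0-cell (the one identified while processing $\mathcal{E}_{1}$) remains unmatched, which yields $c_{0}=1$.

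The main obstacle will be the careful bookkeeping in the second stage. I need to verify that (i) when $\mathcal{E}_{1}$ is a cycle, the seed 1-cell $\varpi$ really is protected from being re-matched during the ensuing BFS, and that the BFS does reach every 0-cell of the cycle, so that exactly one cycle vertex is left unmatched; and (ii) for $i>1$, the BFS initiated at a boundary-edge of $\mathcal{E}_{i}$ reaches every interior 0-cell, while the endpoint 0-cells shared with earlier ears are never re-matched by \textbf{addPairToVectorField()}. Both points ultimately reduce to checking that the guard ``if $\tau$ isn't already matched'' interacts correctly with the running deletions from $\mathcal{M}^{d-1}$ and $\mathcal{M}^{d}$, so that no interior vertex is accidentally skipped and no already-matched cell is accidentally re-used during a subsequent ear.
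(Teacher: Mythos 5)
Your overall strategy is the paper's own (connected residual frame via \autoref{lem:singleframe}, then an ear-by-ear induction in which the first ear contributes the single critical 0-cell and later ears contribute none), but the structural claim your induction rests on is false, and it fails in exactly the cases the lemma most needs to cover. You assume the residual 1-complex $R$ admits an ear decomposition in which $\mathcal{E}_1$ is a cycle (or a lone vertex when $R$ is a tree) and every later $\mathcal{E}_i$ is an open path with \emph{both} endpoints already in $\mathcal{E}_1\cup\cdots\cup\mathcal{E}_{i-1}$. Attaching a path whose two endpoints lie in the current connected subgraph always creates a cycle, so every edge produced by such a decomposition lies on a cycle; these decompositions exist precisely for 2-edge-connected graphs. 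But $R$ is a connected spanning subgraph containing \emph{all} 0-cells of $\mathcal{M}$, and it is generically full of bridges. Concretely, for a triangulated 2-sphere the 2-dimensional stage leaves one critical 2-cell, so the unmatched 1-cells number $E-(F-1)=V-1$ by Euler's formula, and $R$ is literally a spanning tree. Your lone-vertex patch does not rescue this: starting from a single vertex and attaching only doubly-attached paths can never produce a tree edge, so for a tree $R$ your decomposition does not exist at all and your induction never matches a single 0-cell. The same obstruction appears for any manifold whose frame has a bridge, i.e., essentially all of them once the complex is large relative to $\beta_1$.

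The missing case is the one the paper's induction is actually organized around: \emph{pendant} (singly-attached) ears, open paths with one endpoint $v$ on a previous ear and a free end. Such an ear carries $\ell$ fresh 0-cells (interior vertices plus the free endpoint) and $\ell$ fresh 1-cells; the free endpoint is a 0-dimensional boundary face, so \textbf{frameFlow()} seeds at the boundary--coboundary pair there and sweeps toward $v$, matching each fresh 0-cell to a fresh 1-cell, and when the final 1-cell reaches $v$, the inductive hypothesis says $v$ is either already matched or is the unique critical 0-cell, and the guard in \textbf{addPairToVectorField()} protects it either way --- no new critical 0-cell arises. Your doubly-attached case is correct as far as it goes (it leaves one unmatched 1-cell, a critical 1-cell, and no critical 0-cell), and closed ears are handled, as in the paper, by first making one 1-cell critical and proceeding as for open ears; adding the pendant case restores the full induction and recovers the paper's argument. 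Your bookkeeping concerns (i) and (ii) are the right ones and resolve as you expect, but they are moot until the decomposition itself covers the frames that the algorithm actually produces.
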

\begin{proof} From \autoref{lem:singleframe}, we know that the frame of expansion consists of a single connected component that connects all 0-cells in the manifold. This frame is divided into $\mathscr{N}$ several ears say $\mathcal{E}_i$. Every ear is a 1-dimensional manifold. 
Suppose that we have an open ear then we have 1-dimensional coboundary face in such a ear which we pair with a 0-dimensional boundary face. Subsequently, we follow a path which matches the incident unpaired 0-cell to a neighboring 1-cell and we keep doing this until all 1-cells of the ear are exhausted.   
Now suppose that we have a closed ear. Then we remove one of the 1-cells from the ear (i.e. make it critical). This disconnects the ear into two connected components. We treat these two components of the ears as separate and proceed as in case of open ears.
We now make an inductive argument to prove that the first ear leaves a critical 0-cell. Subsequent addition of ears do not add any criticalities. To see this consider the base case in which we design the flow for the first ear. Here, the flow stops when all 1-cells are exhausted. In this case, for the final 1-cell $\mu$, there is one 0-cell which gets paired with $\mu$ and another incident 0-cells which remains unpaired. It is this 0-cell that becomes the sole critical 0-cell. For induction consider the inductive hypothesis that k-ears have been attached and the number of critical cells remains 1. Now suppose that the (k+1)th ear is attached. If the (k+1)th ear is open then the flow stops with a 1-cell on which one of the incident 0-cells $v_i$  belongs to a ear $\mathcal{E}_i$ where $i<(k+1)$. Either $v_i$ is the sole critical 0-cell or it is paired to another 1-cell belonging to $\mathcal{E}_i$ (by inductive hypothesis). Now, suppose that the (k+1)th ear is closed. Then having detached a 1-cell (which is made critical), we have two disconnected components. For each of the connected components, the flow emanating from subsequent pairing of 0-cells to 1-cells stops when a 1-cell is incident on a 0-cell $v_j$ belonging to a ear $\mathcal{E}_j$ where $j<(k+1)$. Once again by inductive hypothesis either $v_j$ is the sole critical 0-cell or it is paired to another 1-cell belonging to $\mathcal{E}_j$. From this we conclude that $c_0=1$ on attachment of all ears.  
\end{proof}

\begin{theorem}\label{thm:equality} For the frame-based vector field design algorithm, each Morse number equals the Betti number. i.e.
\[c_i = \beta_i\] 
\end{theorem}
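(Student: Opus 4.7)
The plan is to close the loop by combining the two dimension-specific lemmas already in hand with the Euler-characteristic half of Forman's weak Morse inequalities (the equality part of \autoref{thm:weakmorse}). Since we are working on a 2-manifold, the only Morse/Betti numbers in play are indexed by $i \in \{0,1,2\}$, so showing $c_i = \beta_i$ reduces to verifying the identity in three dimensions; two of these have already been handled explicitly.

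First I would invoke \autoref{lem:c2b2} to conclude $c_2 = \beta_2$: this gives equality at the top dimension, whether or not the 2-manifold has boundary. Next I would invoke \autoref{lem:c0b0} (assuming, as the lemma notes, a single connected component; the general case follows componentwise) to conclude $c_0 = 1 = \beta_0$, which gives equality at the bottom dimension. At this point only the middle dimension remains unaccounted for.

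The key step is then to apply the Euler-characteristic identity from \autoref{thm:weakmorse}(B), which for $n=2$ reads
\begin{equation*}
c_0 - c_1 + c_2 \;=\; \beta_0 - \beta_1 + \beta_2 \;=\; \chi(\mathcal{M}).
\end{equation*}
Substituting $c_0 = \beta_0$ and $c_2 = \beta_2$, the outer terms cancel and we obtain $-c_1 = -\beta_1$, i.e.\ $c_1 = \beta_1$. Together with the two lemmas, this establishes $c_i = \beta_i$ for every $i \in \{0,1,2\}$, which is the full claim since all higher Morse and Betti numbers vanish in dimension $2$.

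I do not anticipate any significant obstacle here: all the hard work (showing that the frame-based construction truly forces the top and bottom Morse numbers to match the corresponding Betti numbers) has already been done in \autoref{lem:c2b2} and \autoref{lem:c0b0}, resting on the connectivity and spanning properties of the expansion frame proved in \autoref{lem:staycon}, \autoref{lem:allcon}, and \autoref{lem:singleframe}. The one minor point to state cleanly is that the Euler-characteristic identity is a genuine equality (not merely an inequality), so it pins down $c_1$ from above and below simultaneously; one should also remark, for completeness, that the weak Morse inequality $c_1 \geq \beta_1$ is consistent with (and subsumed by) this derivation, so nothing additional is needed.
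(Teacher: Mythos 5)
Your proposal is correct and follows essentially the same route as the paper's own proof: both derive $c_2=\beta_2$ and $c_0=\beta_0$ from \autoref{lem:c2b2} and \autoref{lem:c0b0}, then use the Euler-characteristic equality in \autoref{thm:weakmorse} to cancel the outer terms and force $c_1=\beta_1$. Your version is somewhat more explicit about the substitution step and the componentwise reduction for $\beta_0$, but the argument is the same.
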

\begin{proof} From \autoref{lem:c2b2} and \autoref{lem:c0b0}, we have $c_2=\beta_2$ and $c_0=\beta_0$ respectively. Now, using \autoref{eq:weakmorse} in \autoref{thm:weakmorse}, we have $c_1=\beta_1$. Thus we have $c_i=\beta_i$ for all $i$.
\end{proof}

\subsection{Discussion on Complexity}

Finding coboundary of $\mathcal{M}^{2}$ can be found in linear time
by going through all 2-cells in $\mathcal{M}^{2}$. Finding coboundary
of ears of $\mathscr{M}$ can be found in constant time by mainting
a proper data structure. The ear decomposition of residual complex
$\mathscr{M}$ (which has the structure of a graph) itself takes linear
time. 

Adding a gradient pair to a vector field takes constant time. The
queueing, dequeueing and deletion operations also can be done in constant
time by maintaining appropriate data structures. 

The only nontrivial procedure in the algorithm is \textsf{frameflow}().
Now the\textsf{ frameflow}() procedure can be construed as breadth
first traversal on a semi-graph. We apply this procedure once on $\mathcal{M}^{2}$
and once on each of the ears of $\mathscr{M}$. When traversals from
all ears are counted, we observe that every edge of $\mathscr{M}$
is encountered only once and every vertex $v_{i}$ is encountered
$\mathcal{D}(v_{i})$ number of times where $\mathcal{D}(\cdot)$
indicates degree of a vertex. So, if we sum over all vertices and
edges, the total complexity of \textsf{frameflow}() when applied over
$\mathscr{M}$ is linear in the number of edges of $\mathscr{M}$.
Hence, we see that the design of optimal discrete gradient vector
field using expansion frames takes linear time.

\section{Pseudo-optimality of Random Morse functions \label{sec:Pseudo-optimality-of-Random}}

In this section, we establish the surprising potency of critical cell
cancellations in case of 2-manifolds by using frames. 

\begin{definition}[Pseudo-optimal Vector Field] We define a DGVF to be pseudo-optimal if the optimal DGVF can be obtained from it merely via critical cell cancellations.   
\end{definition}

\begin{definition}[Stable, Unstable Manifolds] The stable manifold of a critical cell $\alpha^{q}$ are all the non-critical cells of dimension $q$ and $q+1$ with gradient paths ending at $\alpha^{q}$. The unstable manifold of a critical cell $\alpha^{q}$ are all the non-critical cells of dimension $q$ and $q-1$ with gradient paths starting at $\alpha^{q}$ and ending at that particular non-critical cell.
\end{definition}

\begin{algorithm}
\begin{algorithmic}[1]      

\Procedure{kingRev}{$\mathfrak{K}^{q},\mathcal{M},\mathcal{C},\mathscr{V},q$}	
	\Repeat \label{lst:line:revgradst}
		\LState{Suppose critical cells $\sigma^{q}$ and $\mathfrak{K}^{q}$ have gradient paths to/from saddle $\gamma^{1}$.} \label{lst:line:revgradst}
		\LState{Subroutine \tbf{sharedSaddle()} finds such a pair $\{\gamma,\sigma\}$ for given $\mathfrak{K}$.}
		\LState{If $\gamma\neq\tsf{NIL}$, then cancel critical pair $(\gamma,\sigma)$} \label{lst:line:revgrad}
	\Until{$(\gamma \neq \tsf{NIL})$} \label{lst:line:revgrad}
		\LState{If $q=2$ AND $\mathfrak{K}^{q}$ has a unique path to $\phi^{1}$, then cancel critical pair $(\phi,\mathfrak{K})$ } \label{lst:line:revgradbd}
\EndProcedure

\Statex

\Procedure{fixBdry}{$d,p,\mathcal{M},\mathcal{C}^{p},\mathcal{B}^d$}
	\ForAll{$1\leq i\leq |\mathcal{B}^d| $} \label{lst:line:looplonelybegin}
		\LState{Let $B_i\colonequals \mathcal{B}^d[i]$ and let $ b_{ij}$ be a boundary face of $B_i$.}
		\If{$\langle b_{ij},B_i \rangle \notin \mathcal{V} \tsf{ AND } b_{ij} \tsf{ is critical}$}
			\LState{Let $\langle \theta_i,B_i\rangle$ be a gradient pair}
			\If{$\theta_i\neq\tsf{NIL}$ and $\theta_i$ is not a boundary face of $B_i$}	
				\LState{Find a gradient path from some critical cell $\alpha^{d}$ to $\theta_{i}$ and reverse it.}
			\EndIf
			\LState{Add gradient pair $\langle b_{ij},B_i\rangle $ to vector field $\mathscr{V}$}
		\EndIf
	\EndFor \label{lst:line:looplonelyend}
\EndProcedure

\Statex

\Procedure{findKing}{$d,p,\mathcal{M},\mathcal{C}^{p},b^q,i$}
	\LState {\algorithmicif $\left(p=2 \tsf{ AND } \mathcal{C}^{p}\neq \tsf{NIL}\right) \tsf{ OR } (p=1 \tsf{ AND } i=1)$ \algorithmicthen\   $\tbf{remFrom}(\mathcal{C}^p,\mathfrak{K})$;} \label{lst:line:kingstart}
	\LState {\algorithmicelse\ \algorithmicif\ $p=1$ \algorithmicthen\ $\mathfrak{K}\colonequals b^q[1]$;\quad $\tbf{remFrom}(b^q,\mathfrak{K})$;}
	\LState {\algorithmicelse\ $\mathfrak{K}\colonequals\tsf{NIL}$;}
	\LState \algorithmicend\ \algorithmicif \label{lst:line:kingend}
	\LState{$\mathfrak{K}\colonequals\tbf{selectRandomly}(\mathcal{C}^{p})$; \;\tbf{return }$\mathfrak{K}$;}
\EndProcedure

\Statex

\Procedure{processComplex}{$\mathcal{M},i,d,p$}
	\LState{$\mathcal{C}[1:d]\leftarrow\tbf{identifyCritical}(\mathcal{M},\mathscr{V})$}
	\LState{$\tbf{findBdry}()$ finds $\mathcal{B}^d$ \& $b^{d-1}$ the cobdry. and bdry. of $\mathcal{M}$ resp.}
	\LState{$\tbf{fixBdry}(d,p,\mathcal{M},\mathcal{C}^{d},\mathcal{B}^d)$}
	\While{($\mathfrak{K}=\tbf{findKing}(d,p,\mathcal{M},\mathcal{C}^{p},b^{d-1},i)\neq \tsf{NIL}$} \label{lst:line:callfirstking}
		\LState{$\tbf{kingRev}(\mathfrak{K},\mathcal{M},\mathcal{C},\mathscr{V},p)$}
	\EndWhile \label{lst:line:callfirstkingend}
\EndProcedure

\Statex

\Procedure{kingFlow}{$\mathfrak{M},d,\mathscr{V}$}	
	\LState{Divide  $\mathfrak{M}$ into manifolds $\mathcal{M}_{1},\mathcal{M}_{2},\dots,\mathcal{M}_{K}$ s.t. $\mathcal{M}_{i}\cap\mathcal{M}_j$ is 0-dimensional.}
\LState {\algorithmicfor\ $1\leq i\leq K $ \algorithmicdo\ \quad $\tbf{processComplex}(\mathcal{M}_i,i,2,2)$\quad \algorithmicend\ \algorithmicfor} 							\label{lst:line:compfor}
	
 \label{lst:line:compforend}
\LState{$\mathcal{E}[1:\tsf{numEars}]\colonequals\tbf{earDecompose}(\mathscr{M}$)}
\LState {\algorithmicfor\ $1 \leq  i \leq \tsf{numEars}$ \algorithmicdo\ \quad $\tbf{processComplex}(\mathcal{E}[i],i,1,0)$ \quad \algorithmicend\ \algorithmicfor}
\EndProcedure

\end{algorithmic}

\protect\caption{Optimal DGVF Redesign Using Critical Cell Cancellations}

\label{alg:optcancel}
\end{algorithm}

\begin{shnote}
Given a connected pseudomanifold complex $\mathfrak{M}$, divide  $\mathfrak{M}$ into several connected components $\mathcal{M}_{1},\mathcal{M}_{2},\dots,\mathcal{M}_{K}$ s.t. $\mathcal{M}_{i}\cap\mathcal{M}_j$ is 0-dimensional. i.e. any of the two manifolds (with boundary) $\mathcal{M}_{i},\mathcal{M}_{j}$ may intersect only along points (but not along edges). If $\mathfrak{M}$ is a manifold without boundary then $\mathfrak{M}$ will have only one connected component. $\mathcal{M}_i$ are essentially the connected components of the semigraph $\mathcal{G}_s(\mathcal{M})$ defined in \autoref{note:semigraph}.
\end{shnote}

\begin{lemma} \label{lem:revc2nobd} If $\mathfrak{M}$ is a manifold without boundary then after invoking the procedure \textbf{kingRev()}, we obtain a connected expansion frame. Moreover, $c_2=\beta_2=1$.  
\end{lemma}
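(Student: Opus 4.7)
The plan is to split the lemma into two claims: (i) after \textbf{kingRev()} terminates, the king $\mathfrak{K}^{2}$ is the unique critical 2-cell, so $c_{2}=1$; and (ii) the expansion frame of $\mathfrak{K}$ is connected. Since $\mathfrak{M}$ is a closed connected 2-manifold we have $\beta_{2}=1$ (as in Case~1 of the proof of \autoref{lem:c2b2}), so claim (i) immediately yields $c_{2}=\beta_{2}=1$. Claim (ii) will then fall out of \autoref{lem:allcon} and \autoref{lem:singleframe}: with $\mathfrak{K}$ the only critical 2-cell, its expansion set $\widehat{\mathfrak{K}}$ absorbs every other 2-cell, and therefore its frame $\underline{\widehat{\mathfrak{K}}}$ is a single connected 1-complex spanning every vertex. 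So the real work is (i).

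For (i), I would introduce the \emph{unstable manifold} $U(\alpha)$ of each critical 2-cell $\alpha$: the cell $\alpha$ together with all 2-cells and intermediate 1-cells that sit on some gradient path emanating from $\partial\alpha$ in the current 2-flow of $\mathscr{V}$. The key structural fact I would establish is that the collection $\{U(\alpha_{1}),\dots,U(\alpha_{c_{2}})\}$ covers the 2-skeleton of $\mathfrak{M}$, and that distinct $U(\alpha_{i})$'s can meet only along critical 1-cells --- any 1-cell matched upward to some 2-cell $\beta^{2}$ is forced into the interior of the single region containing $\beta$, so it cannot lie on a shared boundary. From this I form the dual adjacency graph $\mathcal{A}$ whose vertices are the critical 2-cells and whose edges record shared critical-saddle boundaries. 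Connectedness of $\mathfrak{M}$ then forces $\mathcal{A}$ to be connected, since otherwise the union of the regions for one component of $\mathcal{A}$ would constitute a nontrivial clopen sub-2-manifold of $\mathfrak{M}$.

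With these tools in place, whenever $c_{2}>1$ the king $\mathfrak{K}$ has a neighbour in $\mathcal{A}$, i.e.\ a shared critical saddle $\gamma^{1}$ with gradient paths from both $\mathfrak{K}$ and some $\sigma^{2}\neq\mathfrak{K}$ --- exactly the pair returned by \textbf{sharedSaddle()}. Forman-style cancellation of $(\gamma,\sigma)$ reverses a gradient path from $\partial\sigma$ to $\gamma$ and makes both cells matched; at the level of unstable manifolds it absorbs $U(\sigma)$ into $U(\mathfrak{K})$ and preserves the covering and connectedness invariants on the reduced set of critical 2-cells. Iteration strictly decreases the finite quantity $c_{2}$, and by the invariant it can only terminate at $c_{2}=1$. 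The main obstacle I expect lies in the previous paragraph: a rigorous proof that boundary 1-cells between distinct unstable manifolds must be critical (one must chase matchings carefully to rule out edge cases where a matched 1-cell straddles two regions) and a clean topological argument translating connectedness of $\mathfrak{M}$ into connectedness of $\mathcal{A}$. A secondary subtlety is checking that the cancellation path picked by \textbf{sharedSaddle()} is unique enough for Forman's cancellation theorem to apply --- if $\sigma$ admits several paths to $\gamma$ one has to argue that only one of them needs to be reversed and that reversal still yields a valid DGVF, which in dimension 2 follows from the tree-like structure of $U(\sigma)$ on $\partial\sigma$.
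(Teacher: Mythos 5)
Your high-level strategy coincides with the paper's: repeatedly locate a critical saddle $\gamma^{1}$ on the boundary of the king's unstable manifold that also receives a gradient path from a second critical 2-cell $\sigma$, cancel the pair $(\gamma,\sigma)$ so that $U(\sigma)$ is absorbed into $U(\mathfrak{K})$, iterate until $c_{2}=1$, and then appeal to the frame-connectivity results (you cite \autoref{lem:allcon} and \autoref{lem:singleframe}; the paper closes with Case 1 of \autoref{lem:c2b2}, which amounts to the same thing). However, your load-bearing structural claim --- that distinct unstable manifolds meet only along \emph{critical} 1-cells --- is false, not merely hard to prove as you suspected. Your argument excludes only 1-cells matched \emph{upward} into a 2-cell (2-flow cells) from shared boundaries; it does not exclude 1-cells matched \emph{downward} with one of their vertices, i.e.\ 1-cells participating in the 1-flow. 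Such 1-cells can perfectly well lie on the interface between two unstable manifolds. Consequently your adjacency graph $\mathcal{A}$ may be missing the edges your argument needs, \textbf{sharedSaddle()} could a priori return \textsf{NIL} while $c_{2}>1$, and your connectivity argument for $\mathcal{A}$ does not go through (note also that the union of the regions over one component of $\mathcal{A}$ is a submanifold \emph{with boundary}, not a clopen subset, so the clopen dichotomy is not the right contradiction even once the edge claim is repaired).

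The missing idea --- and exactly how the paper closes this hole --- is an acyclicity argument applied to boundary circles rather than a criticality claim about all interface 1-cells. Since $\mathfrak{M}$ is a closed 2-manifold, the boundary of the king's unstable manifold is a disjoint union of circles (the boundary of a manifold with boundary is a closed manifold of one lower dimension). Every 1-cell on such a circle is either critical or in the 1-flow, and if \emph{all} 1-cells of a circle were in the 1-flow they would constitute a nontrivial closed $\mathcal{V}$-path, i.e.\ a directed cycle in the oriented Hasse graph, contradicting that $\mathscr{V}$ is a gradient vector field. Hence each boundary circle carries at least one critical 1-cell $\gamma$; the partner critical 2-cell is then found by the deterministic backward trace of the inverse gradient path starting from $\gamma$'s outer incident 2-cell, which terminates at a critical 2-cell because every 1-cell of a 2-manifold has exactly two cofaces and the field is acyclic. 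This same determinism also resolves the uniqueness worry you raised about Forman cancellation: the gradient path from $\sigma$ to $\gamma$ is unique precisely because the backward trace admits no branching on a 2-manifold. With ``every boundary circle contains a critical saddle'' substituted for ``all interface 1-cells are critical,'' your absorption-and-iteration scheme does terminate at $c_{2}=1$, $\beta_{2}=1$ holds since $\mathfrak{M}$ is closed, and the remainder of your outline goes through.
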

\begin{proof}
Suppose a vector field $\mathscr{V}$ on manifold $\mathfrak{M}$ (without boundary) has a single critical cell. Then from \autoref{lem:singleframe}, we get a single connected expansion frame connecting all vertices of $\mathfrak{M}$. Instead, if $\mathfrak{M}$ is a manifold without boundary and if we have more than one critical 1-cells, then consider the unstable manifold of some chosen critical cell $\mathfrak{K}$. Since the unstable manifold of $\mathfrak{K}$ doesnot include the entire manifold $\mathfrak{M}$, the stable manifold has a 1-dimensional manifold as its boundary. From \cite{Lee00}, we know that, if $\mathcal{M}$ is an n-dimensional manifold with boundary, then the boundary  of $\mathcal{M}$ is an (n-1)-dimensional manifold (without boundary) when endowed  with the subspace topology. Therefore, the boundary of the unstable manifold is a 1-dimensional manifold  without boundary (i.e.  it consists of one or more disjoint circles). Clearly the 1-cells belonging to this boundary are not part of the 2-flow  of $\mathscr{V}$, else they wouldn't be part of the boundary of the unstable  manifold of  $\mathfrak{K}$.  So, the 1-cells belonging to this boundary are either part of the 1-flow of $\mathscr{V}$ or they are critical. Consider one of the disjoint circles that forms part of the boundary of the unstable manifold. If all the cells on this circle are part of the 1-flow then it will form a cycle. Hence there exists at least one critical 1-cell on the boundary of the unstable manifold. Let $\gamma$ be a critical that lies on the boundary of the unstable manifold of $\mathfrak{K}$. Clearly, there exists only one gradient path from $\mathfrak{K}$ to $\gamma$. $\gamma$ is also incident on a 2-cell say $\sigma_1$ that does not lie in the unstable manifold of $\mathfrak{K}$. Suppose $\sigma_1$ is itself a critical 2-cell, then $\gamma$ lies on the boundary of unstable manifolds of the two critical 2-cells $\mathfrak{K}$ and $\sigma_1$. Otherwise suppose that $\sigma_1$ is matched. Because the simplicial complex $\mathfrak{M}$ is a manifold, it is possible to trace any inverted gradient path on $\mathfrak{M}$ (such a unique inverse gradient path exists). Therefore, we trace the inverted gradient path $\gamma,\sigma_1,\dots$ until we reach a critical 2-cell (say $\sigma_k$) from which this path emanates. In any case, we can find a critical 1-cell $\gamma$ which is shared by critical cells $\mathfrak{K}$ and some other critical 2-cell say $\sigma$. In this case, because gradient path from $\sigma$ to $\gamma$ is unique we can invert this gradient path as shown in Line \autoref{lst:line:revgrad} of Procedure \textbf{kingRev()} of \autoref{alg:optcancel}. Once this cancellation is done, the unstable manifold of $\sigma$ becomes part of the new unstable manifold of $\mathfrak{K}$. Once again we search a critical 1-cell $\gamma_2$ on the boundary of the unstable manifold s.t. which also lies on the boundary of unstable manifold of some other critical 2-cell (distinct from $\mathfrak{K}$). If such a pair of critical cells is found then we cancel it and this procedure is repeated until all critical 2-cells belong to the unstable manifold of $\mathfrak{K}$ (or alternatively all critical 1-cells have two gradient paths from $\mathfrak{K}$.) Basically  this means that $\mathfrak{M}$ is a manifold without boundary that has a unique critical 2-cell. i.e. $c_2=1$. Since, $\mathfrak{M}$ is a 2-manifold without boundary, $\beta_2=1$. Finally, from Case 1 of \autoref{lem:c2b2}, we arrive at the conclusion that the expansion frame is a connected 1-manifold that includes all 0-cells of $\mathfrak{M}$.
\end{proof} 

If $\mathfrak{M}$ is a manifold without boundary then $\mathcal{G}_s(\mathcal{M})$ has a single connected component and the for loop described in Lines \ref{lst:line:compfor}-\ref{lst:line:compforend} of Procedure \textbf{kingFlow()} in \autoref{alg:optcancel} gets executed only once. Also the while loop described in Lines \ref{lst:line:callfirstking}-\ref{lst:line:callfirstkingend} of Procedure \textbf{processComplex()} in \autoref{alg:optcancel} gets executed only once for manifolds without boundary. This is because for any critical 2-cell $\mathfrak{K}$, you always find another critical 2-cell $\sigma$ s.t. both $\mathfrak{K}$ and $\sigma$ have a gradient path to a common 1-cell $\gamma$ unless the unstable manifold of $\mathfrak{K}$ covers the entire manifold $\mathfrak{M}$.\\
The situation is however much different for a manifold with boundary. For such a manifold the for loop and the while loop may run several iterations.

\begin{lemma} \label{lem:revc2bd} If $\mathfrak{M}$ is a manifold with boundary then after invoking the procedure \textbf{kingRev()}, we obtain a connected expansion frame. Moreover, $c_2=\beta_2=0$.  
\end{lemma}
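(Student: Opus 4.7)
The plan is to adapt the argument of Lemma~\ref{lem:revc2nobd} to the boundary case, exploiting one extra cancellation move (line~\ref{lst:line:revgradbd} of \textbf{kingRev()}) that is unavailable in the closed case. For a $2$-manifold with boundary the standard topological fact gives $\beta_2=0$, so the target becomes $c_2=0$ rather than $c_2=1$. I would work one connected component of the semigraph $\mathcal{G}_s(\mathfrak{M})$ from Note~\ref{note:semigraph} at a time; the \textbf{for} loop in Lines~\ref{lst:line:compfor}--\ref{lst:line:compforend} of \textbf{kingFlow()} in Algorithm~\ref{alg:optcancel} has already split $\mathfrak{M}$ this way. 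On each component the first step is \textbf{fixBdry()}, which installs a boundary--coboundary gradient pair $\langle b_{ij},B_i\rangle$, possibly after reversing a short gradient path out of some critical $\alpha^2$; this seeds the expansion in the same manner as Case~2 of Lemma~\ref{lem:c2b2}.

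Next I would invoke \textbf{kingRev()} on each remaining critical $2$-cell $\mathfrak{K}$ in the current component. Following Lemma~\ref{lem:revc2nobd}, examine the unstable manifold $\widehat{\mathfrak{K}}$; its topological boundary is a $1$-dimensional submanifold of $\mathfrak{M}$ by the Lee~\cite{Lee00} fact invoked there. Because $\mathfrak{M}$ now has non-empty topological boundary, $\partial\widehat{\mathfrak{K}}$ can split into circles lying in $\mathrm{int}(\mathfrak{M})$ together with arcs whose endpoints lie on $\partial\mathfrak{M}$. Every such component must carry a critical $1$-cell: for a circle, the acyclicity-versus-cycle argument of Lemma~\ref{lem:revc2nobd} applies verbatim; for an arc, at least one boundary $1$-cell incident at its endpoint on $\partial\mathfrak{M}$ must be critical, since otherwise \textbf{fixBdry()} would already have matched it upward to its unique coface, contradicting that it lies on $\partial\widehat{\mathfrak{K}}$. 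If the critical $1$-cell $\gamma$ is shared with another critical $2$-cell $\sigma$, the subroutine \textbf{sharedSaddle()} detects it and line~\ref{lst:line:revgrad} cancels the pair $(\gamma,\sigma)$; otherwise $\mathfrak{K}$ has a unique gradient path to a critical boundary $1$-cell $\phi^{1}$ and line~\ref{lst:line:revgradbd} cancels $(\phi,\mathfrak{K})$. Each invocation strictly decreases $c_2$, so iterating exhausts the critical $2$-cells of the component; summing over all components of $\mathcal{G}_s(\mathfrak{M})$ then yields $c_2=0=\beta_2$.

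For connectedness of the final expansion frame I would appeal to Lemma~\ref{lem:singleframe} together with Lemmas~\ref{lem:staycon}--\ref{lem:allcon}. Each critical-pair cancellation reverses exactly one $\mathcal{V}$-path and amalgamates the unstable manifold of the cancelled critical $2$-cell into that of the surviving seed, so the connectedness invariant of Lemma~\ref{lem:staycon} persists after every cancellation. Gluing the per-component frames along the component-hypergraph $\mathcal{H}_c(\mathfrak{M})$ then produces a single connected $1$-complex spanning all $0$-cells, exactly as in the boundary branch of Lemma~\ref{lem:singleframe}.

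The hard part, I expect, is making the arc half of the dichotomy in the second paragraph fully rigorous: ruling out the configuration in which an entire arc of $\partial\widehat{\mathfrak{K}}$ lies in the $1$-flow of $\mathscr{V}$ with no critical cell. The circle subcase is dispatched by the cycle-in-$\mathcal{V}$ contradiction of Lemma~\ref{lem:revc2nobd}, but in the arc subcase one must combine that contradiction with careful case analysis at the two endpoints on $\partial\mathfrak{M}$, using that \textbf{fixBdry()} has already fixed a coherent boundary matching on each connected component of $\mathcal{G}_s(\mathfrak{M})$. This is the step at which the hypothesis that $\mathfrak{M}$ has a genuine topological boundary becomes indispensable, and it is the only substantive new ingredient beyond what already appears in Lemma~\ref{lem:revc2nobd}.
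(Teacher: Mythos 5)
Your overall strategy matches the paper's: process one component of $\mathcal{G}_s(\mathfrak{M})$ at a time, inspect the boundary of the unstable manifold of a king cell $\mathfrak{K}$, extract a critical $1$-cell from each boundary component via the acyclicity argument, cancel shared-saddle pairs $(\gamma,\sigma)$ at line~\ref{lst:line:revgrad}, cancel $(\phi,\mathfrak{K})$ at line~\ref{lst:line:revgradbd} once every remaining second path emanates from a boundary face, and iterate until $c_2=0$. But the ``hard part'' you flag --- the arc subcase of your circle/arc dichotomy --- is a gap of your own making, and it is exactly the point where the paper's argument diverges from yours. The paper applies the fact from \cite{Lee00} that the boundary of a $2$-manifold with boundary is a closed $1$-manifold, so the boundary of the unstable manifold $\widehat{\mathfrak{K}}$ is a disjoint union of \emph{circles} even in the bounded case; there are no arcs with endpoints on $\partial\mathfrak{M}$, because where $\partial\widehat{\mathfrak{K}}$ meets $\partial\mathfrak{M}$ the boundary curve simply continues along boundary $1$-cells of $\mathfrak{M}$. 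The case you were trying to handle by endpoint analysis at $\partial\mathfrak{M}$ is instead absorbed into the dichotomy on the \emph{second gradient path} ending at the critical $1$-cell $\gamma$: it emanates either from another critical $2$-cell $\sigma$ (cancel $(\gamma,\sigma)$) or from a boundary face (in which case, after the loop exhausts all shared-saddle pairs, $\mathfrak{K}$ itself is cancelled against such a $\phi$). Your proposed \textbf{fixBdry()}-based endpoint argument is therefore unnecessary, and in the form you state it, it is not sound anyway: a boundary $1$-cell matched upward by \textbf{fixBdry()} belongs to the $2$-flow and so cannot lie on $\partial\widehat{\mathfrak{K}}$ at all, which collapses your intended contradiction rather than completing it.

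There is a second, smaller gap in your connectedness step. You assert that the connectivity invariant of Lemma~\ref{lem:staycon} ``persists after every cancellation,'' but that lemma is proved only for elementary \emph{expansions}; a critical-pair cancellation reverses a gradient path and can restructure frames in ways Lemma~\ref{lem:staycon} does not cover. The paper does not argue persistence under reversals at all. Instead it establishes connectedness \emph{after} all cancellations are complete: at that point every $2$-flow on $\mathcal{M}_i$ emanates from boundary faces, each boundary face has a connected frame of expansion (as in Case~2 of Lemma~\ref{lem:c2b2}), any two boundary faces $b_i,b_j$ are joined by a Type~2 connected path whose $1$-cells chain the intermediate frames $b_{k_1},\dots,b_{k_L}$ pairwise together, and finally the component hypergraph $\mathcal{H}_c(\mathcal{M})$ glues the per-component frames exactly as in Lemma~\ref{lem:singleframe}. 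You should replace your invariant-persistence claim with this a-posteriori chaining argument; your final hypergraph-gluing step is then correct as stated.
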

\begin{proof} We will examine the effect of the algorithm on one of the connected components $\mathcal{M}_i$ of $\mathcal{G}_s(\mathcal{M})$. Consider the unstable manifold of a critical 2-cell $\mathfrak{K}$. From \cite{Lee00}, we know that, if $\mathcal{M}$ is an n-dimensional manifold with boundary, then the boundary  of $\mathcal{M}$ is an (n-1)-dimensional manifold (without boundary) when endowed  with the subspace topology. Hence, the boundary of this unstable manifold will be a 1-manifold without boundary (i.e. a disjoint set of circles). The 1-cells on any one of these circle are involved only in 1-flows or they are critical. But all, cells of a circle can not be involved in 1-flow as this would lead to a cycle in the vector field. So, every circle must contain a 1-cell, say $\gamma$ that is critical. $\mathfrak{K}$ has only one gradient path to $\gamma$. There exists a second gradient path that ends at $\gamma$. This gradient path either emanates from another critical 2-cell say $\sigma$ or it emanates from a boundary face. Assume the case where a path to $\gamma$ emanates from $\sigma$. In this case, the pair $(\gamma,\sigma)$ is detected and cancelled in Lines \ref{lst:line:revgradst}-\ref{lst:line:revgrad} of Procedure \textbf{kingRev()} in \autoref{alg:optcancel}. In fact, every such pair  $(\gamma,\sigma)$ for a given $\mathfrak{K}$ is detected and cancelled in the loop Lines \ref{lst:line:revgradst}-\ref{lst:line:revgrad} of Procedure \textbf{kingRev()} in \autoref{alg:optcancel}. So finally every critical 1-cell say $\phi$ in the boundary of the unstable manifold of $\mathfrak{K}$ will have a second path emanating from a boundary face. In this case, the pair of critical cells $(\phi,\mathfrak{K})$ is detected and cancelled as shown in Line \ref{lst:line:revgradbd} of Procedure \textbf{kingRev()} in \autoref{alg:optcancel}. Suppose that $\mathcal{M}_i$ continues to have critical cells that are not cancelled, then a new critical king cell $\mathfrak{K}$ is selected and the same procedure as described above is repeated in a loop shown in Lines \ref{lst:line:callfirstking}-\ref{lst:line:callfirstkingend} of Procedure \textbf{processComplex} in \autoref{alg:optcancel}. We exit from the loop provided there are no other critical 2-cells to process in the list $\mathcal{C}^p$. In case of manifolds with boundary every critical 2-cell processed as a king cell $\mathfrak{K}$ is itself cancelled along with cancelling all the neighboring critical 2-cells that share gradient paths to the same saddles as $\mathfrak{K}$. Having processed $\mathcal{M}_i$ in this manner, we are assured that eventually $\mathcal{M}_i$ has no critical 2-cells. In fact every 2-flow for $\mathcal{M}_i$ emanates strictly from boundary faces. Using an argument similar to that in Case 2 of \autoref{lem:c2b2}, we know that the frame of expansion of a boundary face is a connected set. Consider the first such boundary face $b_1$, with a frame of expansion which is a connected 1-manifold. Every 1-cell belonging to the frame of expansion of $b_1$ has a second gradient path emanating from other boundary cells $\{b_i\}$. Since the $\mathcal{M}_i$ is a manifold with boundary, given any pair of boundary faces $b_i,b_j$, we can find a type 2 connected 2-path between them. Consider all the 1-cells in some such type 2-connected path between $b_i$ and $b_j$. Every 1-cell either lies in the frame of expansion of two boundary faces or is involved in 2-flow with a regular 2-cell. This gives us a sequence of frames of expansion of boundary faces $b_{k_1},b_{k_2},\dots,b_{k_L}$ that are sequentially pairwise connected and s.t. $b_{k_1}=b_i$ and $b_{k_L}=b_j$. Since this procedure can be applied to any two boundary faces (with expansion frames), we conclude that the set of frames of expansion of all boundary faces is a connected set, which we refer to as the expansion frame of  $\mathcal{M}_i$. To see that the frame of expansions of all $\mathcal{M}_i$ form a single connected set, we consider the component hypergraph $\mathcal{H}_{c}(\mathcal{M})$. We then use the same line of reasoning as used in \autoref{lem:singleframe}, to conclude that the expansion frame of $\mathfrak{M}$ is connected. 
Also, following all critical cell cancellations since there are no more critical 2-cells for $\mathcal{M}_i$, we have $c_2=\beta_2=0$ for each $\mathcal{M}_i$. So, we have also have $c_2=\beta_2=0$ for $\mathfrak{M}$        
\end{proof}

\begin{shnote}\label{note:onecritical} The residual 1-complex $\mathscr{M}$ is essentially the expansion frame of $\mathfrak{M}$ following cancellation of critical cell pairs of dimensions $1,2$. Since there exists a preordained 1-flow (without cycles) on $\mathscr{M}$, clearly given the mechanism of discrete Morse theory, there must exist at least one critical 0-cell. (A sub-optimal 1-flow may have more than one critical 0-cells. But at least one is guaranteed.) The first ear is chose to be one that includes at least one of these critical 0-cells. Also the ear decomposition follows a special procedure. The number of ears are determined by the number of unstable manifolds of boundary 0-cells and critical 1-cells. The first ear is either an unstable manifold of a boundary 0-cell or a critical 1-cell that includes at least one critical 0-cell. The second ear is a 1-manifold that is incident on a 0-cell that belongs to the $1^{st}$ ear and includes all the 0-cells and 1-cells of an unstable manifold of a boundary 0-cell or a critical 1-cell that aren't already included in the first ear. The $k^{th}$ ear is a 1-manifold that is incident on a 0-cell that belongs to one of the previous $(k-1)$ ears and includes all 0-cells and 1-cells of an unstable manifold of a boundary 0-cell or a critical 1-cell that aren't already included as part of the previous $(k-1)$ ears. Every ear (apart from the first ear), has at least one 0-cell in its 0-dim. boundary whereas every ear may have at most two 0-cell in its 0-dim. boundary. The first boundary cell of the ear $b^{0}[1]$ is incident on one of the previous ears.  The second boundary cell $b^{0}[2]$ may or may not be incident on any of the previous ears. 
\end{shnote}

\begin{lemma} \label{lem:revc0} On applying a series of critical cell cancellations, the connected expansion frame has $c_0=\beta_0=0$   
\end{lemma}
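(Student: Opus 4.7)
The plan is to process the residual $1$-complex $\mathscr{M}$ (the connected expansion frame produced after \textbf{kingRev}) and to cancel every surviving critical $0$-cell against a critical $1$-cell via gradient-path reversals. The ear decomposition of Note~\ref{note:onecritical} provides the structural tool: I process the ears $\mathcal{E}_1,\mathcal{E}_2,\ldots$ in the prescribed order, arguing inductively that once an ear has been handled it contributes no critical $0$-cell to $\mathscr{V}$.

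For each ear $\mathcal{E}_k$, the preordained $1$-flow induced by \textbf{frameFlow} partitions $\mathcal{E}_k$ into unstable manifolds of its critical $1$-cells (and, in the open-ear case, of boundary $0$-cells). Each such unstable manifold is a directed arc whose sink is a critical $0$-cell. Hence for every critical $0$-cell $v$ living inside $\mathcal{E}_k$ there is a unique reversible gradient path from some critical $1$-cell $\gamma$ (or from a boundary vertex) down to $v$, and reversing that path cancels the pair $(\gamma, v)$. Because the path is contained in the $1$-manifold $\mathcal{E}_k$ and, by Note~\ref{note:onecritical}, $\mathcal{E}_k$ meets the previously processed part of $\mathscr{M}$ only at its first boundary vertex $b^{0}[1]$, no directed cycle is created and no previously cancelled pair is disturbed; the induction step goes through.

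The main obstacle is to eliminate the very last critical $0$-cell, the one that Note~\ref{note:onecritical} forces to exist in the first ear $\mathcal{E}_1$ before any cancellations are applied. A purely intra-ear reversal inside $\mathcal{E}_1$ cannot cancel it, because by the time $\mathcal{E}_1$ is revisited the critical $1$-cells of $\mathcal{E}_1$ adjacent to this terminal $0$-cell have already been consumed by the local step. My plan to overcome this is to deliberately reserve one critical $1$-cell $\gamma^{\star}$ in a later ear and then invoke the global connectedness of the frame from Lemma~\ref{lem:singleframe}: composing the gradient arrows freshly created in subsequent ears across their attachment vertices $b^{0}[1]$ with the original $1$-flow of $\mathcal{E}_1$ yields a single global gradient path from $\gamma^{\star}$ down to the surviving critical $0$-cell, and one final reversal along this composed path cancels the pair. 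Combined with the Morse inequality $c_0 \geq \beta_0$, this terminal reversal forces $c_0 = \beta_0 = 0$ on the connected expansion frame, completing the argument.
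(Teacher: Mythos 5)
Your final step is impossible, and it chases a target that is itself a typo in the statement. The connected expansion frame $\mathscr{M}$ is a nonempty connected $1$-complex, so $\beta_0=1$, not $0$, and by the weak Morse inequalities (Theorem~\ref{thm:weakmorse}) every DGVF on it satisfies $c_0\geq\beta_0=1$. Hence no sequence of critical cell cancellations --- in particular no reversal along your composed path from a reserved saddle $\gamma^{\star}$ --- can ever eliminate the last critical $0$-cell: if that composed path were a unique gradient path, cancelling the pair $(\gamma^{\star},v)$ would produce a vector field with $c_0=0<\beta_0$, contradicting Forman's theorem. What concretely blocks the construction is the uniqueness criterion for cancellation: once the earlier cancellations have been performed, any surviving saddle has \emph{two} gradient paths down to the unique minimum (this is exactly the obstruction the paper notes in its base case, where two gradient paths from a saddle to $\varrho$ mean "we do not have any criterion for cancellation"). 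That the intended conclusion is $c_0=\beta_0=1$ is confirmed by the paper's own use of this lemma in the pseudo-optimality theorem, by the parallel Lemma~\ref{lem:c0b0}, and by the discussion section's claim that connectivity guarantees optimal $c_0=1$. You should have flagged the statement as inconsistent with the Morse inequalities rather than engineering an argument to satisfy it.

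Your first two paragraphs are, in outline, the paper's actual argument, and they are where the correct proof lives. The paper inducts over the ears of Note~\ref{note:onecritical}: the base case shows the first ear ends up with exactly one critical $0$-cell (cancelling a pair $(\varphi,\gamma)$ when a saddle has a single path to the king cell $\varrho$ and a second path to another minimum), and the induction step shows the $k$-th ear introduces no \emph{new} critical $0$-cell, splitting into cases --- closed ear attached at $b^{0}[1]$, open ear that is the unstable manifold of a boundary $0$-cell, both endpoints $b^{0}[1],b^{0}[2]$ on earlier ears, and the line-segment ear with $b^{0}[2]$ critical, where \textbf{kingRev()} with king cell $b^{0}[1]$ cancels $(\gamma,b^{0}[2])$. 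Your intra-ear reversal mechanism matches this portion reasonably well (modulo the paper's finer case analysis), but the invariant your induction should carry is that the first ear's critical $0$-cell \emph{survives} as the unique minimum, not that it must eventually be destroyed.
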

\begin{proof} We shall make an inductive argument. The idea is that the first ear will have a 0-cell that is critical. Subsequent ears attached to the first ear have no 0-dimensional critical cells. Note that all ears are 1-dimensional manifolds (topological circles or topological line segments) \\
\textbf{Base Case:} Suppose that we start with the first ear. Suppose that the first ear is a closed loop (i.e. a topological circle). From \autoref{note:onecritical} our first ear has at least one critical 0-cell. Suppose we call it $\varrho^0$. In this case $\varrho$ becomes our king critical cell $\mathfrak{K}$.  If there exist two gradient paths to $\varrho$ from a saddle, then clearly we do not have any criterion for cancellation. Instead if we have a single gradient path from the saddle $\gamma^{1}$ and suppose there exists another gradient path from the $\gamma$ to some other minima $\varphi^{0}$ then from Lines \ref{lst:line:revgradst}-\ref{lst:line:revgrad} of Procedure \textbf{kingRev()} in \autoref{alg:optcancel}, we cancel critical pair $(\varphi,\gamma)$ and as a result have a single critical 0-cell in the first ear. The last possibility the first ear consists of the unstable manifold of a critical 0-cell $\varsigma$ that $\varrho^0$   \\
\textbf{Induction step:} By the inductive hypothesis, we have processed $(k-1)$ ears so far and for all the $(k-1)$ ears taken together, we have only one critical 0-cell (namely the one that was encountered in the very first ear.) Now, we need to establish that on attachment of the $k^{th}$ ear we do not introduce any new critical 0-cells. Note that for $k^{th}$ ear we start with $b^{0}[1]$ as the king cell $\mathfrak{K}$, where $b^{0}[1]$ is incident on one of the previous ears (i.e. it may either be our original critical cell $\varrho^{0}$, or it may be some regular 0-cell from one of the earlier $(k-1)$ ears. Like all other ears, the $k^{th}$ ear is an unstable manifold of a boundary 0-cell or a critical 1-cell. If it is the unstable manifold of a boundary 0-cell $\varsigma$ then we do not have anything to prove as the flow for this cell will simply start with $\varsigma$ and end at $b^{0}[1]$ without introducing any criticalities. If the $k^{th}$ ear is topologically a loop, then  $b^{0}[1]$ has two gradient paths from some saddle $\gamma$ and hence the criterion for cancellation is not satisfied. Yet another case is when $b^{0}[1]$ and $b^{0}[2]$ are both incident on one of the earlier $k$ ears. In this case, $b^{0}[1]$ is either $\varrho$ or a regular 0-cell and $b^{0}[2]$ is certainly a regular cell. Also, there does not exist any other critical 0-cell in this ear because the $k^{th}$ ear, in this case, is an unstable manifold of a saddle.    The only interesting case is when $k^{th}$ ear is topologically a line segment s.t. $b^{0}[2]$ is critical and the saddle $\gamma$ has one gradient paths each to $b^{0}[1]$ and $b^{0}[2]$. Since, for $k^{th}$ ear we start with $b^{0}[1]$ as king cell $\mathfrak{K}$, we end up cancelling $\gamma$ and $b^{0}[2]$, making the $k^{th}$ ear an unstable manifold of boundary cell $b^{0}[2]$. In each of the cases, we ensure that either the $k^{th}$ ear did not have any critical 0-cell to begin with or if there does exist a critical 0-cell, then it is cancelled. Hence proved.      
\end{proof}

\begin{theorem}Every discrete gradient vector field on a 2-manifold is pseudo-optimal.\end{theorem}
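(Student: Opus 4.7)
The plan is to start with an arbitrary DGVF $\mathscr{V}$ on a 2-manifold $\mathfrak{M}$ and run the cancellation procedure \textbf{kingFlow()} of \autoref{alg:optcancel}, arguing that the output has $c_i = \beta_i$ for all $i$; since the procedure applies only critical-pair cancellations, this shows $\mathscr{V}$ is pseudo-optimal by definition.

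First I would decompose $\mathfrak{M}$ into the components $\mathcal{M}_1,\ldots,\mathcal{M}_K$ of the semigraph $\mathcal{G}_s(\mathfrak{M})$ (as in the preceding note) and apply \textbf{processComplex()} at dimension $(d,p)=(2,2)$ to each $\mathcal{M}_i$. For a component with empty boundary, \autoref{lem:revc2nobd} tells us that repeated invocations of \textbf{kingRev()} leave exactly one critical $2$-cell, matching $\beta_2=1$, and produce a connected expansion frame. For a component with boundary, \autoref{lem:revc2bd} tells us that every critical $2$-cell is cancelled (either with a shared saddle via Line~\ref{lst:line:revgrad} or with a boundary saddle via Line~\ref{lst:line:revgradbd}), matching $\beta_2=0$, and its expansion frame is likewise connected. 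Stitching the component frames together along the component hypergraph $\mathcal{H}_c(\mathfrak{M})$ (exactly as in \autoref{lem:singleframe}) yields a connected $1$-complex $\mathscr{M}$ that contains every $0$-cell of $\mathfrak{M}$ and on which a cycle-free $1$-flow already exists. After this first stage we therefore have $c_2 = \beta_2$ globally, and all remaining critical cells have dimension $\leq 1$.

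Next I would run the ear decomposition of $\mathscr{M}$ (per \autoref{note:onecritical}, with the first ear chosen to contain a guaranteed critical $0$-cell) and call \textbf{processComplex()} at $(d,p)=(1,0)$ on each ear in order. \autoref{lem:revc0} then supplies the induction: the first ear retains one critical $0$-cell (all spurious minima are cancelled against saddles by \textbf{kingRev()}), and every subsequent ear attaches along a $0$-cell of an earlier ear, so its processing either introduces no new critical $0$-cell or cancels it against the ear's critical $1$-saddle. Since $\mathfrak{M}$ is connected, $\beta_0=1$, so at the end of this stage $c_0 = \beta_0$ as well.

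At this point $c_0=\beta_0$ and $c_2=\beta_2$, and all transformations have been critical-pair cancellations, which preserve the Euler characteristic. Invoking the equality clause of the Weak Morse Inequalities (\autoref{thm:weakmorse}(B)),
\[
c_0 - c_1 + c_2 = \beta_0 - \beta_1 + \beta_2,
\]
forces $c_1 = \beta_1$, so the resulting DGVF is optimal and $\mathscr{V}$ is pseudo-optimal. The main obstacle I expect is the second paragraph: one has to verify that whenever \textbf{kingRev()} is invoked on a critical $2$-cell $\mathfrak{K}$ whose unstable manifold does not already exhaust the component, a shared saddle $\gamma$ (or a boundary saddle $\phi$) genuinely exists and the gradient path to be reversed is unique. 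This requires the subspace-topology fact that the boundary of the unstable manifold of $\mathfrak{K}$ is a $1$-manifold without boundary, together with the no-cycle property of $\mathscr{V}$, to guarantee a critical $1$-cell on every boundary circle; these are exactly the ingredients used in Lemmas~\ref{lem:revc2nobd}-\ref{lem:revc2bd}, so the theorem reduces cleanly to the lemmas already proved.
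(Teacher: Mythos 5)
Your proposal is correct and follows essentially the same route as the paper's own proof: run the cancellation algorithm of \autoref{alg:optcancel}, use \autoref{lem:revc2nobd} and \autoref{lem:revc2bd} to get $c_2=\beta_2$ together with a connected expansion frame, use \autoref{lem:revc0} on the ear decomposition to get $c_0=\beta_0=1$, and close with the equality clause of the Weak Morse Inequalities to force $c_1=\beta_1$. Your explicit handling of the semigraph components and the hypergraph stitching is slightly more detailed than the paper's theorem-level writeup, but it is exactly the machinery already embedded in the cited lemmas, so the arguments coincide.
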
 
\begin{proof} 
Suppose that at the end of the first call to Procedure \textbf{findKing()} from Line \ref{lst:line:callfirstking} of Procedure \textbf{processComplex()} in \autoref{alg:optcancel}, $\mathfrak{K}^{2}$ is not \textsf{NIL}. Then, we claim that the unstable manifold of $\mathfrak{K}^{2}$ does not have any critical 1-cells that are boundary faces. This is because, if $\mathfrak{K}^{2}$ did have any boundary critical 1-cells in its unstable manifold, it would have got cancelled in the loop shown in Lines \ref{lst:line:looplonelybegin}-\ref{lst:line:looplonelyend} in Procedure \textbf{fixBdry()} in \autoref{alg:optcancel}. In fact, more generally every critical 2-cell at the end of first call to  Procedure \textbf{findKing()} will have no critical boundary 1-cells in their respective unstable manifolds. If $\mathfrak{M}$ is a manifold without boundary, then by \autoref{lem:revc2nobd}, we have $c_2=\beta_2=1$ and we get a connected frame of expansion in form of residual complex $\mathscr{M}$. Instead, if $\mathfrak{M}$ is a manifold with boundary, then from \autoref{lem:revc2bd} we obtain $c_2=\beta_2=0$ and a connected frame of expansion in form of residual complex $\mathscr{M}$. Given a connected frame of expansion $\mathscr{M}$, \label{lem:revc0} guarantees that we have $c_0=\beta_0=1$. Finally, using Weak Morse Inequlity we obtain $c_1=\beta_1$. Hence, we prove that for a 2-manifold, given an arbitrary vector field $\mathscr{V}_1$ merely by using critical cell cancellations, we may obtain the optimal vector field $\mathscr{V}_2$. In other words, every gradient vector field on a 2-manifold is pseudo-optimal. 
\end{proof}

\section{The Topological Explanation for simplicity of computation of $\ensuremath{H(\mathcal{M}^{2},\mathbb{A})}$}

We compute homology using \autoref{alg:hom}. Also we assume Weak
Morse Optimality Condition as defined in \autoref{def:WMOC} on the
input.

As we can see from arguments in \autoref{sec:Frames-of-Expansion},
the topological explanation for simplicity of computation of homology
groups for 2-manifolds is:
\begin{enumerate}
\item On 2-manifolds optimal Morse functions are perfect. In fact, 2-manifolds
admit readily computable perfect Morse functions.
\item A 2-manifold has optimal $c_{2}=0$ or $c_{2}=1$, which can be figured
out in linear time by examining whether or not it has a boundary.
\item We define and apply \emph{frames of expansion} an elementary homotopy
theory construct to design our algorithm.
\item It can be seen that irrespective of what traversal method we use to
traverse the graph like connectivity structure of 1-cells and 2-cells
of a 2-manifold, the frame of expansion remains connected. Furthermore,
this connectivity guarantees that optimal $c_{0}=1$. 
\item Finally weak Morse Inequality guarantees that our $c_{1}$ is optimal.
i.e. $c_{1}=\beta_{1}$.
\item Moreover, our dynamic programming based boundary operator computation
algorithm is pseudo-linear time (which becomes strictly linear assuming
WMOC).
\item Finally, assuming WMOC, the application of Smith Normal Form (a supercubical
time algorithm) on input of constant size is inexpensive. 
\item The pseudo-optimality of arbitrary discrete Morse functions as outlined
in \autoref{sec:Pseudo-optimality-of-Random} further strengthens
our argument about simplicity of computing optimal discrete Morse
functions. 
\end{enumerate}

\section{Concluding Remarks }

In this work, we provide a nearly linear time algorithm for computing
homology (with arbitrary coefficients) on 2-manifolds - the first
such algorithm. This is particularly useful to compute homology of
2-manifolds that may have torsion elements. The design involves the
introduction and usage of an elementary simple homotopy construct
that we call \emph{expansion frames.} Having designed the optimal
Morse function in linear time, we use a dynamic programming based
pseudo-linear time boundary operator algorithm for computing the Morse
boundary operator. Assuming the sum of Betti numbers is a small constant
compared to the size of the complex, the Smith Normal Form is applied
to a very smal input, giving us near-linearity. Finally, using the
notion of expansion frames, we prove an unexpected result in discrete
Morse theory: Start with an arbitrary DGVF on a 2-manifold and one
may obtain an optimal DGVF merely by application of critical cell
cancellations. 

\bibliographystyle{acm}
\bibliography{morseopt}

\begin{thebibliography}{10}

\bibitem{AFV11}
{\sc Ayala, R., Fern{\'a}ndez-Ternero, D., and Vilches, J.~A.}
\newblock A graph-theoretical approach to cancelling critical elements.
\newblock {\em Elec. Notes in Dis. Math. 37}, 285-290 (2011).

\bibitem{AFV12}
{\sc Ayala, R., Fern{\'a}ndez-Ternero, D., and Vilches, J.~A.}
\newblock Perfect discrete morse functions on 2-complexes.
\newblock {\em Pattern Recognition Letters 33(11)\/} (2012).

\bibitem{BLW11}
{\sc Bauer, U., Lange, C., and Wardetzky, M.}
\newblock Optimal topological simplification of discrete functions on surfaces.
\newblock {\em Discrete \& Computational Geometry 47}, 2 (2012), 347--377.

\bibitem{LB14}
{\sc Benedetti, B., and Lutz, F.~H.}
\newblock Random discrete morse theory and a new library of triangulations.
\newblock {\em CoRR abs/1303.6422\/} (2013).

\bibitem{BLPS13}
{\sc Burton, B.~A., Lewiner, T., Paix{\~a}o, J., and Spreer, J.}
\newblock Parameterized complexity of discrete morse theory.
\newblock {\em CoRR abs/1303.7037\/} (2013).

\bibitem{Fo98a}
{\sc Forman, R.}
\newblock Morse theory for cell complexes.
\newblock {\em Advances in Mathematics 134}, 1 (Mar. 1998), 90--145.

\bibitem{Fo02a}
{\sc Forman, R.}
\newblock A user's guide to discrete {M}orse theory.
\newblock {\em S\'{e}minaire Lotharingien de Combinatoire B48c\/} (2002),
  1--35.

\bibitem{HMMNWJD10}
{\sc Harker, S., Mischaikow, K., Mrozek, M., Nanda, V., Wagner, H., Juda, M.,
  and Dlotko, P.}
\newblock The efficiency of a homology algorithm based on discrete morse theory
  and coreductions.
\newblock In {\em Proc. of 3rd Intl. Workshop on CTIC\/} (2010), vol.~1(1).

\bibitem{He05}
{\sc Hersh, P.}
\newblock {On optimizing discrete Morse functions}.
\newblock {\em Advances in Applied Mathematics 35}, 3 (Sept. 2005), 294--322.

\bibitem{JP06}
{\sc Joswig, M., and Pfetsch, M.}
\newblock Computing optimal discrete morse functions.
\newblock {\em SIAM J. Discrete Math. 20}, 1 (2004), 11--25.

\bibitem{JM09}
{\sc Juda, M., and Mrozek, M.}
\newblock Z2-homology of weak 2-pseudomanifolds may be computed in o(nlogn)
  time.
\newblock {\em preprint\/} (2009).

\bibitem{Lee00}
{\sc Lee, J.~M.}
\newblock {\em Introduction to topological manifolds : with 138 illustrations}.
\newblock Graduate texts in mathematics. Springer, New York, Berlin,
  Heidelberg, 2000.

\bibitem{Le02}
{\sc Lewiner, T.}
\newblock Constructing discrete morse functions.
\newblock Master's thesis, Department of Mathematics, PUC-Rio, july 2002.

\bibitem{Le03b}
{\sc Lewiner, T., Lopes, H., and Tavares, G.}
\newblock Optimal discrete morse functions for 2-manifolds.
\newblock {\em Comput. Geom. 26}, 3 (2003), 221--233.

\bibitem{Le03a}
{\sc Lewiner, T., Lopes, H., and Tavares, G.}
\newblock Toward optimality in discrete morse theory.
\newblock {\em Experimental Mathematics 12}, 3 (2003), 271--285.

\bibitem{Mat02}
{\sc Matsumoto, Y.}
\newblock {\em An Introduction to Morse theory}.
\newblock AMS, Providence, 2002.

\bibitem{Mi62}
{\sc Milnor, J.}
\newblock {\em {Morse Theory}}, 1st edition~ed.
\newblock Princeton University Press, May 1963.

\bibitem{Ra15}
{\sc Rathore, A.}
\newblock Min morse: Approximability {\&} applications.
\newblock {\em CoRR abs/1503.03170\/} (2015).

\end{thebibliography}

\section*{Appendix}

\section{Elementary Algebraic Topology\label{sec:Elementary-Algebraic-Topology}}

\begin{figure}
\shadowbox{\begin{minipage}[t]{1\columnwidth}%
\includegraphics[scale=0.4]{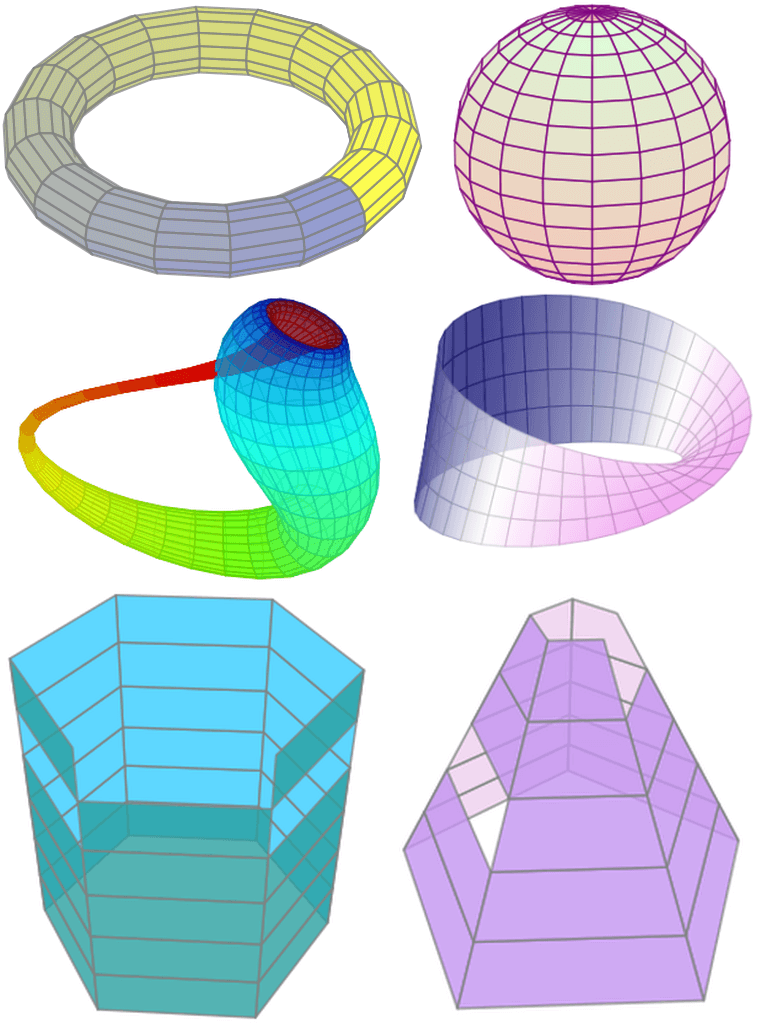}\cite{Le03a}%
\end{minipage}}

\protect\caption{2-manifolds}
\end{figure}

\begin{definition}[Simplicial Complex] A simplicial complex $\mathcal{K}$ is a set of vertices and a collection $\mathcal{L}$ of subsets of vertices called faces. All faces satisfy the following property: The subset of a face is also a face. (i.e. $\mathcal{B}\in\mathcal{L}, \mathcal{A}\in\mathcal{B}\Longrightarrow\mathcal{A}\in \mathcal{L}$). Maximal faces w.r.t. inclusion are known as \textbf{facets}. The \textbf{dimension} of a face $\mathcal{B}$ is defined to be $ |\mathcal{B} |-1$. The dimension of the simplicial complex itself is the maximum over the dimension of its faces.  
\end{definition}  

\begin{definition}[Open Cell] An n-dimensional open cell is a topological space that is homeomorphic to an open ball.  
\end{definition}

\begin{definition} [Cell Complex] A hausdorff topological space $X$ is called a finite cell complex if
\begin{enumerate} 
\item $X$ is a disjoint union of open cells $\{D_i^{n}\}$ where $D_i^{n}$ is an open $n$-cell. ($i \in I$ where $I$ is the indexing set.) 
\item For each open cell $D_i^n$ there is a map $\phi_i^{n}: B^{n} \to X$ such that $\phi_i^{n}$ restricted to the interior of the closed ball $B^{n}$ defines a homeomorphism to $D_i^{n}$ and such that $\phi_i^{n}(S^{n - 1})$ is contained in the $(n - 1)$-skeleton of $X$. (The $k$-skeleton of $X$ is the union of all open cells $D_i$ of dimension $r \leq k$).
\item Finally, a set $\alpha$ is closed in $X$ if and only if $\alpha \cap \overline D_j$ is closed in $\overline{D_j}^{n}$ for each cell $D_j^{n}$. Note that $\overline{D_j}^{n} = \phi_i^n(B^{n})$.  
\end{enumerate} 
A cell complex is said to be regular if each $\phi_i^n$ is a homeomorphism and if it sends $S^{n -1}$ to a union of cells in the $(n - 1)$-skeleton of $X$. 
\end{definition}

In lay man terms, to construct a cell complex you start with points
$\mathcal{D}_{i}^{0}$, then glue on lines $\mathcal{D}_{i}^{1}$
to $\mathcal{D}_{i}^{0}$, then glue discs $\mathcal{D}_{i}^{2}$
to $\mathcal{D}_{i}^{1}$ and $\mathcal{D}_{i}^{0}$ and so on. Therefore
a cell complex is a topological space constructed from a union of
objects called cells, which are balls of some dimension, glued together
on boundaries. Cell complexes are the most convenient object to do
Algebraic Topology. But to simplify the discussion, we will instead
provide a basic presentation of simplicial homology. 

\begin{notation} \textbf{Boundary \& Coboundary of a simplex} $\sigma$:  We define the boundary $bd(\sigma)$ and respectively coboundary $\eth\:\sigma$ of a simplex as\\ $\eth \: \sigma \,= \: \{ \tau \, | \, \tau \prec \sigma \}$ $\delta \: \sigma = \{ \rho \, | \, \sigma \prec \rho \}$ \end{notation}  

Homology groups are the most important and general topological invariants
of simplicial and cubical complexes, that are also computationally
feasible. At the heart of it, Algebraic Topology is essentially the
use of Linear Algebra to compute combinatorial topological invariants
of a give space. Given a simplicial complex $W$, can define simplicial
$q$-chains, which are formal sums of $q$-simplices $\sum_{s\in S}a_{i}s_{i}$
where the $a_{i}$ are integer coefficients. The abelian group of
sums of $k$-simplices under addition is called the Chain Group and
denoted by $C_{q}(W,\,\mathbb{Z})$. The $n$-simplex $\triangle=\{v_{0},v_{1},\cdots,v_{n}\}$with
standard orientation is denoted $+\left[v_{0},v_{1},\cdots,v_{n}\right]$.
Consider the permutation group of $n$-letters on the vertices of
$\triangle$. The set of permutations fall into 2 equivalence classes:
even permutations and odd permutations. The set of even permutations
induce the positive orientation $+\left[v_{0},v_{1},\cdots,v_{n}\right]$
whereas the set of odd permutations induce the negative orientation
$-\left[v_{0},v_{1},\cdots,v_{n}\right]$.

For each integer $q$, $C_{q}(W)$ is the free abelian group generated
by the set of oriented $q$-simplices of $W$. Let $W_{q}$ be the
total number of $q-$dimensional simplices for simplicial complex
$W$. Then, one can show that $C_{q}\cong\mathbb{Z}^{W_{q}}$.

The boundary map $\partial_{q}$ is defined to be the linear transformations
$\partial_{q}\,:\,C_{q}\rightarrow C_{q-1}$.

Examples of such operations are given in \textbf{Fig.E3} and \textbf{Fig.E4.}

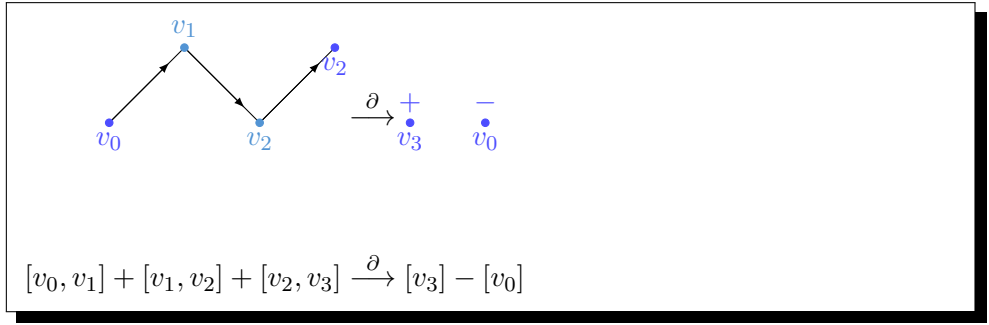
\begin{figure}
\shadowbox{\begin{minipage}[t]{1\columnwidth}%
\begin{tikzpicture}          
\draw[color=black] (0,0) -- (1,1) -- (2,0) -- (3,1); 
\draw[->, >=latex]          (0,0) -- (.8,.8); 
\draw[->, >=latex] (1,1) -- (1.8,.2); 
\draw[->, >=latex]          (2,0) -- (2.8,.8);
\filldraw          [color=blue!70] (0,0) node[below] {$v_0$} circle (.05); \filldraw          [color=hexcolor0x4f91d3!95!hexcolor0x93ccea] (1,1) node[above] {$v_1$} circle (.05); \filldraw          [color=hexcolor0x4f91d3!95!hexcolor0x93ccea] (2,0) node[below] {$v_2$} circle (.05); 
\filldraw          [color=blue!70] (3,1) node[below] {$v_2$} circle (.05); 

\filldraw          [color=blue!70] (4,0) node[below] {$v_3$} node[above] {$+$}          circle (.05); 
\filldraw [color=blue!70] (5,0) node[below] {$v_0$}          node[above] {$-$ } circle (.05); 
\node at (3.5,0.2){$\stackrel{\partial}{\longrightarrow}$};
\node at (2.2,-2)          {$[v_0,v_1]+[v_1,v_2]+[v_2,v_3]\stackrel{\partial}{\longrightarrow}          [v_3]-[v_0]$};         
\end{tikzpicture}   %
\end{minipage}}

\label{fig:homo1}

\protect\caption{Dim 1 Boundary Operator}
\end{figure}

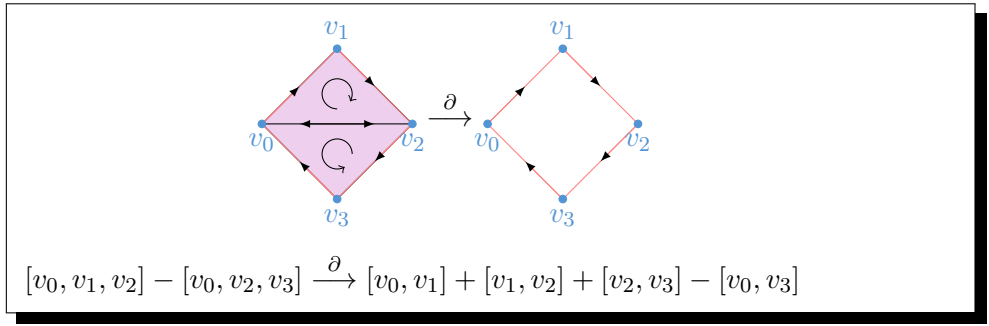
\begin{figure}
\shadowbox{\begin{minipage}[t]{1\columnwidth}%
\begin{tikzpicture}            
\filldraw[color=Plum!50, draw=black] (0,0) -- (1,1) -- (2,0)-- (1,-1)--           cycle; 
\draw[draw=Red!50] (0,0) -- (1,1) -- (2,0)-- (1,-1)-- cycle;
\draw[draw=Red!50] (3,0) -- (4,1) -- (5,0)-- (4,-1)-- cycle;            
\draw[->] (1,.2) arc (270:-30:.2); 
\draw[->] (1.2,-.4) arc (0:300:.2); 
\draw[->, >=latex,draw=Red!50] (0,0) --            (.5,.5);  
\draw[->, >=latex,draw=Red!50] (1,1) -- (1.5,.5); 
\draw[->, >=latex]            (2,0) -- (0.5,0);
\draw[->, >=latex]            (0,0) -- (1.5,0); 
\draw[->, >=latex,draw=Red!50]            (2,0) -- (1.5,-0.5);
\draw[->, >=latex,draw=Red!50]            (1,-1) -- (0.5,-0.5); 
\draw[->, >=latex,draw=Red!50] (3,0) -- (3.5,.5);            
\draw[->, >=latex,draw=Red!50] (4,1) -- (4.5,.5); 
\draw[->, >=latex,draw=Red!50] (5,0) --            (4.5,-0.5); 
\draw[->, >=latex,draw=Red!50] (4,-1) --            (3.5,-0.5); 
\filldraw [color=hexcolor0x4f91d3!95!hexcolor0x93ccea] (0,0) node[below] {$v_0$}            circle (.05); \filldraw [color=hexcolor0x4f91d3!95!hexcolor0x93ccea] (1,1) node[above] {$v_1$}            circle (.05); \filldraw [color=hexcolor0x4f91d3!95!hexcolor0x93ccea] (2,0) node[below] {$v_2$}            circle (.05);
\filldraw [color=hexcolor0x4f91d3!95!hexcolor0x93ccea] (1,-1) node[below] {$v_3$}            circle (.05);

\filldraw [color=hexcolor0x4f91d3!95!hexcolor0x93ccea] (3,0) node[below] {$v_0$}            circle (.05); \filldraw [color=hexcolor0x4f91d3!95!hexcolor0x93ccea] (4,1) node[above] {$v_1$}            circle (.05); \filldraw [color=hexcolor0x4f91d3!95!hexcolor0x93ccea] (5,0) node[below] {$v_2$}            circle (.05);
\filldraw [color=hexcolor0x4f91d3!95!hexcolor0x93ccea] (4,-1) node[below] {$v_3$}            circle (.05);

\node at (2,-2)            {$[v_0,v_1,v_2]-[v_0,v_2,v_3]\stackrel{\partial}{\longrightarrow}            [v_0,v_1]+[v_1,v_2]+[v_2,v_3]-[v_0,v_3]$}; 
\node at (2.5,0.2){$\stackrel{\partial}{\longrightarrow}$};
\end{tikzpicture}%
\end{minipage}}

\label{fig:homo2}\protect\caption{Dim II Boundary Operator}
\end{figure}

\begin{figure}
\shadowbox{\begin{minipage}[t]{1\columnwidth}%
\begin{tikzpicture}       
\fill [color=blue!30] (-1.7,0)--(1,-0.5)--(0,1.7); 
\draw [color=red!80] (-1.7,0)--(1,-0.5)--(0,1.7)--(-1.7,0); 
\node [below] at (-1.7,0) {$v_0$}; 
\node [below] at (1,-0.5) {$v_1$}; 
\node [above] at (0,1.7) {$v_2$};  
\fill [color=red!80] (-1.7,0) circle (2.5pt);
\fill [color=red!80] (1,-0.5) circle (2.5pt); 
\fill [color=red!80] (0,1.7) circle (2.5pt); 

\node [above] at (-0.45,-1.7) {$[ v_0,v_1,v_2]$}; 
\node [above] at (1.7,-1.7) {$\xrightarrow{\partial_2}$}; 

\draw [->][color=red!80] (2,-0.0)->(4.5,-0.5);
\draw [->][color=red!80] (4.5,-0.5)->(3.5,1.5); 
\draw [->][color=red!80] (3.5,1.5)->(2,0); 
\node [below] at (2,0) {$v_0$};  
\node [below] at (4.5,-0.5) {$v_1$};  
\node [above] at (3.5,1.5) {$v_2$};  
\fill [color=red!80] (2,0) circle (2.5pt); 
\fill [color=red!80] (4.5,-0.5) circle (2.5pt);
\fill [color=red!80] (3.5,1.5) circle (2.5pt); 

\node [above] at (3.25,-1.7) {$[ v_0,v_1]+[v_1,v_2]$}; 
\node [above] at (3.25,-2.1) {$+[v_2,v_0]$};
\node [above] at (5,-1.7) {$\xrightarrow{\partial_1}$}; 

\fill [color=red!80] (5.5,0) circle (2.5pt); 
\fill [color=red!80] (8,-0.5) circle (2.5pt); 
\fill [color=red!80] (7,1.7) circle (2.5pt); 
\node [below] at (5.5,0) {$v_0$};  
\node [below] at (8,-0.5) {$v_1$};  
\node [above] at (7,1.7) {$v_2$};  

\node [above] at (6.75,-1.7) {$v_1-v_0+v_2-v_1$};  
\node [above] at (6.75,-2.1) {$+v_0-v_2=0$};  
\end{tikzpicture} %
\end{minipage}}

\protect\caption{$\partial\partial=0$gives us a chain complex.}
\end{figure}
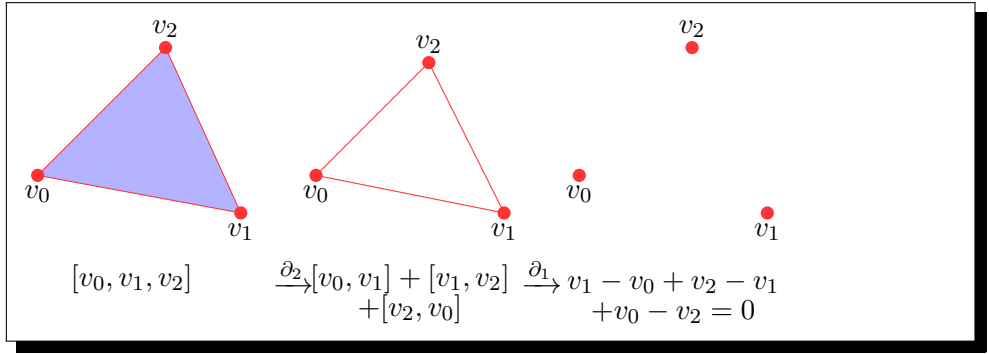

This map gives rise to a chain complex: a sequence of vector spaces
and linear transformations: 
\begin{eqnarray*}
0\stackrel{}{\rightarrow}C_{n}\stackrel{\partial_{n}}{\rightarrow}C_{n-1}\stackrel{\partial_{n-1}}{\longrightarrow}...\stackrel{\partial_{q+2}}{\longrightarrow}C_{q+1}(W)\stackrel{\partial_{q+1}}{\longrightarrow}C_{q}(W)\stackrel{\partial_{q}}{\longrightarrow}...\stackrel{\partial_{2}}{\rightarrow}C_{1}(W)\stackrel{\partial_{1}}{\rightarrow}C_{0}(W)\rightarrow0.
\end{eqnarray*}

It can easily be proved that that for any integer $q$, 
\[
\partial_{q}\circ\partial_{q+1}=0.
\]

In general, a chain complex $C_{\star}=\left\{ C_{q},d\right\} $
is precisely this : a sequence of abelian groups $\left(C_{q}\right)$
connected by an operator $d_{q}:C_{q}\to C_{q-1}$ that satisfies
$d\circ d=0$.

If one defines 
\[
Z_{q}=\ker\partial_{q}\text{ and }B_{q}=\mathrm{im}\,\partial_{q+1},
\]
 then it follows that $B_{q}\subset Z_{q}$. Elements of $Z_{q}=\mathrm{ker}\partial_{q}$
are called cycles, and elements of $B_{q}=\mathrm{im}\partial_{q+1}$
are called boundaries. Likewise, $Z_{q}=\mathrm{ker}\partial_{q}$
is called the $q-$th Cycle Group and $B_{q}=\mathrm{im}\partial_{q+1}$
is called the $q-$th Boundary Group. Then the homology group $H_{q}$
measures the equivalence class of cycles by quotient-ing out the boundaries
i.e. this construction measures how far the sequence is from being
exact.

The $q$-dimensional homology of $W$, denoted $H_{q}(W)$ is the
quotient vector space, 
\begin{eqnarray*}
H_{q}(W)=\frac{Z_{q}(W)}{B_{q}(W)}\cdotp
\end{eqnarray*}
 and the $q$-th Betti number of $W$ is its dimension: 
\[
\beta_{q}=\dim H_{q}=\dim Z_{q}-\dim B_{q}
\]

\section{Morse Homology\label{sec:Morse-Homology}}

Let $F$ be a Discrete Morse function defined on simplicial complex
$W$. Let $C_{q}(W,\mathbb{\,Z})$ denote the space of $q$-simplicial
chains, and $\mathcal{M}_{q}$ which is a subset of $C_{q}(W,\mathbb{\,Z})$
denote the span of the critical $q$-simplices. Let $\mathcal{M_{\star}}$
denote the space of Morse chains. Let $c_{q}$ denote the number of
critical $q$-simplices. Then we have, $\mathcal{M}_{q}\cong\mathbb{Z}^{c_{q}}$.

\begin{theorem}[Forman~\cite{Fo98a}]\label{thm:morsehom}
There exist boundary maps $\widehat{\partial_{q}}\,:\,\mathcal{M}_{q} \rightarrow\mathcal{M}_{q-1}$, for each $q$, which satisfy  \[ \widehat{\partial}_{q}\circ\widehat{\partial}_{q+1}=0. \]  and such that the resulting differential complex  \begin{eqnarray*} 0\stackrel{}{\longrightarrow}\mathcal{M}_{n}\stackrel{\widehat{\partial}_{n}}{\longrightarrow}\mathcal{M}_{n-1}\stackrel{\widehat{\partial}_{n-1}}{\longrightarrow}\dots\stackrel{\widehat{\partial}_{q+2}}{\longrightarrow}\mathcal{M}_{q+1}\stackrel{\widehat{\partial}_{q+1}}{\longrightarrow}\mathcal{M}_{q}\stackrel{\widehat{\partial}_{q}}{\longrightarrow}\dots\stackrel{\widehat{\partial}_{2}}{\longrightarrow}\mathcal{M}_{1}\stackrel{\widehat{\partial}_{1}}{\longrightarrow}\mathcal{M}_{0}\longrightarrow0 \end{eqnarray*} calculates the homology of $W$. i.e. if we go with the natural definition, \begin{eqnarray*} H_{q}(\mathcal{M},\widehat{\partial})=\frac{\mathrm{ker}\widehat{\partial}_{q}}{\mathrm{im}\widehat{\partial}_{q+1}} \end{eqnarray*} \\
Then for each $q$, we have $H_{q}(\mathcal{M},\widehat{\partial})=H_{q}(W,\mathbb{Z})$. 
\end{theorem}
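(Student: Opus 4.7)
The plan is to take $\widehat{\partial}_q$ to be exactly the gradient-path operator of \autoref{thm:bdryOp}, namely
\[
\widehat{\partial}_q \beta \;\defeq\; \sum_{\alpha^{(q-1)} \textrm{ critical}} P_{\alpha\beta}\,\alpha,
\qquad P_{\alpha\beta} \;=\; \sum_{\gamma \in \Gamma(\beta,\alpha)} \Theta(\gamma),
\]
extended $\mathbb{Z}$-linearly to all of $\mathcal{M}_q$. Finiteness of each $P_{\alpha\beta}$ is immediate from the Morse-matching (acyclicity) property of $\mathcal{V}$ on the finite complex $W$: every gradient path respects the topological order imposed on cells by the DAG $\overline{\mathcal{H}_W}$, so $\Gamma(\beta,\alpha)$ is always a finite set. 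The inductive computation carried out by \textbf{calcBdryOp()} and shown correct in \autoref{thm:bdryOpCorr} provides a constructive handle on the same operator.

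To verify $\widehat{\partial}_q \circ \widehat{\partial}_{q+1} = 0$ I would invoke the \emph{algebraic Morse flow}. Define a $\mathbb{Z}$-linear operator $H:C_\star(W,\mathbb{Z}) \to C_{\star+1}(W,\mathbb{Z})$ by $H(\alpha) = -\langle \partial \beta,\alpha\rangle\,\beta$ whenever $\langle \alpha,\beta\rangle\in\mathcal{V}$, and $H=0$ on every critical generator and every cell that is matched upward. The associated flow map $\Phi \defeq \mathrm{id}_{C_\star} - (\partial H + H\partial)$ is a chain map because $\partial^2 = 0$. Acyclicity of $\mathcal{V}$ combined with finiteness of $W$ forces $\Phi^{N}=\Phi^{N+1}$ for some $N$ bounded by the longest chain in the DAG, so the idempotent $\pi\defeq\Phi^N$ is a $\mathbb{Z}$-linear projection onto the submodule $\mathcal{M}_\star$ spanned by critical cells. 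Writing $\iota:\mathcal{M}_\star\hookrightarrow C_\star$ for inclusion, I would then identify $\widehat{\partial} = \pi\circ\partial\circ\iota$ by expanding the telescoping sum defining $\pi$ on a critical generator: the surviving monomials are in bijection with $\mathcal{V}$-paths $\gamma$, and the accumulated sign along each path is exactly $\Theta(\gamma)$. Because $\pi$ and $\iota$ are chain maps satisfying $\pi\iota=\mathrm{id}_{\mathcal{M}_\star}$, the equality $\widehat{\partial}^2 = \pi\partial(\iota\pi)\partial\iota$ collapses via $\partial^2=0$ modulo an $H$-exact correction that vanishes on $\mathcal{M}_\star$.

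To conclude that $(\mathcal{M}_\star,\widehat{\partial})$ computes $H_\star(W,\mathbb{Z})$, I would exhibit $\iota$ and $\pi$ as mutually inverse chain homotopy equivalences. We already have $\pi\iota=\mathrm{id}_{\mathcal{M}_\star}$, and in the other direction the telescoping identity $\mathrm{id}_{C_\star}-\iota\pi = \partial H^{\sharp}+H^{\sharp}\partial$ holds with the explicit homotopy $H^{\sharp}\defeq\sum_{k=0}^{N-1}H\,\Phi^k$, which is a finite sum by the same acyclicity argument. Consequently $\iota$ and $\pi$ induce mutually inverse isomorphisms on homology, yielding $H_q(\mathcal{M},\widehat{\partial})\cong H_q(W,\mathbb{Z})$ for every $q$. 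The main obstacle I anticipate is the sign bookkeeping: matching the product of incidence numbers $\langle\partial\tau_i,\sigma_i\rangle\langle\partial\tau_i,\sigma_{i+1}\rangle$ that accumulates when $-H\partial$ is composed along a $\mathcal{V}$-path $\sigma_0\prec\tau_0\succ\sigma_1\prec\tau_1\succ\cdots$ with the $\pm 1$ multiplicity $\Theta(\gamma)$ prescribed by Forman's statement, and checking that the two sign conventions agree uniformly. A secondary subtle point is proving finite stabilisation of $\Phi$; this is precisely where the Morse-matching property of $\mathcal{V}$ is indispensable, since it supplies the well-founded partial order on cells that bounds the iteration length $N$.
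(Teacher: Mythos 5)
First, a point of comparison: the paper itself contains no proof of \autoref{thm:morsehom} --- it is imported as background from Forman \cite{Fo98a}, with the appendix explicitly deferring proof details to that reference. So your proposal can only be measured against Forman's own argument, which is in fact exactly what you are reconstructing: the stabilized gradient flow $\Phi$, in its algebraic-Morse-theory packaging. The overall strategy (flow map, finite stabilization via acyclicity, telescoping homotopy, identification of the stabilized differential with the gradient-path sum of \autoref{thm:bdryOp}) is the right one. However, two steps fail as literally written, and the second is a genuine gap rather than bookkeeping.

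(i) Your signs make $\Phi$ non-stabilizing. With $H(\alpha)=-\langle\partial\beta,\alpha\rangle\,\beta$ for $\langle\alpha,\beta\rangle\in\mathcal{V}$ and $\Phi=\mathrm{id}-(\partial H+H\partial)$, take $\alpha$ matched upward to $\beta$: the term $-\partial H(\alpha)=+\langle\partial\beta,\alpha\rangle\,\partial\beta$ has $\alpha$-coefficient $\langle\partial\beta,\alpha\rangle^{2}=+1$, and since $\alpha$ lies in at most one pair of $\mathcal{V}$, no other term contributes an $\alpha$-component; hence $\Phi(\alpha)=2\alpha+\cdots$, the iterates blow up, and your $N$ does not exist. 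You need Forman's convention $\Phi=\mathrm{id}+\partial H+H\partial$ (equivalently, flip the sign inside $H$), which makes the $\alpha$-terms cancel. (ii) More substantively, $\Phi^{N}$ is \emph{not} a projection onto the critical span $\mathcal{M}_{\star}$: its image is the submodule $C^{\Phi}_{\star}$ of $\Phi$-invariant chains, and these are not supported on critical cells --- already for a critical $\sigma$ one has $\Phi(\sigma)=\sigma+H\partial\sigma\neq\sigma$ whenever a face of $\sigma$ is matched upward into some other cell. Consequently $\pi\iota=\mathrm{id}_{\mathcal{M}_{\star}}$ is false with your definitions, and with it your verification of $\widehat{\partial}^{2}=0$ and the claimed homotopy equivalence collapse. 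The standard repair, which is Forman's actual route, is to interpose the critical-coefficient projection $p:C_{\star}\rightarrow\mathcal{M}_{\star}$ and prove the lemma that $p$ restricts to an isomorphism $C^{\Phi}_{\star}\cong\mathcal{M}_{\star}$ with inverse $\sigma\mapsto\Phi^{N}\sigma$; the injectivity half of this lemma needs its own argument (it is a second place, beyond stabilization, where acyclicity of the matching enters). One then sets $f\defeq p\circ\Phi^{N}$ and $g\defeq\Phi^{N}\circ\iota$, so that $fg=\mathrm{id}_{\mathcal{M}_{\star}}$, while $gf=\Phi^{N}=\mathrm{id}-(\partial H^{\sharp}+H^{\sharp}\partial)$ with your telescoping $H^{\sharp}$, which is fine since $\Phi$ commutes with $\partial$. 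Defining $\widehat{\partial}\defeq f\circ\partial\circ g$, the remainder of your outline --- the expansion into $\mathcal{V}$-paths with multiplicities $\Theta(\gamma)$ matching \autoref{thm:bdryOp}, the vanishing $\widehat{\partial}^{2}=0$ via $\partial^{2}=0$ plus the homotopy identity, and the induced isomorphism on homology --- then goes through as you describe.
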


\begin{theorem} [Boundary Operator Computation. Forman~\cite{Fo98a}]\label{thm:boundop} Consider an oriented simplicial complex. Then for any critical (p+1)-simplex $\beta$ set: \\ \\
$\partial \beta = \sum\limits_{critical \: \alpha(p)}\: P_{\alpha\beta}\:\alpha $ \\ \\
$ P_{\alpha \beta} = \sum\limits_{\gamma \in \Gamma(\beta,\alpha)} N(\gamma) $ \\ \\
where $\Gamma(\beta,\alpha)$ is the set of discrete gradient paths which go from a face in  $\eth\,\beta$ to $\alpha$. The multiplicity $N(\gamma)$ of any gradient path $\gamma$ is equal to $\pm1$ depending on whether given $\gamma$ the orientation on $\beta$ induces the chosen orientation on $\alpha$ or the opposite orientation. With the boundary operator above, the complex computes the homology of complex K.  
\end{theorem}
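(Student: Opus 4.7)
The plan is to assemble the three lemmas \autoref{lem:revc2nobd}, \autoref{lem:revc2bd}, and \autoref{lem:revc0} into the desired statement about the vector field produced by \textbf{kingFlow()}, and then to appeal to the Weak Morse Inequalities of \autoref{thm:weakmorse} to pin down the remaining Morse number in dimension~$1$. The crucial conceptual point to verify up front is that every operation performed by \textbf{kingFlow()} (through its subroutines \textbf{fixBdry()}, \textbf{processComplex()}, and \textbf{kingRev()}) is literally a sequence of critical cell cancellations applied to the input DGVF $\mathscr{V}_1$; no cell is re-paired from scratch and no gradient pair is introduced other than those that arise by reversing the (unique, by manifold hypothesis) gradient path connecting two critical cells. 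Once this is observed, the final DGVF $\mathscr{V}_2$ produced by the algorithm is, by construction, reachable from $\mathscr{V}_1$ solely via critical cell cancellations, so it will suffice to show that $\mathscr{V}_2$ is optimal, i.e.\ $c_i=\beta_i$ for every $i$.

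First I would handle dimension~$2$. I would split into the two cases of \autoref{note:semigraph}. If $\mathfrak{M}$ is closed, I would invoke \autoref{lem:revc2nobd} directly to conclude that after the loop of Lines~\ref{lst:line:callfirstking}--\ref{lst:line:callfirstkingend} of \textbf{processComplex()} we have $c_2=\beta_2=1$, and that the residual $1$-complex $\mathscr{M}$ (the expansion frame of the sole surviving critical $2$-cell) is connected and covers every vertex of $\mathfrak{M}$. If $\mathfrak{M}$ has boundary, I would first remark that \textbf{fixBdry()} cancels every boundary critical $1$-cell against its unique incident $2$-cell via (at most) a single reversal of a gradient path internal to the coboundary face, and then invoke \autoref{lem:revc2bd} on each connected component $\mathcal{M}_i$ of the semigraph $\mathcal{G}_s(\mathfrak{M})$ to conclude $c_2=\beta_2=0$ together with connectedness of the combined residual $1$-complex $\mathscr{M}$ across components (this last piece of the argument uses the component hypergraph $\mathcal{H}_c(\mathfrak{M})$ exactly as in \autoref{lem:singleframe}).

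Next I would dispose of dimension~$0$. Since $\mathscr{M}$ is a connected $1$-complex on which \textbf{kingFlow()} performs ear decomposition followed by its own cancellation sweep, \autoref{lem:revc0} applies verbatim and yields $c_0=1=\beta_0$ (working component-wise if the input manifold happens to be disconnected, per the footnote accompanying that lemma). At this stage $c_0=\beta_0$ and $c_2=\beta_2$ have been achieved purely through critical cell cancellations, and the only dimension not yet accounted for is $c_1$. Here I would apply part~(B) of \autoref{thm:weakmorse} (the Euler-characteristic equality $c_0-c_1+c_2=\beta_0-\beta_1+\beta_2$) to solve for $c_1$ and conclude $c_1=\beta_1$. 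Together with part~(A) of the same theorem, which forces $c_1\geq\beta_1$, this also confirms that the resulting DGVF is optimal in the strong sense that it simultaneously minimizes each Morse number; equivalently, it is a perfect Morse function on $\mathfrak{M}$.

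The hardest part of the plan is the dimension-$2$ reduction step, and specifically the verification that every pair $(\gamma,\sigma)$ or $(\phi,\mathfrak{K})$ cancelled inside \textbf{kingRev()} is indeed a \emph{legal} cancellation, i.e.\ that there is a \emph{unique} gradient path from one critical cell in the pair to the other at the moment of cancellation. This is where the $2$-manifold hypothesis carries its full weight: as noted inside the proof of \autoref{lem:revc2nobd}, in a $2$-manifold the inverse of any gradient $\mathcal{V}$-path is well-defined (each $1$-cell has at most two incident $2$-cells), so once the algorithm has identified a critical $1$-cell on the boundary of the unstable manifold of $\mathfrak{K}$, the path from the neighbouring critical $2$-cell $\sigma$ to it is forced. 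I would therefore spend the bulk of the write-up making this uniqueness claim precise and tracking how, after each cancellation, the unstable manifold of $\mathfrak{K}$ grows monotonically by absorbing the unstable manifold of the cancelled critical $2$-cell, so the loop in Lines~\ref{lst:line:revgradst}--\ref{lst:line:revgrad} terminates after finitely many cancellations with $\mathfrak{K}$'s unstable manifold exhausting the connected component (closed case) or disappearing entirely via a final boundary cancellation (bounded case). Everything downstream of this --- the ear-decomposition argument and the Weak-Morse bookkeeping --- is routine.
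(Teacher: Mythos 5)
Your proposal proves the wrong statement. \autoref{thm:boundop} is Forman's theorem about the \emph{Morse boundary operator}: for an \emph{arbitrary} oriented simplicial complex equipped with an \emph{arbitrary} DGVF, the operator $\partial\beta=\sum_{\alpha}P_{\alpha\beta}\,\alpha$, with $P_{\alpha\beta}$ a signed count of gradient paths from $\eth\,\beta$ to $\alpha$, makes the critical-cell chain groups into a chain complex whose homology is $H_*(K)$. What you have outlined instead is, almost verbatim, the paper's proof of the pseudo-optimality theorem in \autoref{sec:Pseudo-optimality-of-Random} (``every DGVF on a 2-manifold is pseudo-optimal''): assemble \autoref{lem:revc2nobd}, \autoref{lem:revc2bd}, and \autoref{lem:revc0} to get $c_2=\beta_2$ and $c_0=\beta_0$ after cancellations, then use part (B) of \autoref{thm:weakmorse} to force $c_1=\beta_1$. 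Nothing in that plan addresses what \autoref{thm:boundop} actually asserts: you never show that the path-counting operator satisfies $\partial\circ\partial=0$, never relate the Morse chain complex of \autoref{thm:morsehom} to the simplicial chain complex, and your entire argument is confined to 2-manifolds and to a specific cancellation algorithm (\textbf{kingFlow()}), whereas the theorem holds for any complex and any gradient field, with no optimality hypothesis whatsoever.

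The gap is not patchable from within your framework. Equalities $c_i=\beta_i$ obtained by critical cell cancellations can never yield the conclusion ``the complex computes the homology of $K$'' with arbitrary coefficients: torsion in $H_*(K)$ is invisible to Betti-number bookkeeping and is detected precisely by the integer matrix entries $P_{\alpha\beta}$ that the theorem defines --- which is exactly why the paper needs this theorem as an input (it underlies \textbf{calcBdryOp()} and \autoref{thm:bdryOpCorr}) rather than a consequence of its cancellation machinery. A genuine proof requires Forman's gradient-flow apparatus: define the discrete-time flow $\Phi$ induced by $\mathcal{V}$, show the subcomplex of $\Phi$-invariant chains is isomorphic to the Morse chain complex, verify that under this isomorphism the simplicial boundary becomes the signed path-count operator, and conclude via a chain equivalence that the homologies agree. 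Note finally that the paper itself offers no proof of this statement --- it is quoted from Forman, and the surrounding text explicitly says ``For proof details please refer to Forman\cite{Fo98a}'' --- so the correct blind response here was either to reproduce Forman's flow argument or to flag the statement as an imported result, not to substitute the proof of a different theorem.
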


\begin{theorem}[Forman~\cite{Fo98a}] If $a<b$, are real numbers, such that [a,b] contains no critical values of Morse function $\mathcal{f}$, then the sublevel set $\mathcal{M}(b)$ is homotopy equivalent to the sublevel set $\mathcal{M}(a)$.  
\end{theorem}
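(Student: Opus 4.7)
The plan is to establish the homotopy equivalence by exhibiting a sequence of elementary collapses $\mathcal{M}(b) \searrow \mathcal{M}(a)$, following Forman's deformation strategy: on a non-critical interval, downward motion along the discrete gradient is realized combinatorially by collapsing free-face/coface pairs. Since an elementary collapse is a simple homotopy equivalence, a finite chain of them yields the desired homotopy equivalence of sublevel sets.

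First, I would define the sublevel complexes as $\mathcal{M}(c) := \bigcup \{ \overline{\sigma} : \mathcal{F}(\sigma) \leq c \}$ and verify they are genuine CW subcomplexes of $\mathcal{M}$. The only delicate point is that a cell $\tau$ with $\mathcal{F}(\tau) > c$ may still appear in $\mathcal{M}(c)$ as a face of some $\sigma$ with $\mathcal{F}(\sigma) \leq c$; however the discrete Morse axioms force this to happen only when $\langle \sigma, \tau \rangle$ is the unique gradient pair containing $\sigma$, so the anomaly is bookkeepable and does not interfere with the CW structure.

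Next, I would argue by induction on the number of gradient pairs whose higher cell has Morse value in $(a,b]$. The base case is vacuous. For the inductive step, among all such pairs $\langle \tau^m, \sigma^{m+1} \rangle$, choose one for which $\mathcal{F}(\sigma)$ is maximal. Since $[a,b]$ contains no critical values, every cell with Morse value in the range is non-critical and hence lies in exactly one gradient pair, so this procedure exhaustively accounts for the cells to be removed. The key verification is that $\tau$ is a \emph{free face} of $\sigma$ in the current complex: by the discrete Morse definition the unique coface of $\tau$ with value $\leq \mathcal{F}(\tau)$ is $\sigma$, and any other coface $\rho \succ \tau$ satisfies $\mathcal{F}(\rho) > \mathcal{F}(\tau) \geq \mathcal{F}(\sigma)$, so by maximality such a $\rho$ cannot be present as a pair-top in the current subcomplex. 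The elementary collapse $\mathcal{M}(b) \searrow \mathcal{M}(b) \setminus \{\tau,\sigma\}$ is therefore legitimate, strictly reduces the pair count, and the induction closes.

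The main obstacle I anticipate is handling gradient pairs that ``straddle'' the interval, where $\sigma$ has Morse value in $(a,b]$ but its partner lies on the other side, so that the pair is only partially contained in $\mathcal{M}(b) \setminus \mathcal{M}(a)$. I expect this to be resolved by processing pairs in order of decreasing $\mathcal{F}(\sigma)$ and checking that, at the moment a pair is collapsed, its partner is either already removed (if lower) or safely inside $\mathcal{M}(b)$ (if higher, by the CW subcomplex property noted above). Once the collapse order is fixed, freeness reduces to the Morse axioms, and the remainder is routine.
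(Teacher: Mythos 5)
Your overall strategy is sound, but note first that the paper itself contains no proof of this statement: it is quoted as Forman's theorem, and the text explicitly defers to \cite{Fo98a} for details. Your plan --- realize the non-critical interval as a sequence of elementary collapses $\mathcal{M}(b)\searrow\mathcal{M}(a)$ of gradient pairs --- is exactly Forman's original argument (his Theorem that the level subcomplex $K(b)$ collapses to $K(a)$), so there is nothing to compare against within this paper; the proposal must be judged on its own. Judged so, it has one genuine gap, located precisely at the freeness verification, which is the crux of the whole proof. When you collapse the pair $\langle\tau^{m},\sigma^{m+1}\rangle$ with $\mathcal{F}(\sigma)$ maximal, you must show $\sigma$ is the \emph{only} coface of $\tau$ present in the current complex. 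You exclude another coface $\rho\succ\tau$ only insofar as it might be a \emph{pair-top}, via $\mathcal{F}(\rho)>\mathcal{F}(\tau)\geq\mathcal{F}(\sigma)$ and maximality. But by the very anomaly you flagged in your first paragraph, a cell can be present without being a pair-top: (i) $\rho$ could lie in $\mathcal{M}(a)$ as a face of a cell of value $\leq a$, or (ii) $\rho$ could be the \emph{bottom} (exceptional face) of a not-yet-processed pair $\langle\rho,\zeta^{m+2}\rangle$ with $\mathcal{F}(\zeta)\leq\mathcal{F}(\rho)$ --- and nothing in your maximality argument, as written, bounds $\mathcal{F}(\zeta)$ below by $\mathcal{F}(\sigma)$, so that pair could still be waiting in your queue with $\rho$ still present, destroying freeness of $\tau$.

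Both cases can be excluded, but only by invoking the standard Forman lemmas that your write-up never states. Case (i): $\rho\in\mathcal{M}(a)$ would force $\tau\in\mathcal{M}(a)$ (subcomplexes are closed under faces), and one checks $\tau\notin\mathcal{M}(a)$ because its unique coface of value $\leq\mathcal{F}(\tau)$ is $\sigma$ with $\mathcal{F}(\sigma)>a$, while inclusion through a face relation of codimension $\geq 2$ is impossible since $\mathcal{F}$ strictly increases across such relations. Case (ii): run the diamond argument on $\tau\prec\rho\prec\zeta$. There is a second cell $\pi\neq\rho$ with $\tau\prec\pi\prec\zeta$; since $\rho$ is the unique exceptional face of $\zeta$ and $\sigma$ is the unique exceptional coface of $\tau$, either $\pi\neq\sigma$, giving $\mathcal{F}(\zeta)>\mathcal{F}(\pi)>\mathcal{F}(\tau)\geq\mathcal{F}(\sigma)$, or $\pi=\sigma$, giving $\mathcal{F}(\zeta)>\mathcal{F}(\sigma)$ directly. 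Either way the pair $\langle\rho,\zeta\rangle$ has top value exceeding $\mathcal{F}(\sigma)$, hence precedes $\langle\tau,\sigma\rangle$ in your ordering and $\rho$ is already gone. These facts (a cell cannot be exceptional in both directions; values increase strictly along codimension-$\geq 2$ face relations) are Forman's Lemma 2.5 and its corollaries; without them your maximality step does not close. Two smaller points: disambiguate ``higher cell'' to mean the higher-\emph{dimensional} cell $\sigma$, which in a gradient pair carries the \emph{lower} value; and your third paragraph's ``already removed (if lower)'' is vacuous since the partner $\tau$ always satisfies $\mathcal{F}(\tau)\geq\mathcal{F}(\sigma)$. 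Forman himself sidesteps the ordering subtlety by shrinking the interval so that it contains the value of at most one cell, though he still needs the same lemmas to verify freeness.
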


\begin{theorem}[Forman\cite{Fo98a}]\label{thm:Morsecell} Suppose $\sigma^{p}$ is a critical cell of index p with $f(\sigma)\in[a,b]$ and $f^{-1}(a,b)$ contains no other critical points. Then $M(b)$ is homotopy equivalent to 
\[M(a)\bigcup_{e_b^{p}} e^{p} \]
where $e^{p}$ denotes a p-dimensional cell with boundary $e_b^{p}$.  
\end{theorem}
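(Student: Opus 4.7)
The plan is to reduce this to the preceding theorem (the case with no critical values) by isolating $\sigma$ in a small function-value window around $f(\sigma)$. First I would pick $\epsilon>0$ small enough that the only critical value in $[f(\sigma)-\epsilon,f(\sigma)+\epsilon]$ is $f(\sigma)$ itself, and that $a<f(\sigma)-\epsilon<f(\sigma)+\epsilon<b$. By the preceding theorem (homotopy invariance across critical-value-free intervals), we obtain $M(a)\simeq M(f(\sigma)-\epsilon)$ and $M(f(\sigma)+\epsilon)\simeq M(b)$. Thus the claim reduces to showing
\[
M(f(\sigma)+\epsilon)\;\simeq\;M(f(\sigma)-\epsilon)\,\cup_{e_b^{p}}\,e^{p}.
\]

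Second, I would analyze which cells sit in $M(f(\sigma)+\epsilon)\setminus M(f(\sigma)-\epsilon)$. After a generic perturbation of $f$ (keeping each gradient pair $\langle\tau,\mu\rangle$ assigned a common shared value but separating all other cells), one may assume that $\sigma$ is the unique cell whose function value lies in $(f(\sigma)-\epsilon,f(\sigma)+\epsilon)$. The key combinatorial fact, coming from criticality of $\sigma$, is that $\mathcal{N}_1(\sigma)=\mathcal{N}_2(\sigma)=0$: every face of $\sigma$ has strictly smaller $f$-value, and $\sigma$ has no coface with smaller $f$-value. Consequently every face of $\sigma$ already lives in $M(f(\sigma)-\epsilon)$, so adding $\sigma$ constitutes a genuine cell-attachment along its topological boundary $e_b^{p}\subseteq M(f(\sigma)-\epsilon)$.

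Third, for the regular cells that might share $f$-value with $\sigma$ (if perturbation is not available), I would handle them by the same elementary-collapse argument that underlies the preceding theorem: gradient pairs $\langle\tau^{k},\mu^{k+1}\rangle$ can be processed in order of descending $f$-value, each one contributing a free face collapse $\mu\searrow\mu\setminus\{\tau,\mu\}$ (an elementary simple-homotopy move). The acyclicity of the Morse matching provides a valid order in which these collapses can be executed without interference from $\sigma$ or from one another, and each collapse preserves homotopy type. Peeling them off leaves precisely $M(f(\sigma)-\epsilon)\cup\sigma$.

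The main obstacle is the bookkeeping: one must verify that the attaching sphere of $\sigma$ is not altered by the collapses and that the regular pairs can indeed be totally ordered so that each, when it is its turn to be removed, still has a free face. This is where the acyclicity of the discrete gradient vector field is essential — it is exactly the condition that rules out the circular dependencies that would otherwise obstruct sequential collapse. Once the order is fixed by topologically sorting the Hasse digraph $\overline{\mathcal{H}_{\mathcal{K}}}$ restricted to the slab $f^{-1}[f(\sigma)-\epsilon,f(\sigma)+\epsilon]\setminus\{\sigma\}$, the rest is routine, and combining with Step 1 yields the claimed homotopy equivalence.
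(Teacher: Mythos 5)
The paper itself contains no proof of this statement: it is quoted verbatim from Forman in the appendix, and the text explicitly defers to Forman's paper for the details. So there is no in-paper argument to compare against; your sketch must be measured against the standard (Forman) proof, and in substance it reconstructs exactly that proof: isolate $\sigma$ in a thin slab using the preceding critical-value-free theorem, collapse away the regular pairs in the slab by elementary collapses, and use criticality of $\sigma$ (i.e.\ $\mathcal{N}_1(\sigma)=\mathcal{N}_2(\sigma)=0$) to conclude that every face of $\sigma$ already lies in the lower level subcomplex, so that $\sigma$ is attached along its boundary sphere. Two points you label as ``routine bookkeeping'' are actually where the theorem's content lives and should be made explicit. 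First, when you peel off a regular pair $\langle\tau,\mu\rangle$ you must verify that $\tau$ is a free face \emph{at that moment}; the Morse condition gives $f(\mu')>f(\tau)$ for every coface $\mu'\neq\mu$ of $\tau$, which makes descending-$f$ order work, but one must also rule out the possibility that such a $\mu'$ sits inside the level subcomplex as a face of some low-valued cell --- this is precisely what Forman's lemmas about level subcomplexes handle, and a bare appeal to acyclicity of the matching does not by itself dispose of it. Second, your Step 1 reduction quietly needs more than homotopy equivalence of sublevel sets: to transport the adjunction space $M(f(\sigma)-\epsilon)\cup_{e_b^{p}}e^{p}$ down to $M(a)\cup_{e_b^{p}}e^{p}$ you need the equivalence $M(f(\sigma)-\epsilon)\simeq M(a)$ to be realized by a deformation retraction (which the collapses do provide), so that the attaching map can be composed with the retraction. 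With those two gaps filled in, your outline is a correct and essentially faithful reconstruction of the cited proof.
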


In Thm.\ref{thm:Morsecell}, Forman's establishes the existence of
a cell complex (let us call it the\emph{ Morse Smale Complex}) that
is homotopy equivalent to the original complex. For proof details
please refer to Forman\cite{Fo98a}. The boundary operator in Thm.\ref{thm:boundop}
for the chain complex construction (referred to as the \emph{Morse
complex}) tells us how to use the new CW complex that is built in
construction described in proof of Thm.\ref{thm:Morsecell}. Note
that the Morse complex itself is a chain complex and not a CW complex.
But, the chain complex construction (referred to as the Morse complex)
tells us that both these constructions have identical homology.

\section{Extra Figures}

\begin{figure}
\shadowbox{\begin{minipage}[t]{1\columnwidth}%
\includegraphics[scale=0.3]{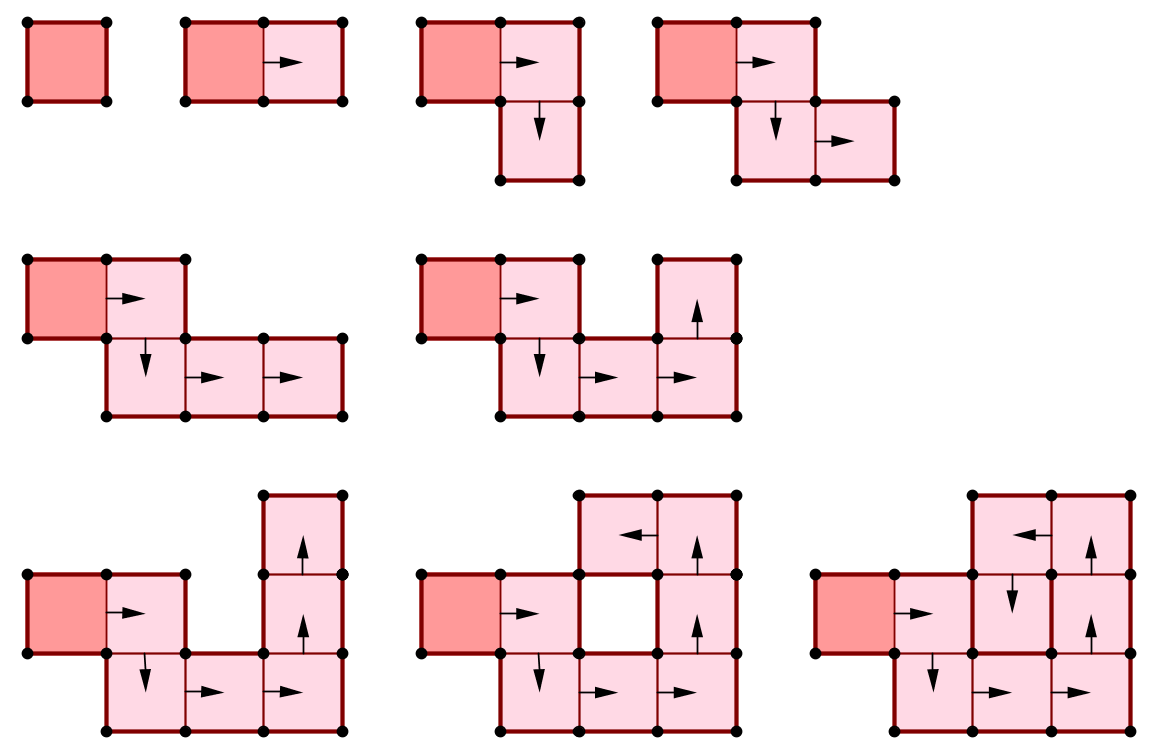}%
\end{minipage}}

\protect\caption{Frame Expansion: Example 1. Part I.}

\label{fig:simple1}
\end{figure}

\begin{figure}
\shadowbox{\begin{minipage}[t]{1\columnwidth}%
\includegraphics[scale=0.35]{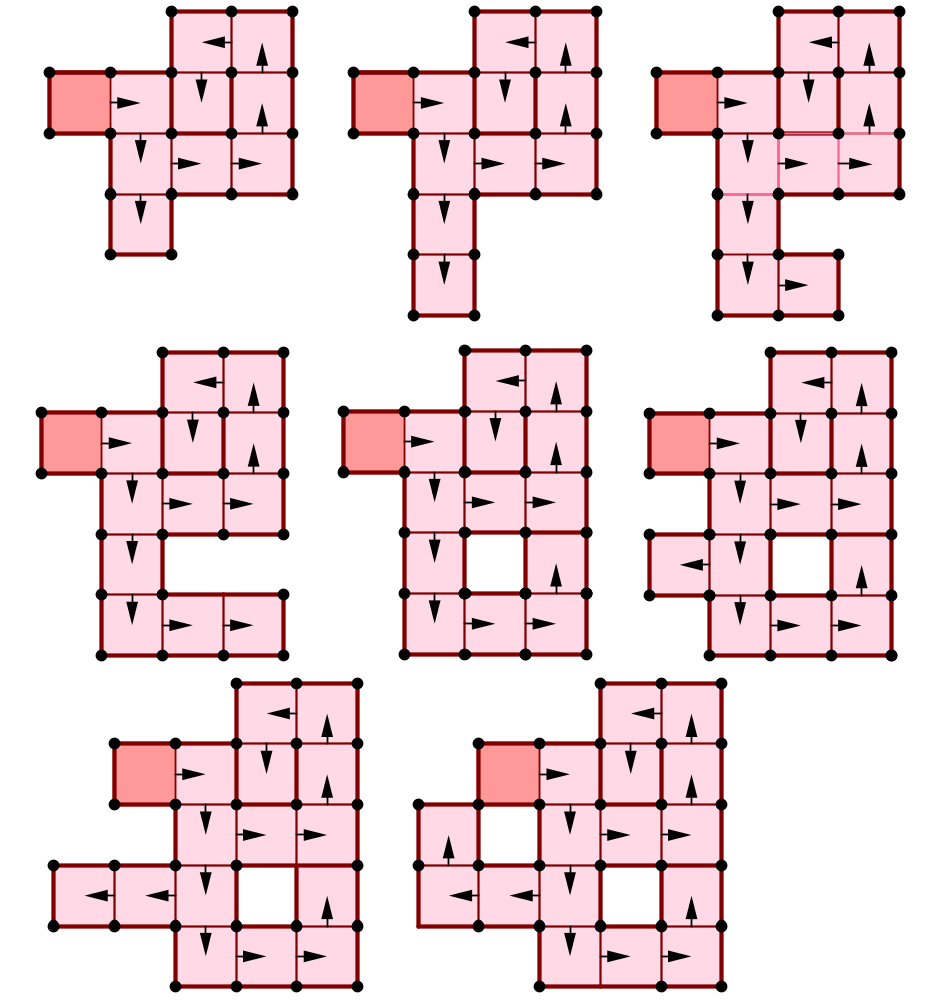}%
\end{minipage}}

\protect\caption{Frame Expansion: Example 1. Part II.}

\label{fig:simple2}
\end{figure}

\begin{figure}

\shadowbox{\begin{minipage}[t]{1\columnwidth}%
\includegraphics[scale=0.25]{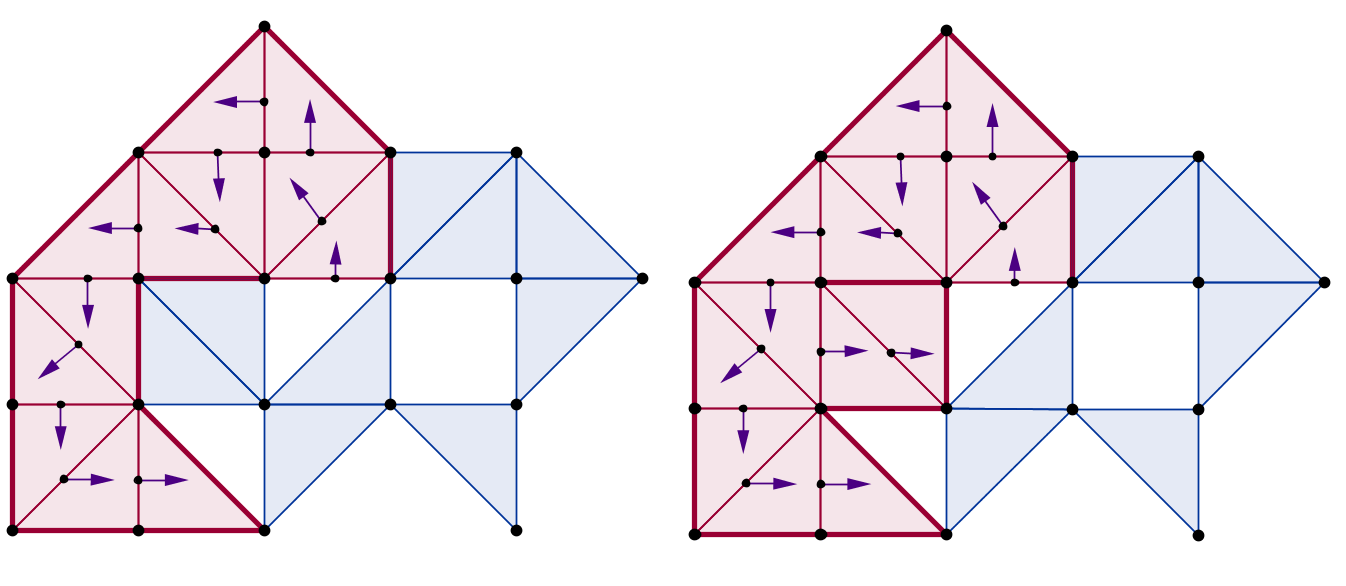}%
\end{minipage}}

\protect\caption{Frame Expansion: Example 2. Part I}

\label{fig:disjoint1}

\end{figure}

\begin{figure}
\shadowbox{\begin{minipage}[t]{1\columnwidth}%
\includegraphics[scale=0.25]{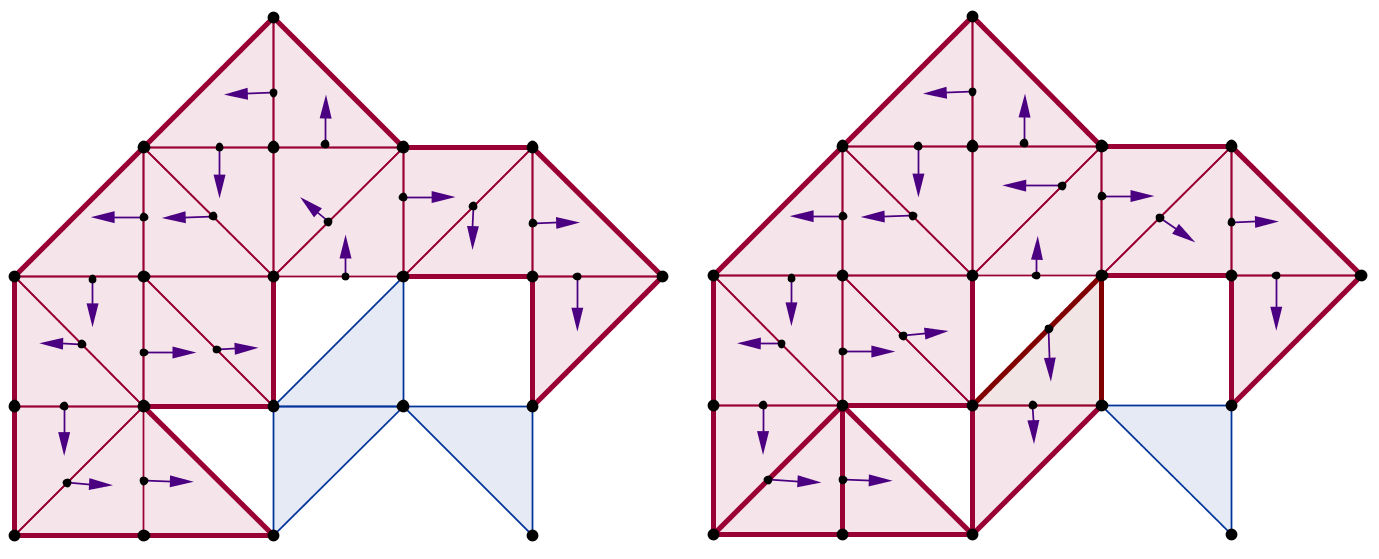}%
\end{minipage}}

\protect\caption{Frame Expansion: Example 2: Part II}

\label{fig:disjoint2}
\end{figure}

\begin{figure}

\shadowbox{\begin{minipage}[t]{1\columnwidth}%
\includegraphics[scale=0.25]{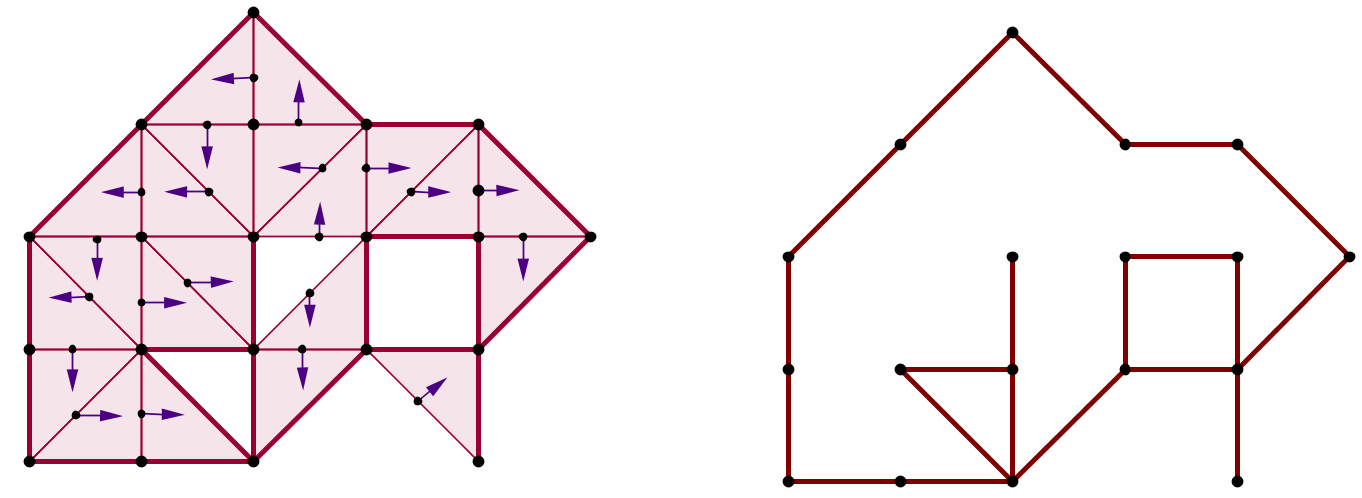}

\label{fig:disjoint3}%
\end{minipage}}

\protect\caption{Frame Expansion: Example 2: Part III}

\end{figure}

\begin{figure}
\includegraphics[scale=0.37]{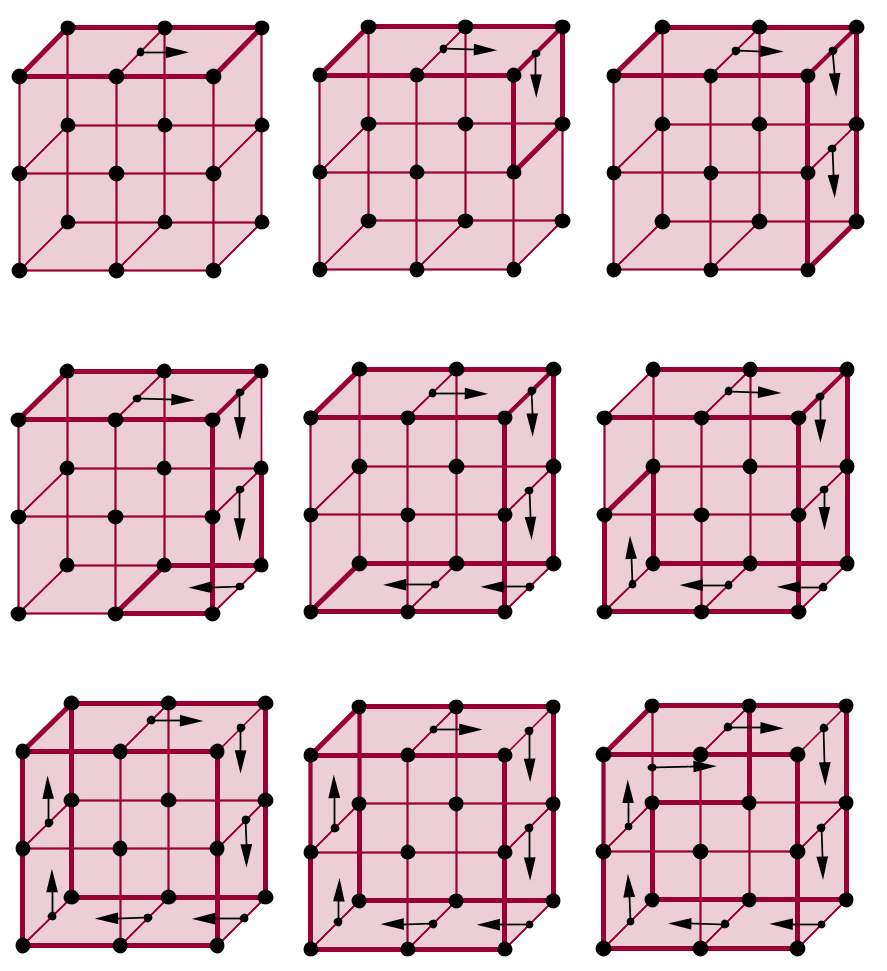}

\protect\caption{Frame Expansion: Example 3: Part 1 }

\label{fig:hollowcube1}
\end{figure}

\begin{figure}
\includegraphics[scale=0.37]{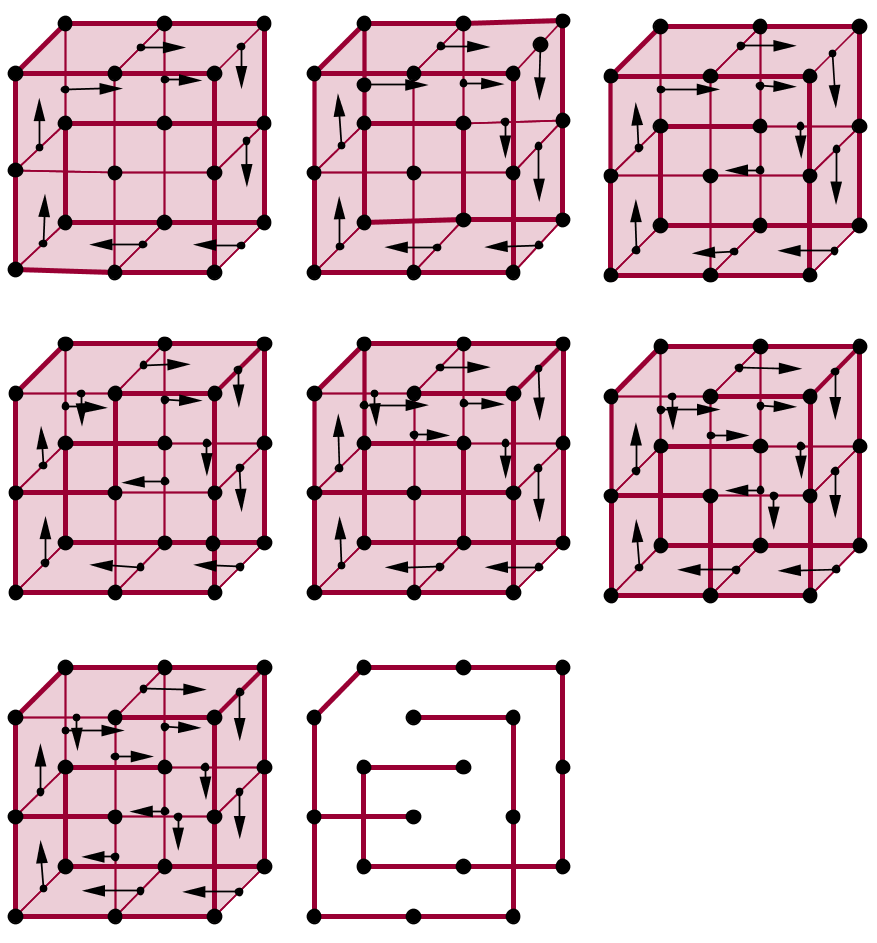}

\protect\caption{Frame Expansion: Exampe 3: Part 2}

\label{fig:hollowcube2}
\end{figure}

\section{Detailed Pseudocode}

\begin{algorithm}
\begin{algorithmic}[1]   

\Procedure{calcHomology}{$\mathcal{M},\mathcal{A}$;}
\LState{$\mathscr{V}\colonequals\tbf{mainFrame}(\mathcal{M})$;}
\LState{$\triangle_c\colonequals\tbf{calcbdryOp}(\mathcal{M},\mathcal{H},\mathscr{V})$;}
\LState{$H(\mathcal{M},\mathcal{A})\colonequals\tbf{SmithNormalForm}(\triangle_c,\mathcal{A})$;}
\EndProcedure

\Statex

\Procedure{calcBdryOp}{$\mathcal{M},\mathcal{H},\mathscr{V}$}
\LState{$\tbf{topologicalSort}(\mathcal{H},\mathscr{V},\mathcal{L},\textsf{'ASCENDING')}$;}
\ForAll{$1 \leq  i \leq \abs{\mathcal{L}}$; $\sigma\colonequals\mathcal{L}[i]$}  	\label{lst:line:loopBS} 	 
	\If{$\dim\sigma=0$ \LAND $\sigma\LargerCdot\tsf{pair}=\tsf{NIL}$}
		\LState{$\triangle \sigma = \emptyset $;}
	\ElsIf{$\sigma\prec\beta$ \LAND $\sigma\LargerCdot\tsf{pair}=\beta$}
		\LState{$\triangle \sigma = <\partial \beta, \sigma> \times \triangle \beta $;}
	\Else
		\LState{Let $\tau_i \prec \sigma$ be the set of regular cells incident on $\sigma$ s.t. $ \langle\tau_i,\sigma \rangle\notin \mathscr{V}$;}
		\LState{Let $\alpha_i \prec \sigma$ be the set of critical cells incident on $\sigma$;}
		\LState{$\triangle \sigma = \sum \triangle \tau_i \times <\partial \sigma, \tau_i>  +  \sum \bm{\alpha_i} \times <\partial \sigma, \alpha_i> $;}
	\EndIf
	\If{$\sigma\LargerCdot\tsf{pair}=\tsf{NIL}$ \LAND $\sigma\LargerCdot\tsf{revPair}=\tsf{NIL}$}
		\LState{$\triangle_c \sigma\colonequals \triangle\sigma$;}
	\EndIf
\EndFor			 		\label{lst:line:loopBE} 	
\EndProcedure 
\end{algorithmic}

\protect\caption{$\textbf{Homology}()$ }
\end{algorithm}

\begin{algorithm}
\begin{algorithmic}[1]   

\Procedure{findCoBdry}{$\mathcal{M}^d,\mathcal{B}^d$}
	\LState{Go through list $\mathcal{M}^{d}$. If $\mathcal{M}^{d}[i]$ has only one face, add $\mathcal{M}^{d}[i]$ to $\mathcal{B}$}
	\LState{$\textbf{return }\mathcal{B}$;}
\EndProcedure  

\Statex

\Procedure{addPairToVectorField}{$\tau, \vartheta,\mathscr{V}, \mathcal{M},\mathcal{B}^d,d$}
\If{$\tau \LargerCdot \tsf{revPair} = \tsf{NIL}$}
	\LState{$\tbf{delete}(\tau,\mathcal{M}^{d}); \tbf{delete}(\vartheta,\mathcal{M}^{d-1})$;}
	\LState \algorithmicif\ {$\tau\in\mathcal{B}$} \algorithmicthen\ {$\tbf{delete}(\tau,\mathcal{B})$;} \algorithmicend\ \algorithmicif
	\LState{$\tbf{nQ} (\mathcal{Q},\tau ) $;\quad$\vartheta\LargerCdot$ pair $\colonequals\tau$;\quad $\tau\LargerCdot$ revPair $\colonequals\vartheta$;} 
	\LState{$\mathscr{V}\colonequals\mathscr{V}+ \langle\vartheta,\tau \rangle$}
\EndIf
\EndProcedure 

\Statex

\Procedure{frameFlow}{$\mathcal{M},d,\mathscr{V}$}

	\If{$(\varpi=\tbf{dQ}(\mathcal{B}^d))$ = NIL}	
		\LState{$\varpi\colonequals\tbf{dQ}(\mathcal{M}^d)$;}
	\EndIf
	\Repeat
	\LState{fF $\colonequals$ 'T';}
		\Repeat
			\LState{$\mathcal{F}\colonequals\tbf{faces}(\varpi)- \varpi\LargerCdot\tsf{revPair}$; $\upsilon\colonequals\textbf{bdry}(\mathcal{F})$;} 
			\If{fF='T' \LAND $\upsilon \neq$ NIL}
				\LState{\textbf{addPairToVectorField}($\varpi,\upsilon,\mathscr{V},\mathcal{M}, \mathcal{B}^d,d$); }
			\Else
				\LState{$\tbf{delete}(\varpi,\mathcal{M}^d)$;}
			\EndIf 
			\LState{fF$\colonequals$'F';}
			\ForAll{$1 \leq  i \leq \abs{\mathcal{F}}$;}  		 			 		
				\LState{$\vartheta\colonequals\mathcal{F}[i]$;  $\mu\colonequals\textbf{cofaces}(\vartheta)-\varpi$; } 
				\LState{$\tbf{addPairToVectorField}(\mu, \vartheta, \mathscr{V},\mathcal{M}, \mathcal{B}^d,d$);}
			\EndFor  
		\Until {$(\varpi= \tbf{dQ} (\mathcal{Q}))\neq \tsf{NIL}$} 
	\Until {$(\varpi=\tbf{dQ}(\mathcal{B}^d))\neq\tsf{NIL}$} 
\EndProcedure

\Statex

\Procedure{mainFrame}{$\mathcal{M},\mathscr{V}$}
	\LState{$\mathcal{B}^2$ \colonequals \tbf{findCoBdry}($\mathcal{M}^2$);}
	\LState{\tbf{frameFlow}($\mathcal{M}$,2,$\mathcal{B}^2$,$\mathscr{V}$);}
	\LState{$\left\{\mathcal{M}^1[1:\tsf{numEars}],\tsf{numEars}\right\}\colonequals\tbf{earDecompose}(\mathcal{M}$);}
	\ForAll{$1 \leq  i \leq \tsf{numEars}$}
		\LState{$\mathcal{B}^1[i]  \colonequals \tbf{findCoBdry}(\mathcal{M}^1[i]$);}
		\LState{\tbf{frameFlow}($\mathcal{M}^1[i]$,1,$\mathcal{B}^1[i]$,$\mathscr{V}$);}
	\EndFor
	\LState{$\tbf{return }\mathscr{V}$.};
\EndProcedure
\end{algorithmic}

\protect\caption{$\textsf{Algorithm }\textbf{FrameFlow}()$ }
\end{algorithm}

\end{document}